\def\ps@headings{%
\def\@oddhead{\mbox{}\scriptsize\rightmark \hfil \thepage}%
\def\@evenhead{\scriptsize\thepage \hfil \leftmark\mbox{}}%
\def\@oddfoot{}%
\def\@evenfoot{}}
\newtheorem{theorem}{Theorem}
\newtheorem{lemma}{Lemma}
\def\squareforqed{\hbox{\rlap{$\sqcap$}$\sqcup$}}
\def\qed{\ifmmode\squareforqed\else{\unskip\nobreak\hfil
\penalty50\hskip1em\null\nobreak\hfil\squareforqed
\parfillskip=0pt\finalhyphendemerits=0\endgraf}\fi}
\begin{document}

\title{Virtual Machine Trading in a Federation of Clouds: Individual Profit and Social Welfare Maximization\vspace{-4mm}}

\author{\IEEEauthorblockN{
Hongxing Li\IEEEauthorrefmark{1},
Chuan Wu\IEEEauthorrefmark{1}, Zongpeng Li\IEEEauthorrefmark{2} and
Francis C.M.
Lau\IEEEauthorrefmark{1}}\\ \vspace{-3.5mm}
\IEEEauthorblockA{\IEEEauthorrefmark{1}Department of Computer
Science, The University of Hong
Kong, Hong Kong, 
Email: \{hxli, cwu, fcmlau\}@cs.hku.hk}
\IEEEauthorblockA{\IEEEauthorrefmark{2}Department of
Computer Science, University of Calgary, Canada, 
Email: zongpeng@ucalgary.ca}\vspace{-10mm}}

\maketitle

\begin{abstract}
By sharing resources among different cloud providers, the paradigm of federated clouds exploits temporal availability of resources and geographical diversity of operational costs for efficient job service. While interoperability issues across different cloud platforms in a cloud federation have been extensively studied, fundamental questions on cloud economics remain: When and how should a cloud trade resources ({\em e.g.}, virtual machines) with others, such that its net profit is maximized over the long run, while a close-to-optimal social welfare in the entire federation can also be guaranteed? To answer this question, a number of important, inter-related decisions, including job scheduling, server provisioning and resource pricing, should be dynamically and jointly made, while the long-term profit optimality is pursued. In this work, we design efficient algorithms for inter-cloud virtual machine (VM) trading and scheduling in a cloud federation. For VM transactions among clouds, we design a double-auction based mechanism that is strategyproof, individual rational, ex-post budget balanced, and efficient to execute over time. Closely combined with the auction mechanism is a dynamic VM trading and scheduling algorithm, which carefully decides the true valuations of VMs in the auction, optimally schedules stochastic job arrivals with different SLAs onto the VMs, and judiciously turns on and off servers based on the current electricity prices. Through rigorous analysis, we show that each individual cloud, by carrying out the dynamic algorithm in the online double auction, can achieve a time-averaged profit arbitrarily close to the offline optimum. Asymptotic optimality in social welfare is also achieved under homogeneous cloud settings. We carry out simulations to verify the effectiveness of our algorithms, and examine the achievable social welfare under heterogeneous cloud settings, as driven by the real-world Google cluster usage traces.

\end{abstract}

\section{Introduction}

The emerging federated cloud paradigm advocates sharing of disparate cloud services (in separate data centers) from different cloud providers, and interconnects them based on common standards and policies to provide a universal environment for cloud computing. Such a cloud federation exploits temporal and spatial availability of resources ({\em e.g.}, virtual machines) and diversity of operational costs ({\em e.g.}, electricity prices): when a cloud experiences a burst of incoming jobs, it may resort to VMs from other clouds with idle resources; when the electricity price for running servers and VMs is high at one cloud data center, the cloud can schedule jobs onto other cloud data centers with lower electricity charge at the moment. In this way, the aggregate job processing capacity of the cloud federation can be potentially higher than the aggregation of capacities of separate clouds operating alone, and the overall profit can be larger.


To implement the federated cloud paradigm, significant interest has arisen on developing interfaces and standards to enable cloud interoperability and job portability across different cloud platforms (\cite{IBM09}\cite{gcc09}). However, fundamental problems on cloud economics remain to be investigated. A cloud in the real world is selfish, and aims to maximize its own profit, {\em i.e.}, its income from handling jobs and leasing VMs to other clouds subtracting its operational costs and expenses in VM rental from other clouds. Only if its profit can be maximized and in any case not lower than when operating alone, can a cloud be incentivized to join a federation. This calls for an efficient mechanism to carry out resource trading and scheduling among federated clouds, to achieve profit maximization for individual clouds, as well as to perform well in social welfare. A number of inter-related, practical decisions are involved: (1) {\em VM pricing}: what mechanism should be advocated for VM sale and purchase among the clouds, and at what prices? (2) {\em Job scheduling}: with time-varying job arrivals at each cloud, targeting different resources and SLA requirements, should a cloud serve the jobs right away or later, to exploit time-varying electricity prices? And should a cloud serve a job using its own resources or others' resources? (3) {\em Server provisioning}: is it more beneficial for a cloud to keep many of its servers running to serve jobs of its own and from others, or to turn some of them down to save electricity? These decisions should be efficiently and optimally made in an online fashion, while guaranteeing long-term optimality of individual cloud's profits, as well as the social welfare.

In this paper, we 
 design efficient algorithms for inter-cloud resource trading and scheduling, in a federation consisting of disparate cloud data centers. A double-auction based mechanism is proposed for the sell and purchase of available VMs across cloud boundaries over time. The auction is strategy-proof, individual rational, ex-post budget balanced, and computationally efficient (polynomial time complexity). Closely combined with the auction mechanism is 
 an efficient, dynamic VM trading and scheduling algorithm, which carefully decides the true valuations of VMs to participate in the auction, optimally schedules randomly-arriving jobs with different resource requirements ({\em e.g.}, number of VMs) and SLAs (\emph{e.g.}, maximum job scheduling delay) onto different data centers, and judiciously turns on and off servers in the clouds based on the current electricity prices. The dynamic algorithm serves as an efficient strategy for each cloud to employ in the online double auction, and is proven to maximize individual profit for each cloud, over the long run of the system.
  The contributions of this work are summarized below.

{\em First}, among the first in the literature, we address selfishness of individual clouds in a cloud federation, and design efficient mechanisms to maximize the net profit of each cloud. This profit is not only guaranteed to be larger than that when the cloud operates alone, but also maximized over the long run, in the presence of time-varying job arrivals and electricity prices at the cloud.

{\em Second}, we novelly combine a truthful double auction mechanism with stochastic Lyapunov optimization techniques, and design an online VM trading and scheduling algorithm, for a cloud to optimally price the VMs and to judiciously schedule the VM and server usages. Each cloud values different VMs based on the back pressure in job queue scheduling, and bids them in the auction for effective VM acquisition.



{\em Third}, we demonstrate that by applying the dynamic algorithm in the online double auction, each cloud can achieve a time-averaged profit arbitrarily close to its offline optimum (obtained if the cloud knows complete information on incoming jobs and electricity prices in the entire time span).  We also prove that the social welfare, \emph{i.e.}, the time-averaged overall profit in the federation, can be asymptotically maximized when the number of clouds grows, under homogenous cloud settings. Trace-driven simulations examine the achievable social welfare with our dynamic algorithm under heterogenous settings. 

In the rest of the paper, we discuss related literature in Sec.~\ref{sec:relatedwork}, present the system model in Sec.~\ref{sec:problemmodel}, and introduce the detailed resource trading and scheduling mechanisms in Sec.~\ref{sec:algorithm}. A double auction mechanism is proposed in Sec.~\ref{sec:auction_mechanism}, and a benchmark social-welfare maximization algorithm is discussed in Sec.~\ref{sec:welfare}. Theoretical analysis and simulation studies are presented in Sec.~\ref{sec:analysis} and Sec.~\ref{sec:simulation}, respectively. Sec.~\ref{sec:conclusion} concludes the paper.

\section{Related work}\label{sec:relatedwork}

\subsection{Optimal Scheduling in Cloud Systems}

Most existing literature (\hspace{-0.1mm}\cite{rao-infocom10,ren-icdcs12,noms10,yao-infocom12} and references therein) on resource scheduling in cloud systems focus on a single cloud that operates alone. A common theme is to minimize the operational costs (mainly consisting of electricity bills) in one or multiple data centers of the cloud, while providing certain performance guarantee of job scheduling, \emph{e.g.}, in terms of average job completion times \cite{rao-infocom10,ren-icdcs12,noms10,yao-infocom12}.

Urgaonkar {\em et al.}~\cite{noms10} propose an algorithm with joint job admission control, routing and resource allocation for power consumption reduction in a virtualized data center. Rao {\em et al.}~\cite{rao-infocom10} advocate minimization of electricity expenses 
 by exploiting the temporal and spatial diversities of electricity prices. Yao {\em et al.}~\cite{yao-infocom12} minimize the power cost with a two-time scale algorithm for delay tolerant workloads. Ren {\em et al.}~\cite{ren-icdcs12} also aim to minimize the energy cost while addressing the fairness in resource allocation. All the above works provide {\em average} delay guarantees for job services.


Different from these studies on a stand-alone cloud with centralized control, this work investigates profit maximization for individual selfish clouds in a federation, where each participant makes its own decisions. Besides, bounded scheduling delay for each job is guaranteed even in worst cases, contrasting the existing solutions that ensure average delays.

\subsection{Resource Trading Mechanisms}

A rich body of literature is devoted to resource trading in grid computing \cite{Buyya-hpc00} and wireless spectrum leasing \cite{zhou-infocom09}\cite{xu-infocom10}. Various mechanisms have been studied, \emph{e.g.}, bargaining \cite{Buyya-hpc00}, fixed or dynamic pricing based on a contract or the supply-demand ratio \cite{amazon}, and auctions \cite{zhou-infocom09}\cite{xu-infocom10}.

A bargaining mechanism \cite{Buyya-hpc00} typically has an unacceptable complexity by negotiating between each pair of traders. Fixed pricing, \emph{e.g.}, Amazon EC2 on-demand instances, has been shown to be inefficient in social welfare maximization in cases of system dynamics \cite{Marian-CCGrid10}. Dynamic pricing, such as Amazon EC2 spot instances, could be inefficient too, where the participants can quote the resources untruthfully \cite{Marian-CCGrid12}.

Auction stands out as a promising mechanism, on which there have been abundant solutions (\hspace{-0.1mm}\cite{zhou-infocom09,xu-infocom10} and references therein) with truthful design and polynomial complexity. Although some recent works \cite{Marian-CCGrid10,Marian-CCGrid12,gomes-exchange12} aim to design an auction mechanism with individual rationality (non-negative profit gain) for trading in federated clouds, they do not explicitly address individual profit maximization over the long run, nor other desirable properties such as truthfulness, ex-post budget balance, and social welfare maximization. Moreover, little literature on auctions provides methods to quantitatively calculate the true valuations in each bid, which are simply assumed as known. Our design addresses these issues.

%
%


\section{System Model and Auction Framework}\label{sec:problemmodel}

\subsection{Federation of Clouds}

We consider a federation of $F$ clouds, each located at a different geometric location and operates autonomously to gain profit by serving its customers' job requests, managing server provisioning and trading resources with other clouds.

\vspace{1mm}
\noindent \textbf{Service demands}: Each individual cloud $i\in [1, F]$ has a front-end proxy server, which accepts job requests from its customers. There are $S$ types of jobs serviced at each cloud, each specified by a three-tuple $<m_s, g_s, d_s>$. Here, $m_s\in [1,M]$ specifies the type of the required VM instances, where $M$ is the maximum number of VM types, and each type corresponds to a different set of configurations of CPU, storage and memory; $g_s$ is 
 the number of type-$m_s$ VMs that the job needs simultaneously (See Amazon EC2 API \cite{amazon}); and $d_s$ stands for the SLA (Service Level Agreement) of job type $s\in [1,S]$, evaluated by the maximal response delay for scheduling a job, \emph{i.e.}, the time-span from when the job arrives to when it starts to run on scheduled VMs. In a cloud in practice, it is common to buy servers of the same configuration and provision the same type of VMs on one machine \cite{linode}. Therefore, we suppose each cloud $i$ has $N_{i}^m$ homogenous servers to provision VMs of type $m\in [1, M]$, each of which can provide a maximum of $C_{i}^m$ VMs of this type; the total number of servers in cloud $i$ is $\sum_{m=1}^M N_{i}^m$.

The system runs in a time-slotted fashion. At the beginning of each time slot $t$, $r_i^s(t)\in [0, R_i^s]$ jobs arrive at cloud $i$, for each job type $s$. $R_i^s$ is an upper-bound on the number of type-$s$ jobs submitted to cloud $i$ in a time slot. The arrival of jobs is an ergodic process at each cloud. We suppose the arrival rate is given, and how a customer decides which cloud to use is orthogonal to this study. Let $p_i^{s}(t)\in [0, p_i^{s(max)}]$ be the given service charge to the customer by cloud $i$, for accepting a job of type $s$ in time slot $t$, which remains fixed within a time slot, but may vary across time slots.  Here, $p_i^{s(max)}$ is the max possible price for $p_i^s(t)$. Such a general charging model subsumes pricing schemes in practice: {\em e.g.}, time-independent $p_i^s(t)$ corresponds to the \emph{on-demand} VM charging scheme, while time-varying $p_i^s(t)$ can represent the \emph{spot instance} prices based on the current demand {\em vs}. supply \cite{amazon}.

\vspace{1mm}
\noindent \textbf{Job scheduling}: Each incoming job to cloud $i$ enters a FIFO queue of its type --- a cloud $i$ maintains a queue to buffer unscheduled jobs of each type $s$, with $Q_i^s(t)$ as its length in $t$. When the required VMs of a job are allocated, the job departs from its queue and starts to run on the VMs. A cloud may schedule its jobs on either its own VMs or VMs leased from other clouds, for the best economic benefits. Let $\mu_{ij}^{s}(t)$ be the number of type-$s$ jobs of cloud $i$ that are scheduled for processing in cloud $j$ at the beginning of time slot $t$. 

When a job's demanded maximum response time (the SLA) cannot be met, in cases of system overload, it is dropped. A penalty is enforced in this case, to compensate for the customer's loss. Let

\vspace{-5mm}{\small
\begin{align}
D_i^{s}(t)\in [0, D_i^{s(max)}]\label{eqn:drop-cons}
\end{align}}\vspace{-6mm}

\noindent be the number of type-$s$ jobs dropped by cloud $i$ in $t$, where $D_i^{s(max)}$ is the maximum value of $D_i^s(t)$. Let $\xi_i^s$ be the penalty to drop one such job, which is at least the maximum price charged to customers when accepting the jobs, {\em i.e.}, $\xi_i^s\geq p_i^{s(max)}$.

Hence, the number of unscheduled jobs buffered at each cloud $i\in [1, F]$ can be updated with the following queueing law:

\vspace{-6mm}{\small\begin{align}
Q_i^{s}(t+1) = & \max\{Q_i^{s}(t)-\sum_{j=1}^F \mu_{ij}^{s}(t)- D_i^{s}(t), 0\}\nonumber \\
                &+ r_i^{s}(t),~~~~\forall s\in [1, S].\label{eqn:queue1}
\end{align}}\vspace{-6mm}

\noindent Job scheduling should satisfy the following SLA constraint:

\vspace{-4mm}{\small \begin{align}
&\text{Each type-$s$ job in cloud $i$ is either scheduled or dropped (subject}\nonumber\\ &\text{to a penalty) before its maximum response delay $d_s$,}\ \forall s\in [1, S].\label{eqn:SLA}
\end{align}}\vspace{-5mm}


We apply the $\epsilon-$persistence queue technique \cite{neely-infocom11}, to create a virtual queue $Z_i^{s}$ associated with each job queue $Q_i^s$ ($\forall i\in [1, F]$):  

\vspace{-5mm}{\small\begin{align}
Z_i^{s}(t+1) =&  \max\{Z_i^{s}(t) + \mathbf{1}_{\{Q_i^{s}(t)> 0\}}\cdot [\epsilon_{s} - \sum_{j=1}^F\mu_{ij}^{s}(t)]- D_i^{s}(t)\nonumber\\
              &
- \mathbf{1}_{\{Q_i^{s}(t)=0\}}\cdot \sum_{j=1}^F \frac{C_{j}^{m_s}\cdot N_{j}^{m_s}}{g_s}, 0\},
\forall s\in [1, S].\label{eqn:queue2}
\end{align}}\vspace{-4mm}

\noindent Here, $\epsilon_s>0$ is a constant. $\mathbf{1}_{\{Q_i^{s}(t)> 0\}}$ and $\mathbf{1}_{\{Q_i^{s}(t)=0\}}$ are indicator functions such that

\vspace{-4mm}{\small
\begin{align*}
\mathbf{1}_{\{Q_i^{s}(t)> 0\}}=\begin{cases}1 & \text{if }Q_i^{s}(t)> 0\\ 0 & \text{Otherwise}\end{cases};\ \mathbf{1}_{\{Q_i^{s}(t)=0\}}= \begin{cases}1 & \text{if }Q_i^{s}(t)= 0\\ 0 & \text{Otherwise}\end{cases}.
\end{align*}}\vspace{-4mm}


\noindent 
Length of this virtual queue reflects the cumulated response delay of jobs from the respective job queue. Our algorithm seeks to bound the lengths of job queues and virtual queues, with properly set $\epsilon_s$, and hence the maximum response delay of jobs can be bounded, {\em i.e.}, constraint (\ref{eqn:SLA}) is satisfied.



\vspace{1mm}
\noindent \textbf{Server provisioning}: We consider electricity cost, for running and cooling the servers \cite{energy09}, as the main component of the operational cost in a cloud. Other costs, \emph{e.g.}, space rental and labour, remain relatively fixed for a long time, and are of less interest. Given that electricity prices vary at different locations and from time to time \cite{rao-infocom10}\cite{ferc}, we model the operational cost $\beta_{i}(t)$ in each cloud $i$ as a general ergodic process over time, varying across time slots between $\beta_i^{(min)}$ and $\beta_{i}^{(max)}$.

Each cloud strategically decides the number of active servers at each time, to optimize its profit. Let $n_i^m(t)$ be the number of active servers provisioning type-$m$ VMs at cloud $i$ in $t$. The available server capacities at each cloud $i\in [1, F]$ constrain the feasible job scheduling at time $t$:

\vspace{-4mm}{\small\begin{align}
&\sum_{j\in [1,F]}\sum_{s:m_s = m, s\in [1, S]}g_s \mu_{ji}^s(t)\leq C_i^m \cdot n_i^m(t),\ \forall m\in [1, M], \label{eqn:capacity1}\\
&n_i^m(t)\leq N_i^m,\ \forall m\in [1, M].\label{eqn:capacity2}
\end{align}}\vspace{-6mm}

\noindent (\ref{eqn:capacity1}) states that the overall demand for type-$m$ VMs in cloud $i$ from itself and other clouds should be no larger than the maximum number of available type-$m$ VMs on the active servers in cloud $i$. Here $g_s\mu_{ji}^s(t)$ is the total number of VMs needed by type-$s$ jobs scheduled from cloud $j$ to cloud $i$ in $t$. Motivated by practical job execution efficiency, we only consider scheduling a job to VMs from a single cloud, but not VMs across different clouds.  (\ref{eqn:capacity2}) ensures that the number of active servers is limited by the total number of on-premise servers of the corresponding VM configuration at each cloud.

\subsection{Inter-cloud VM Trading with Double Auction}\label{subsec:auction}

In an inter-cloud resource market, VMs constitute the items for trading. For each type of VMs, multiple clouds may have them on sale while multiple other clouds can request them. A double auction is a natural fit to implement efficient trading in this case, allowing both selling and buying clouds to actively participate in pricing, on behalf of their own benefits. In our dynamic system, a {\em multi-unit double auction} is carried out among the clouds at the beginning of each time slot, deciding the VM trades within that time slot.

\noindent \textbf{Buyers \& Sellers}: A cloud can be both a buyer and a seller. A buy-bid $<b_{i}^m(t), \gamma_{i}^m(t)>$ records the unit price and maximum quantity at which cloud $i$ is willing to buy VMs of type $m$, in $t$. Similarly, a sell-bid $<s_{i}^m(t), \eta_{i}^m(t)>$ records the unit price and maximum quantity at which cloud $i$ is willing to sell VMs of type $m$ in $t$.

Let $\tilde{b}_i^m(t)$ and $\tilde{s}_{i}^m(t)$ be cloud $i$'s true valuation of buying and selling a type-$m$ VM respectively (the max/min price it is willing to pay/accept). Similarly, let $\tilde{\gamma}_i^m(t)$ and $\tilde{\eta}_{i}^m(t)$ be cloud $i$'s true valuation of the quantity to buy and sell VMs of type $m$ respectively (the maximum volume of VMs it is willing to purchase/sell). A cloud $i$ may strategically manipulate the bid prices and volumes, in the hope of maximizing its profit. We show in Sec.~\ref{sec:analysis} that the double auction proposed in Sec.~\ref{sec:algorithm} is truthful, such that each bid price reveals the true valuation.

\noindent \textbf{Auctioneer}: We assume that there is a broker in the cloud federation, assuming the role of the auctioneer. After collecting all the buy and sell bids, the auctioneer executes a double auction to be detailed in Sec.~\ref{sec:auction_mechanism}, to decide the set of successful buy and sell bids, their clearing prices and the numbers of VMs to trade in each type. Let $\hat{b}_i^m(t)$ be the actual charge price for cloud $i$ to buy one type-$m$ VM, and $\hat{\gamma}_{i}^m(t)$ be the actual number of VMs purchased. Similarly, let $\hat{s}_{i}^m(t)$ be the actual income cloud $i$ receives for selling one type-$m$ VM, and $\hat{\eta}_{i}^m(t)$ be the actual number of  VMs sold. 

Let $\alpha_{ij}^m(t)$ be the number of type-$m$ VMs that cloud $i\in [1,F]$ purchases from cloud $j\in [1,F]$ in $t$, as decided by the auctioneer: 

\vspace{-4mm}{\small\begin{align}
\hat{\gamma}_{i}^m(t)& = \sum_{j\in [1,F], j\neq i}\alpha_{ij}^m(t),\ \forall m\in [1, M],\label{eqn:buy-cons}\\
\hat{\eta}_{i}^m(t)& = \sum_{j\in [1, F], j\neq i}\alpha_{ji}^m(t),\ \forall m\in [1, M]. \label{eqn:sell-cons}
\end{align}}\vspace{-4mm}

Since VMs are purchased for serving jobs, the job scheduling decisions $\mu_{ij}^s(t)$ at each cloud $i\in [1,F]$, are related to the number of VMs it purchases:

\vspace{-4mm}{\small\begin{align}
\sum_{s:s\in [1, S], m_s=m}g_s\cdot& \mu_{ij}^s(t)= \alpha_{ij}^m(t),\nonumber\\
& \forall m\in [1, M], \forall i,j\in [1, F], i\neq j.\label{eqn:capacity3}
\end{align}}\vspace{-5mm}

Three economic properties are desirable for the auctioneer's mechanism.
(i) \emph{Truthfulness}: Bidding true valuations is a dominant strategy, and consequently, both bidder strategies and auction design are simplified. (ii) \emph{Individual Rationality}: Each cloud obtains a non-negative profit by participating in the auction. (iii) \emph{Ex-post Budget Balance}: The auctioneer has a non-negative surplus, {\em i.e.}, the total payment from all winning buy-bids is no less than the total charge for all winning sell-bids in each time slot.

%
%
%

\begin{table}[!t]
\centering
\caption{Notation: input quantities and intermediate variables}\label{table:notation1}\vspace{-3mm}
\begin{tabular}{|p{0.6cm}|p{3cm}||p{0.6cm}|p{3cm}|}
\hline
  $F$ & \# of clouds &  $S$ & \# of service types \\\hline
  $M$ & \# of VM types & $m_s$  & VM type of service type $s$ \\\hline
  $d_s$ & Max.~response delay of service type $s$ & $g_s$  & \# of VMs required by service type $s$ \\\hline
\end{tabular}
\begin{tabular}{|p{0.90cm}|p{7.25cm}|}
    $r_i^s(t)$ & \# of type-$s$ jobs arrived at cloud $i$, slot $t$ \\\hline
    $R_i^s$ & Max.~\# of type-$s$ jobs arrived at cloud $i$ per slot \\\hline
    $p_i^{s}(t)$ & Service price for each job of type $s$ at cloud $i$, slot $t$    \\\hline
    $p_i^{s(max)}$ & Max.~service price for each type-$s$ job at cloud $i$ per slot\\\hline
    $\beta_{i}(t)$ & Cost for operating an active server at cloud $i$, slot $t$ \\\hline
    $\beta_{i}^{(min)}$ & Min.~cost for operating an active server at cloud $i$ per slot \\\hline
    $\beta_{i}^{(max)}$ & Max.~cost for operating an active server at cloud $i$ per slot \\\hline
    $\xi_i^s$ & Penalty for dropping a type-$s$ job at cloud $i$\\\hline
    $D_i^{s(max)}$ & Max.~\# of type-$s$ jobs cloud $i$ drops per slot \\\hline
    $C_{i}^m$ & Max.~\# of type-$m$ VMs an active server at cloud $i$ provisions \\\hline
    $N_{i}^m$ & Total \# of servers provisioning type-$m$ VMs at cloud $i$ \\\hline
    $Q_i^s(t)$ & Length of queue buffering type-$s$ jobs at cloud $i$, slot $t$ \\\hline
    $Z_i^s(t)$ & Length of virtual queue of type-$s$ jobs at cloud $i$, slot $t$ \\\hline
    $\epsilon_s$ & Constant positive parameter for $Z_i^s(t)$, $\forall i\in [1, F]$\\\hline
    $Q_i^{s(max)}$ & Maximum length of queue $Q_i^s(t)$ \\\hline
    $Z_i^{s(max)}$ & Maximum length of virtual queue $Z_i^s(t)$ \\\hline
    $V$ & User-defined constant positive parameter for dynamic algorithm\\\hline
\end{tabular}
\vspace{-6mm}
\end{table}

\begin{table}[!t]
\centering
\caption{Notation: decision variables at individual clouds}\label{table:notation2}\vspace{-3mm}
\begin{tabular}{|p{0.90cm}|p{7.25cm}|}
\hline
    $\mu_{ij}^s(t)$ & \# of type-$s$ jobs scheduled from cloud $i$ to cloud $j$, slot $t$\\\hline
    $n_{i}^m(t)$ & \# of active servers providing type-$m$ VMs at cloud $i$, slot $t$ \\\hline
    $D_i^s(t)$ & \# of dropped type-$s$ jobs at cloud $i$, slot $t$ \\\hline
    $\tilde{s}_{i}^m(t)$ & True value of selling one type-$m$ VM from cloud $i$, slot $t$    \\\hline
    $\tilde{\eta}_{i}^m(t)$ & True value of volume to sell type-$m$ VMs from cloud $i$, slot $t$  \\\hline
    $s_{i}^m(t)$ & Bid price for selling one type-$m$ VM from cloud $i$, slot $t$    \\\hline
    $\eta_{i}^m(t)$ & Max.~\# of type-$m$ VMs cloud $i$ can sell, slot $t$    \\\hline
    $\tilde{b}_{i}^m(t)$ & True value of buying one type-$m$ VM by cloud $i$, slot $t$    \\\hline
    $\tilde{\gamma}_{i}^m(t)$ & True value of volume to buy type-$m$ VMs by cloud $i$, slot $t$    \\\hline
    $b_{i}^m(t)$ & Bid price for buying one type-$m$ VM by cloud $i$, slot $t$    \\\hline
    $\gamma_{i}^m(t)$ & Max.~\# of type-$m$ VMs cloud $i$ can buy, slot $t$    \\\hline
\end{tabular}
 \vspace{-3mm}
\end{table}

\begin{table}[!t]
\centering
\caption{Notation: decision variables at the auctioneer}\label{table:notation3}\vspace{-3mm}
\begin{tabular}{|p{0.90cm}|p{7.25cm}|}
\hline
    $\hat{s}_{i}^m(t)$ & Actual price of selling one type-$m$ VM from cloud $i$, slot $t$    \\\hline
    $\hat{\eta}_{i}^m(t)$ & Actual \# of type-$m$ VMs sold from cloud $i$, slot $t$    \\\hline
    $\hat{b}_{i}^m(t)$ & Actual price of buying one type-$m$ VM by cloud $i$, slot $t$    \\\hline
    $\hat{\gamma}_{i}^m(t)$ & Actual \# of type-$m$ VMs bought by cloud $i$, slot $t$    \\\hline
    $\alpha_{ij}^m(t)$ & Actual \# of type-$m$ VMs sold from cloud $j$ to $i$, slot $t$ \\\hline
    $\theta_j^m(t)$ & The $j^{th}$ highest buy-bid price for type-$m$ VMs at auctioneer \\\hline
    $\vartheta_j^m(t)$ & The $j^{th}$ lowest sell-bid price for type-$m$ VMs at auctioneer \\\hline
    $L_j^m(t)$ & Max.~\# of type-$m$ VMs to sell, in sell-bid with $j^{th}$ lowest price at auctioneer in $t$\\\hline
\end{tabular}
\vspace{-6mm}
\end{table}

\subsection{Individual Selfishness}\label{subsec:selfishness}

Each cloud in the federation aims to maximize its time-averaged profit (revenue minus cost) over the long run of the system, while striking to fulfill the resource and SLA requirements of each job.

\vspace{1mm}
\noindent\textbf{Revenue}: A cloud has two sources of revenue: i) job service charges paid by its customers, and ii) the proceeds from VM sales. The time-averaged revenue of cloud $i\in [1,F]$ by undertaking different types of jobs from its customers is

\vspace{-4mm}{\small\begin{align}
\Phi_1^i &= \lim_{T\rightarrow \infty}\frac{1}{T}\sum_{t=0}^{T-1}\sum_{s\in [1, S]}\mathbb{E}\{p_i^s(t)\cdot r_i^s(t)\}.\label{eqn:revenue1}
\end{align}}\vspace{-4mm}

\noindent We assume the front-end charges, $p_i^s(t)$, from a cloud to its customers, are given. Hence, this part of the revenue is fixed in each time slot. The time-averaged income of cloud $i\in [1, F]$ from selling VMs to other clouds is:

\vspace{-4mm}{\small\begin{align}
\Phi_2^i &= \lim_{T\rightarrow \infty}\frac{1}{T}\sum_{t=0}^{T-1}\sum_{m\in [1, M]}\mathbb{E}\{\hat{s}_i^m(t)\cdot \hat{\eta}_i^m(t)\}.\label{eqn:revenue2}
\end{align}}\vspace{-4mm}

\noindent Cloud $i$ can control this income by adjusting its sell-bids, \emph{i.e.}, $s_i^m(t)$ and $\eta_i^m(t)$, $\forall m\in [1, M]$, at each time.

\vspace{1mm}
\noindent\textbf{Cost}: The cost of cloud $i$ consists of three parts: i) operational costs incurred for running its active servers, ii) the penalties for dropping jobs, and iii) the expenditure on buying VMs from other clouds. The time-averaged cost for operating servers at each cloud $i\in [1, F]$ is decided by the number of active servers in each time, \emph{i.e.},

\vspace{-4mm}{\small\begin{align}
\Psi_1^i &=  \lim_{T\rightarrow \infty}\frac{1}{T}\sum_{t=0}^{T-1}\mathbb{E}\{\beta_{i}(t)\cdot \sum_{m=1}^M n_{i}^m(t)\}.\label{eqn:cost1}
\end{align}}\vspace{-4mm}

The time-averaged penalty at each cloud $i\in [1, F]$ is determined by the number of dropped jobs over time, \emph{i.e.}, $D_i^s(t),\ \forall s\in [1, S]$, $t\in[0,T-1]$:

\vspace{-4mm}{\small\begin{align}
\Psi_2^i &= \lim_{T\rightarrow \infty}\frac{1}{T}\sum_{t=0}^{T-1}\sum_{s\in [1,S]}\mathbb{E}\{\xi_i^s\cdot D_i^{s}(t)\}.\label{eqn:cost2}
\end{align}}\vspace{-4mm}

The time-averaged expenditure for VM purchases is decided by the actual VM trading prices and numbers, as decided by the buy-bids ($b_i^m(t),\gamma_i^m(t))$ from cloud $i\in [1, F]$:

\vspace{-4mm}{\small\begin{align}
\Psi_3^i & = \lim_{T\rightarrow \infty}\frac{1}{T}\sum_{t=0}^{T-1}\mathbb{E}\{\sum_{m=1}^{M}\hat{b}_i^m(t)\cdot \hat{\gamma}_i^m(t)\}.\label{eqn:cost3}
\end{align}}\vspace{-5mm}

\vspace{1mm}
\noindent\textbf{Profit Maximization}: The profit maximization problem at cloud $i\in [1, F]$ can be formulated as follows:

\vspace{-5mm}{\small\begin{align}
\max&~~~~\Phi_1^i + \Phi_2^i - \Psi_1^i - \Psi_2^i - \Psi_3^i \label{eqn:profit-max}\\
s.t.&~~~~\text{Constraints (\ref{eqn:drop-cons})-(\ref{eqn:capacity3})}.\nonumber
\end{align}}\vspace{-5mm}



\vspace{-2mm}
\subsection{Social Welfare}\label{subsec:social}

Social welfare is the overall profit of the cloud federation: 

\vspace{-4mm}{\small\begin{align*}
\sum_{i\in [1, F]}(\Phi_1^i+\Phi_2^i-\Psi_1^i-\Psi_2^i-\Psi_3^i).
\end{align*}}\vspace{-4mm}

Since the income and expenditure due to VM trades among the clouds cancel each other, the formula above equals $\sum_{i\in [1, F]}(\Phi_1^i - \Psi_1^i - \Psi_2^i)$. The social welfare maximization problem is:

\vspace{-5mm}{\small\begin{align}
\max&~~~~\sum_{i\in [1, F]}(\Phi_1^i - \Psi_1^i - \Psi_2^i) \label{eqn:social-max}\\
s.t.&~~~~\text{Constraints (\ref{eqn:drop-cons})-(\ref{eqn:capacity2})},\ \forall i\in [1, F]\nonumber
\end{align}}\vspace{-6mm}


\noindent which globally optimizes server provisioning and job scheduling in the federation and maximally serves all the incoming jobs at the minimum cost, regardless of the specific inter-cloud VM trading mechanism.

When a double auction mechanism is truthful, individual rational and ex-post budget balancing, it is shown that efficiency in terms of social welfare maximization cannot be achieved concurrently \cite{myerson-1983}. We hence make a necessary compromise in social welfare in our auction design, {\em i.e.}, the sum of maximal individual profits derived by (\ref{eqn:profit-max}) will be smaller than the optimal social welfare from (\ref{eqn:social-max}). Nevertheless, we will show in Sec.~\ref{sec:analysis} and Sec.~\ref{sec:simulation} that our mechanisms still manages to achieve a satisfactory social welfare in the long run.

\vspace{1mm}
Tables \ref{table:notation1}, \ref{table:notation2} and \ref{table:notation3} summarize important notation in the paper, for ease of reference.

\section{Dynamic individual-profit maximization algorithm}\label{sec:algorithm}

We next present a dynamic algorithm for each cloud to trade VMs and scheduling jobs/servers, 
 which is in fact applicable under any truthful, individual-rational and ex-post budget balanced double auction mechanism. We will also tailor a double auction mechanism on the auctioneer in the next section. Fig.~\ref{fig:flow} illustrates the relation among these algorithm modules.

\vspace{-4mm}
\begin{figure}[H]
  \centering
  \includegraphics[width=0.6\columnwidth]{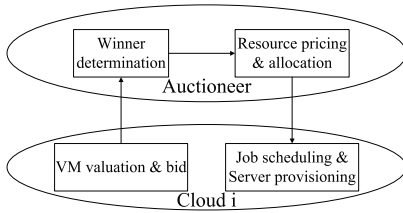}\\\vspace{-4mm}
  \caption{Key algorithm modules.}\label{fig:flow}\vspace{-4mm}
\end{figure}


The goal of the dynamic algorithm at each cloud $i$ is to maximize its time-averaged profit, {\em i.e.}, to solve optimization (\ref{eqn:profit-max}), by dynamically making decisions in each time slot. We apply the \emph{drift-plus-penalty} framework in Lyapunov optimization theory \cite{book2010}, and derive a one-shot optimization problem 
 to be solved by cloud $i$ in each time slot $t$ as follows. We will prove in Sec.~\ref{sec:analysis} that by optimally solving the one-shot optimization 
 at each cloud during each time slot, the dynamic algorithm can achieve a time-averaged individual profit arbitrarily close to its offline optimum (computed with complete knowledge in the entire time span), for each cloud.

\subsection{The One-shot Optimization Problem}

Define the set of queues at cloud $i$ in each time slot $t$ as

\vspace{-5mm}{\small
\begin{align*}
\Theta_i(t)= \{Q_i^s(t), Z_i^s(t)| s\in [1, S]\}.
\end{align*}}\vspace{-6mm}

Define the Lyapunov function as follows:

\vspace{-4mm}{\small
\begin{align*}
L(\Theta_i(t)) = \frac{1}{2}\sum_{s\in [1, S]}[(Q_i^s(t))^2 + (Z_i^s(t))^2].
\end{align*}}\vspace{-4mm}

Then the one-slot conditional Lyapunov drift \cite{book2010} is

\vspace{-4mm}{\small
\begin{align*}
\Delta(\Theta_i(t))= L(\Theta_i(t+1)) - L(\Theta_i(t)). 
\end{align*}}\vspace{-6mm}

Squaring the queuing laws (\ref{eqn:queue1}) and (\ref{eqn:queue2}), we can derive the following inequality (details can be found in Appendix \ref{appendix:drift}):

\vspace*{-5mm}{\small\begin{align}
&\Delta(\Theta_i(t))-V\cdot [\sum_{m\in [1, M]} [\hat{s}_i^m(t) \hat{\eta}_i^m(t) - \hat{b}_i^m(t) \hat{\gamma}_i^m(t)-\beta_i(t) n_i^m(t)]\notag \\ &+ \sum_{s\in [1, S]}[p_i^s(t)\cdot r_i^s(t)- D_i^s(t) \xi_i^s]] \notag\\
\leq&  B_i + \sum_{s\in [1, S]}[Q_i^s(t) r_i^s(t) + Z_i^s(t) \epsilon_s - V p_i^s(t)\cdot r_i^s(t)] \notag\\
&- \varphi_1^i(t)-\varphi_2^i(t)-\varphi_3^i(t),\label{eqn:drift-plus-penalty1}
\end{align}}\vspace*{-5mm}

\noindent where $V>0$ is a user-defined positive parameter for gauging the optimality of time-averaged profit, {\small $B_i=\frac{1}{2}\sum_{s\in [1, S]}[(\sum_{j=1}^F C_j^{m_s}N_j^{m_s}/g_s+D_i^{s(max)})^2 + (R_i^s)^2+(\epsilon_s)^2 + (D_i^{s(max)}+\sum_{j=1}^F C_j^{m_s}N_j^{m_s}/g_s)^2]$} is a constant,
and

\vspace{-4mm}
{\small \begin{align*}
&\varphi_1^i(t) = V \sum_{m\in [1, M]} [\hat{s}_i^m(t) \hat{\eta}_i^m(t) - \hat{b}_i^m(t) \hat{\gamma}_i^m(t)-\beta_i(t) n_i^m(t)],\\
&\varphi_2^i(t) = \sum_{s=\in [1, S]}\sum_{j\in [1, F]}\mu_{ij}^s(t) [Q_i^s(t) + Z_i^s(t)],\\
%
&\varphi_3^i(t) = \sum_{s\in [1, S]}D_i^s(t)[Q_i^s(t)+Z_i^s(t)-V\cdot \xi_i^s].
\end{align*}}\vspace{-4mm}

Based on the drift-plus-penalty framework \cite{book2010}, a dynamic algorithm can be derived for each cloud $i$, which observes the job and virtual queues ($\Theta_i(t)$), job arrival rates ($r_i^s(t),\ \forall s\in [1,S]$), the current cost for server operation ($\beta_i(t)$) in each time slot, and minimizes the RHS of the inequality (\ref{eqn:drift-plus-penalty1}), such that a lower bound for time-averaged profit of cloud $i$ 
is maximized. Note that $B_i + \sum_{s\in [1, S]}[Q_i^s(t) r_i^s(t) + Z_i^s(t) \epsilon_s - V p_i^s(t)\cdot r_i^s(t)]$ in the RHS of (\ref{eqn:drift-plus-penalty1})  is fixed in time slot $t$. Hence, to maximize a lower bound of the time-averaged profit for cloud $i$, the dynamic algorithm
should solve the one-shot optimization problem in each time slot $t$ as follows:

\vspace{-4mm}
{\small \begin{align}
\max&~~~~\varphi_1^i(t) + \varphi_2^i(t) + \varphi_3^i(t)\label{eqn:profit-oneslot} \\
s.t.&~~~~\text{Constraints (\ref{eqn:drop-cons}), (\ref{eqn:capacity1})-(\ref{eqn:capacity3}).}\nonumber
\end{align}}\vspace{-5mm}



The maximization problem in (\ref{eqn:profit-oneslot}) can be decoupled into two independent optimization problems:

\vspace{-4mm}{\small
\begin{align}
\max&~~\varphi_1^i(t)+\varphi_2^i(t)\label{eqn:profit-oneshot1}~~~~s.t.~~\text{Constraints (\ref{eqn:capacity1})-(\ref{eqn:capacity3})},
\end{align}}\vspace{-5mm}

\noindent which is related to optimal decisions on i) buy/sell bids for different types of VMs, and ii) scheduling of active servers and jobs to these servers; and

\vspace{-4mm}{\small
\begin{align}
\max&~~\varphi_3^i(t)\label{eqn:profit-oneshot2}~~~~s.t.~~\text{Constraint (\ref{eqn:drop-cons})},
\end{align}}\vspace{-5mm}

\noindent which is related to optimal decisions on iii) jobs to drop. In the following, we design algorithms to derive the optimal decisions based on problem (\ref{eqn:profit-oneshot1}) 
 and problem (\ref{eqn:profit-oneshot2}). 

\subsection{VM Valuation and Bid}

Optimization problem (\ref{eqn:profit-oneshot1}) is related to the actual charges that cloud $i$ pays for each type of VMs purchased, $\hat{b}_i^m(t)$ and $\hat{s}_i^m(t)$ ($\forall m\in [1, M]$), and the actual numbers of traded VMs, $\hat{\gamma}_i^m(t)$ and $\hat{\eta}_i^m(t)$ ($\forall m\in [1, M]$), from the double auction. These values are determined by the auctioneer according to buy-bids $({b}_i^m(t), {\gamma}_i^m(t))$ and sell-bids $({s}_i^m(t),{\eta}_i^m(t))$ submitted by all clouds, and its double auction mechanism. That is, each cloud $i$ first proposes its buy-bids and sell-bids to the auctioneer, and then receives the auction results, based on which the \emph{job scheduling and server provisioning} decisions are made. We first investigate how each cloud proposes its buy-bids and sell-bids, and then decide optimal job scheduling and server provisioning in Sec.~\ref{sec:scheduling}.

A truthful double auction is employed at the auctioneer, where sellers and buyers bid their {\em true values} of the prices and quantities, in order to maximize their individual utilities. (\ref{eqn:profit-oneshot1}) is the utility maximization problem for each cloud. If we can find true values of each cloud $i$, $\tilde{b}_i^m(t), \tilde{\gamma}_i^m(t)$, $\tilde{s}_i^m(t)$ and $\tilde{\eta}_i^m(t)$, and let the cloud bid using these values, the achieved utility in (\ref{eqn:profit-oneshot1}) is guaranteed to be the largest, as compared to bidding any other values.


We decide the true values of the bids for each cloud $i$, according to their definitions in double auctions \cite{zhou-infocom09}\cite{xu-infocom10}. The true value of the price to buy (sell) a type-$m$ VM, $\tilde{b}_i^m(t)$ ($\tilde{s}_i^m(t)$), is such a value that, if a VM is purchased (sold) at a price (i) equal to this value, then cloud $i$'s profit remains the same, compared to not obtaining the VM; (ii) higher than this value, a profit loss (gain) at cloud $i$ occurs; and (iii) lower than this value, a profit gain (loss) results. In a multi-unit double auction, the true value of the maximum number of type-$m$ VMs cloud $i$ can buy (sell), $\tilde{\gamma}_i^m(t)$ ($\tilde{\eta}_i^m(t)$), is the maximum number of type-$m$ VMs the cloud is willing to buy (sell) at the true value of the price, {\em i.e.}, $\tilde{b}_i^m(t)$ ($\tilde{s}_i^m(t)$).

Using the above rationale and based on problem (\ref{eqn:profit-oneshot1}), the true values of the buy/sell prices for cloud $i$ can be derived as (detailed derivation steps are given in Appendix \ref{appendix:truevalue})

\vspace{-6mm}{\small\begin{align}
\tilde{b}_{i}^m(t) = \frac{Q_i^{s_m^*}(t)+Z_i^{s_m^*}(t)}{V\cdot g_{s_m^*}}, \label{eqn:true-buy}
\end{align}}\vspace{-5mm}

\noindent and

\vspace{-4mm}{\small\begin{align}
\tilde{s}_{i}^m(t)=\begin{cases} \frac{Q_i^{s_m^*}(t)+Z_i^{s_m^*}(t)}{V\cdot g_{s_m^*}}& \text{if }\frac{Q_i^{s_m^*}(t)+Z_i^{s_m^*}(t)}{V\cdot g_{s_m^*}}> \beta_{i}(t)/C_{i}^m\\ \beta_{i}(t)/C_{i}^m & \text{Otherwise}\end{cases},\label{eqn:true-sell}
\end{align}}\vspace{-5mm}

\noindent respectively, where

\vspace{-4mm}{\small\begin{align}
s_m^* = arg\max_{s'\in [1, S], m_{s'}=m}\{W_i^{s'}(t)\},\label{eqn:max-weight}
\end{align}}\vspace{-5mm}

\vspace{-4mm}{\small\begin{align}
\textrm{and}\hspace{1cm}W_i^{s'}(t) = \frac{Q_i^{s'}(t)+Z_i^{s'}(t)}{g_{s'}}.\label{eqn:weight}
\end{align}}\vspace{-5mm}

\noindent Here, $W_i^{s'}(t)$ denotes the weight for scheduling one type-$s'$ job (to run on type-$m_{s'}$ VM(s)) by cloud $i$ in $t$, and $s_m^*$ specifies the job type with the largest weight (ties broken arbitrarily), among all types of jobs requiring type-$m$ VMs. $W_i^{s'}(t)$ is determined by the following factors: (i) the sum of queue backlogs, $Q_i^{s'}(t)+Z_i^{s'}(t)$, representing the level of urgency for scheduling type-$s'$ jobs in $t$, since $Q_i^{s'}(t)$ is the number of unscheduled type-$s'$ jobs and $Z_i^{s'}(t)$ reflects the cumulated response delay; (ii) the number of concurrent VMs each type-$s'$ job requires, $g_{s'}$, which decides the job-scheduling difficulty. 

The intuition behind (\ref{eqn:true-buy}) and (\ref{eqn:true-sell}) includes: (i) the true value of the price to buy a type-$m$ VM depends on the combined effect of urgency and difficulty for scheduling jobs requiring this type of VMs, and is computed based on the maximum weight that any type of jobs requiring type-$m$ VMs may achieve; (ii) the true value of the price to sell one type-$m$ VM from cloud $i$ is the same as that of the price to buy, if the latter exceeds the current cost of operating a type-$m$ VM in the cloud; otherwise, it is set to the operational cost.

The true values of the number of type-$m$ VMs to buy and to sell at cloud $i$ are

\vspace{-4mm}{\small\begin{align}
\tilde{\gamma}_{i}^m(t) = \sum_{j\in [1, F]}C_{i}^m \cdot N_{i}^m,\label{eqn:true-vol-b}
\end{align}}\vspace{-5mm}

\vspace{-4mm}{\small\begin{align}
\textrm{and}\hspace{1cm}\tilde{\eta}_{i}^m(t) = C_{i}^m \cdot N_{i}^m,\label{eqn:true-vol-s}
\end{align}}\vspace{-5mm}

\noindent respectively. They state that the maximum number of type-$m$ VMs cloud $i$ is willing to buy (sell) at the price in (\ref{eqn:true-buy}) (in (\ref{eqn:true-sell})), is the number of all potential type-$m$ VMs in the federation. The rationale is as follows: The clearing price for transactions of type-$m$ VMs in the double auction is at most the buyer's true value in (\ref{eqn:true-buy}) and at least the seller's true value in (\ref{eqn:true-sell}), if the corresponding buy/sell bids are successful. By definition of the true value, if the actual charge per VM is lower (higher) than the true value, a profit gain happens at the buyer (seller), and the more VMs purchased (sold), the larger the profit gain. Therefore, a cloud is willing to buy or sell at the largest quantity possible, for profit maximization.\footnote{It may appear counter-intuitive that a cloud is willing to buy all type-$m$ VMs in the federation, 
regardless of its number of unscheduled jobs requiring type-$m$ VMs, \emph{i.e.}, $\sum_{s\in [1, S], m_s=m}Q_i^s(t)$. Interestingly, our proof in Sec.~\ref{sec:analysis} shows that bidding so in each time slot can achieve a time-averaged profit over the long run that approximates the offline optimum, and our simulation in Sec.~\ref{sec:simulation} shows that it performs better as compared to a bidding strategy that asks for the exact number of VMs to serve the unscheduled jobs.}



To conclude, in each time slot $t$, cloud $i$ submits its bids as $b_i^m(t)=\tilde{b}_i^m(t)$, $s_i^m(t)=\tilde{s}_i^m(t)$, $\gamma_i^m(t)=\tilde{\gamma}_{i}^m(t)$ and $\eta_i^m(t)=\tilde{\eta}_i^m(t)$, for each type of VMs $m\in [1,M]$.

\subsection{Server Provisioning, Job scheduling and Dropping}
\label{sec:scheduling}

After receiving results of the double auction (actual charges $\hat{b}_i^m(t)$, $\hat{s}_i^m(t)$, $\forall m\in [1, M]$, and the actual numbers of traded VMs $\hat{\gamma}_i^m(t)$, $\hat{\eta}_i^m(t)$, $\forall m\in [1, M]$, $\alpha_{ji}^{m_s}(t),\forall s\in[1,S],\forall j\in[1,F]$),
 cloud $i$ schedules its jobs on its local servers and (potentially) purchased VMs from other clouds, decides job drops and the number of active servers to provision, by solving optimization problems (\ref{eqn:profit-oneshot1}) and (\ref{eqn:profit-oneshot2}).

\subsubsection{Server provisioning}

We start with deriving $n_i^m(t)$, $\forall m\in [1, M]$, by assuming known values of $\hat{s}_i^m(t)$, $\hat{\eta}_i^m(t)$, $\hat{b}_i^m(t)$, $\hat{\gamma}_i^m(t)$, $\alpha_{ij}^m(t)$ and $\mu_{ij}^s(t)$ (we will present the value of $n_i^m(t)$ in terms of these variables). In this case, problem (\ref{eqn:profit-oneshot1}) is equivalent to the following minimization problem:

\vspace{-4mm}{\small
\begin{align}
\min&~~~~V \beta_i(t) \sum_{m\in [1, M]}n_i^m(t)\notag\\
s.t.&~~~~\text{Constraint (\ref{eqn:capacity1}), (\ref{eqn:capacity2}) and (\ref{eqn:capacity3})}.\notag
\end{align}}\vspace{-6mm}

Since $V \beta_i(t) \ge 0$, the best strategy is to assign the minimal feasible value to $n_i^m(t)$, $\forall m\in[1,M]$, that satisfies constraints (\ref{eqn:capacity1}) and (\ref{eqn:capacity3}), which can be combined into

\vspace{-4mm}{\small
\begin{align*}
\sum_{s\in [1,S], m_s=m}\mu_{ii}^s(t)\cdot g_s + \sum_{j\neq i}\alpha_{ji}^m(t) \leq C_i^m n_i^m(t).
\end{align*}}\vspace{-4mm}

\noindent Hence, the optimal number of activated servers at cloud $i$ to provision type-$m$ VM can be calculated as

\vspace{-4mm}{\small\begin{align}
n_{i}^m(t) = (\sum_{s\in [1,S], m_s=m}\mu_{ii}^s(t)\cdot g_s + \sum_{j\neq i}\alpha_{ji}^m(t))/C_{i}^m.\label{eqn:server-provision}
\end{align}}\vspace{-4mm}

\noindent These many servers can provide enough type-$m$ VMs for serving local jobs and selling to other clouds.

\subsubsection{Job scheduling}

We now derive $\mu_{ij}^s(t)$, $\forall j\in [1,F]$, $s\in [1, S]$, by assuming known values of $\hat{s}_i^m(t)$, $\hat{\eta}_i^m(t)$, $\hat{b}_i^m(t)$, $\hat{\gamma}_i^m(t)$ and $\alpha_{ij}^m(t)$, with $n_i^m(t)$ given in Eqn.~(\ref{eqn:server-provision}). Problem (\ref{eqn:profit-oneshot1}) is equivalent to the following maximization problem:

\vspace{-4mm}{\small
\begin{align}
\max&~~~~\sum_{s\in [1, S]}\sum_{j\in [1, F]}\mu_{ij}^s(t) [Q_i^s(t) + Z_i^s(t)]\nonumber\\
&~~~~-V \beta_i \sum_{s\in [1,S], m_s=m}\mu_{ii}^s(t)\cdot \frac{g_s}{C_i^m}\notag\\
s.t.&~~~~\text{Constraint (\ref{eqn:capacity1}), (\ref{eqn:capacity2}) and (\ref{eqn:capacity3})}.\notag
\end{align}}\vspace{-6mm}


This is a maximum-weight scheduling problem, with $Q_i^s(t)+Z_i^s(t)$ as the per-job scheduling weight for each $\mu_{ij}^s(t)$ ($j\neq i$) and $Q_i^s(t)+Z_i^s(t)-\frac{V\beta_i(t)  g_s}{C_i^m}$ as the per-job scheduling weight for each $\mu_{ii}^s(t)$. There are two cases:

\vspace{1mm}
$\triangleright$ $j = i$: In this case, by combining constraints (\ref{eqn:capacity1}), (\ref{eqn:capacity2}) and (\ref{eqn:capacity3}), we have

\vspace{-4mm}{\small
\begin{align*}
\sum_{s:m_s=m, s\in [1, S]}g_s\mu_{ii}^s(t)\leq C_i^m N_i^m-\sum_{j\neq i}\alpha_{ji}^m(t).
\end{align*}}\vspace{-4mm}

Based on the above maximum-weight problem, we know that 
 the best strategy is to assign all the remaining type-$m_s$ VMs in cloud $i$, $C_i^{m_s} n_i^{m_s}(t)-\sum_{j\neq i}\alpha_{ji}^{m_s}(t)$ (the maximum number of on-premise type-$m_s$ VMs minus those sold to other clouds), to serve its own jobs of service type $s_{m_s}^*$ with the largest per-VM scheduling weight $\frac{Q_i^s(t)+Z_i^s(t)}{g_s}-\frac{V\beta_i(t)}{C_i^{m_s}}$ if it is positive (equivalently, the largest $\frac{Q_i^s(t)+Z_i^s(t)}{g_s}$ 
 if $\frac{Q_i^s(t)+Z_i^s(t)}{g_s}>\frac{V\beta_i(t)}{C_i^{m_s}}$), among all job types requiring type-$m_s$ VMs. Otherwise, cloud $i$ does not serve any jobs using its own servers in $t$. Hence, we derive the optimal number of cloud $i$'s type-$s$ jobs scheduled to run on the cloud's local servers as 

\vspace{-4mm}{\small\begin{align}
\mu_{ii}^s(t)=\begin{cases}\frac{C_{i}^{m_s}\cdot N_{i}^{m_s}- \sum_{j\neq i}\alpha_{ji}^{m_s}(t)}{g_s} & \text{if }\frac{Q_i^s(t)+Z_i^s(t)}{ \cdot g_s }> \frac{V\beta_{i}(t)}{C_{i}^{m_s}}\\ & \text{and }\ s=s_{m_s}^*\\ 0 & \text{Otherwise}\end{cases}.\label{eqn:schedule1}
\end{align}}\vspace{-5mm}

\vspace{1mm}
$\triangleright$ $j\neq i$: $\mu_{ij}^s(t)$ can be directly derived by $\alpha_{ij}^{m_s}(t)$, which is the number of type-$m_s$ VMs cloud $i$ purchased from cloud $j$ (constraint (\ref{eqn:capacity1}) is satisfied by our server provisioning decision in Eqn.~(\ref{eqn:server-provision}), and constraint (\ref{eqn:capacity2}) is met by Eqn.~(\ref{eqn:schedule1}) and (\ref{eqn:server-provision})), based on constraint (\ref{eqn:capacity3}).
Similar to the previous case, we know that 
 the best strategy is to assign all the type-$m_s$ VMs purchased, $\alpha_{ij}^{m_s}(t)$, to serve jobs of service type $s_{m_s}^*$ with the largest per-VM scheduling weight $\frac{Q_i^s(t)+Z_i^s(t)}{g_s}$, as defined in Eqn.~(\ref{eqn:max-weight}) and (\ref{eqn:weight}). Hence, we derive the optimal solution to the number of type-$s$ jobs to run at cloud $j(\neq i)$  
 as

\vspace{-4mm}{\small\begin{align}
\mu_{ij}^s(t)=\begin{cases}\alpha_{ij}^{m_s}(t)/g_s& \text{if }s=s_{m_s}^*\\ 0 & \text{Otherwise}\end{cases}.\label{eqn:schedule2}
\end{align}}\vspace{-4mm}



\subsubsection{Job dropping}

Problem (\ref{eqn:profit-oneshot2}) is a maximum-weight problem with weight $Q_i^s(t)+Z_i^s(t)-V\cdot \xi_i^s$ for job-dropping decision variable $D_i^s(t)$, $\forall s\in [1,S]$, in the objective function. If the weight $Q_i^s(t)+Z_i^s(t)-V\cdot \xi_i^s> 0$ ({\em i.e.}, if the level of urgency for scheduling type-$s$ jobs $Q_i^s(t)+Z_i^s(t)$ exceeds the weighted job-drop penalty $V\cdot \xi_{i}^s$), type-$s$ jobs in queue $Q_i^s$ should be dropped at the maximum rate, \emph{i.e.}, $D_i^s(t)=D_i^{s(max)}$, in order to maximize the objective function value; otherwise, there is no drop, {\em i.e.}, $D_i^s(t)=0$. Therefore, the optimal number of type-$s$ jobs dropped by cloud $i$ in $t$ is

\vspace{-4mm}{\small\begin{align}
D_i^s(t)=\begin{cases}D_i^{s(max)} & \text{if }Q_i^s(t)+Z_i^s(t)> V\cdot \xi_{i}^s\\ 0 & \text{Otherwise}\end{cases}.\label{eqn:drop}
\end{align}}\vspace{-6mm}


\vspace{2mm}
In the above results, we note that the derived job scheduling and drop numbers do not need to be bounded by the number of unscheduled jobs in the corresponding job queue, {\em i.e.}, $\mu_{ij}^s(t)$ and $D_i^s(t)$ are not required to be bounded by $Q_i^{s}(t)$ according to Eqn.~(\ref{eqn:queue1}). Nevertheless, the actual number of jobs to schedule/drop when running the algorithm, is upper bounded by the length of the job queue.

%

\subsection{The Dynamic Algorithm}\label{subsec:derivation-alg}

Alg.~\ref{alg:profit} summarizes the dynamic algorithm for each cloud to carry out in each time slot, in order to maximize its time-averaged profit over the long run.

{\small \vspace{-3mm}
\begin{algorithm}[!h]
\small \caption{Dynamic Profit Maximization Algorithm at cloud $i$ in Time Slot $t$} \label{alg:profit}

\textbf{Input}: $r_i^s(t)$, $Q_i^s(t)$, $Z_i^s(t)$, $g_s$, $m_s$, $\xi_i^s$, $C_i^m$, $N_i^m$ and $\beta_i(t)$, $\forall s\in [1, S]$.

\textbf{Output}: $b_i^m(t)$, $s_i^m(t)$, $\gamma_i^m(t)$, $\eta_i^m(t)$, $D_i^s(t)$, $\mu_{ij}^s(t)$ and $n_i^m(t)$, $\forall m\in [1, M], s\in [1, S], j\in [1, F]$.
\begin{algorithmic}[1]
\State \textbf{VM valuation and bid}: Decide $b_i^m(t)$, $s_i^m(t)$, $\gamma_i^m(t)$ and $\eta_i^m(t)$ with Eqn.~(\ref{eqn:true-buy})-(\ref{eqn:true-vol-s});

\State \textbf{Server provisioning, job scheduling and dropping}: Decide $\mu_{ij}^s(t)$, $D_i^s(t)$ and $n_i^m(t)$ with Eqn.~(\ref{eqn:schedule2}), (\ref{eqn:schedule1}), (\ref{eqn:drop}) and (\ref{eqn:server-provision});

\State Update $Q_i^s(t)$ and $Z_i^s(t)$ with Eqn.~(\ref{eqn:queue1}) and (\ref{eqn:queue2}).

\end{algorithmic}
\end{algorithm}
}\vspace{-3mm}


We analyze the computation and communication complexities of Alg.~\ref{alg:profit} as follows.

\vspace{1mm}
\noindent \textbf{Computation complexity}: We study the computation complexity for each algorithm module respectively.

\vspace{1mm}
\noindent $\triangleright$ \emph{VM valuation and bid}: The algorithm should first calculate the value of $s_m^*$ for each VM type $m\in [1, M]$ with Eqn.~(\ref{eqn:max-weight}) by comparing the weights $W_i^{s'}(t)$ among different types of jobs. In fact, the weight for each job type $s\in [1, S]$ is only evaluated once since it is only involved in the calculation of $s_m^*$ where $m=m_s$. Hence, the computation overhead to find $s_m^*,\ \forall m\in [1, M]$, is $O(S)$. Based on the value of $s_m^*$, the buy/sell bids of type-$m$ VMs can be decided by Eqn.~(\ref{eqn:true-buy})-(\ref{eqn:true-vol-s}) in constant time. For all $M$ VM types, the computation overhead is $O(M)$. Hence, the overall computation complexity for this algorithm module is $O(S+M)$.

\vspace{1mm}
\noindent $\triangleright$ \emph{Server provisioning, job scheduling and dropping}: With $s_m^*,\ \forall m\in [1, M]$, calculated in the above algorithm module, we can directly know the value of $s_{m_s}^*,\ \forall s\in [1,S]$. Then, the job scheduling decision $\mu_{ij}^s(t)$ for job type $s$ can be made in constant time based on Eqn.~(\ref{eqn:schedule2}) and (\ref{eqn:schedule1}). For all $S$ job types, the computation overhead is $O(S)$.

The server provisioning decisions can be found in constant time based on the job scheduling decisions and the auction results, according to Eqn.~(\ref{eqn:server-provision}) for type-$m$ VMs. For all $M$ VM types, the computation overhead is $O(M)$.

Job dropping is also decided in constant time for type-$s$ jobs based on Eqn.~(\ref{eqn:drop}). For all $S$ job types, the computation complexity is $O(S)$.

\vspace{1mm}
\noindent $\triangleright$ \emph{Queue update}: For each job type $s$, the job queue $Q_i^s(t)$ and virtual queue $Z_i^s(t)$ can be updated in constant time based on Eqn.~(\ref{eqn:queue1}) and (\ref{eqn:queue2}). Hence, for all $S$ job types, the computation overhead is $O(S)$.

\vspace{1mm}
In summary, the computation complexity of Alg.~\ref{alg:profit} is $O(S+M)$.

\vspace{1mm}
\noindent \textbf{Communication complexity}: The input to Alg.~\ref{alg:profit} is mostly derived from local information. There is no direct information exchange among individual clouds. The only communication overhead is incurred when a cloud sends its VM bids to the auctioneer and receives the auction results for each VM type. Since there are $M$ VM types, the communication complexity is $O(M)$ for each cloud.

\section{Double Auction Mechanism}
\label{sec:auction_mechanism}

We next design a double auction mechanism for inter-cloud VM trading, which not only is truthful, individual rational and ex-post budget balanced, but also can enable satisfactory social welfare (Theorems \ref{theorem:truthfulness}-\ref{theorem:budget} and \ref{theorem:welfare}, Sec.~\ref{sec:analysis}).






The true values of buy and sell bids at each participating cloud (Eqn.~(\ref{eqn:true-buy})-(\ref{eqn:true-vol-s})) are not related to the detailed auction mechanism. The true values of the maximum numbers of VMs a cloud is willing to trade ($\hat{\gamma}_i^m(t)$ and $\hat{\eta}_i^m(t)$ in (\ref{eqn:true-vol-b}) and (\ref{eqn:true-vol-s})) are time-independent constants determined by system parameters $C_i^m$ and $N_i^m$. These parameters, and thus $\hat{\gamma}_i^m(t)$ and $\hat{\eta}_i^m(t)$, are easily known to other clouds, and hence it is not meaningful for a buyer/seller to bid otherwise. We correspondingly design a double auction where ${\gamma}_i^m(t)$ in each buy-bid is fixed to the value in (\ref{eqn:true-vol-b}) and ${\eta}_i^m(t)$ in each sell-bid is always the value in (\ref{eqn:true-vol-s}), while the buy/sell prices, $b_i^m(t)$'s and $s_i^m(t)$'s, can be decided by the respective buyers/sellers.

%
%

The following mechanism is carried out by the auctioneer at the beginning of each time slot $t$, to decide the actual trading price and number for each type of VMs $m\in [1, M]$.


\vspace{1mm}
\noindent \textbf{1. Winner Determination}:
The auctioneer sorts all received buy-bids for type-$m$ VMs in descending order in the buy prices. Let $\theta_j^m(t)$ be the $j^{th}$ highest. Two buy-bids with the \emph{largest} and \emph{second largest} prices, $\theta_1^m(t)$, $\theta_2^m(t)$, are identified (ties broken arbitrarily). The sell-bids for type-$m$ VMs are sorted in ascending order in the sell prices. Let $\vartheta_j^m(t)$ be the $j^{th}$ lowest, with $L_j^m(t)$ as the corresponding maximum number of VMs to sell, such that $\vartheta_1^m(t)\leq \vartheta_2^m(t)\leq \ldots \leq \vartheta_N^m(t)$. Let $j'$ be the critical index in the sorted sequence of sell-bids, such that $\vartheta_{j'}^m(t)$ is the largest sell price not exceeding $\theta_2^m(t)$, \emph{i.e.},

    \vspace{-4mm}{\small\begin{align}
    \vartheta_{j'}^m(t) \leq \theta_2^m(t),\ \text{and }\vartheta_{j'+1}^m(t) > \theta_2^m(t).\label{eqn:j'}
    \end{align}}\vspace{-5mm}

If there are at least two sell-bids $\vartheta_1^m(t)$ and $\vartheta_2^m(t)$ no higher than the second largest buy price $\theta_2^m(t)$, the highest buy-bid $\theta_1^m(t)$ wins, and the sell-bids with the lowest to the ${(j'-1)}^{\mbox{th}}$ lowest sell prices ($\vartheta_j^m(t)\leq \vartheta_{j'}^m(t)$, not including $j'$) win. Otherwise, no buy/sell bid wins.

\vspace{1mm}
\noindent \textbf{2. Pricing and Allocation}: It is a NP-hard problem to clear the double auction market with discriminatory prices \cite{sandholm-IJICAI01}. We apply a uniform clearing price to winning buy/sell bids of type-$m$ VMs, as follows.

\noindent $\triangleright$ The price charged to each buyer cloud $i$ of type-$m$ VMs is

\vspace{-5mm}{\small\begin{align}
\hat{b}_i^m(t) = \begin{cases}\theta_{2}^m(t) & \text{if bid }b_i^m(t)\text{ wins},\\ 0 & \text{otherwise.}\end{cases} \label{eqn:buy-price}
\end{align}}\vspace{-4mm}

\noindent $\triangleright$ The price paid to each seller cloud $i$ of type-$m$ VMs is

\vspace{-5mm}{\small\begin{align}
\hat{s}_{i}^m(t) = \begin{cases}\vartheta_{j'}^m(t) & \text{if bid }s_i^m(t)\text{ wins},\\ 0 & \text{otherwise.}\end{cases} \label{eqn:sell-price}
\end{align}}\vspace{-4mm}

\noindent $\triangleright$ The number of type-$m$ VMs bought by cloud $i$ is

\vspace{-5mm}{\small\begin{align}
\hat{\gamma}_i^m(t) = \begin{cases}\sum_{j=1}^{j'-1}L_j^m(t) & \text{if bid }b_i^m(t)\text{ wins},\\ 0 & \text{otherwise.}\end{cases} \label{eqn:buy-vol}
\end{align}}\vspace{-4mm}

\noindent $\triangleright$ The number of type-$m$ VMs sold by cloud $i$ is

\vspace{-5mm}{\small\begin{align}
\hat{\eta}_{i}^m(t) = \begin{cases}\eta_{i}^m(t) & \text{if bid }s_i^m(t)\text{ wins},\\ 0 & \text{otherwise.}\end{cases} \label{eqn:sell-vol}
\end{align}}\vspace{-4mm}

\noindent $\triangleright$ The number of type-$m$ VMs sold from cloud $j$ to cloud $i$ is

\vspace{-4mm}{\small\begin{align}
\alpha_{ij}^m(t) = \begin{cases}\eta_{j}^m(t) & \text{if bids }b_i^m(t)\text{ and }s_j^m(t)\text{ win},\\ 0 & \text{otherwise.}\end{cases} \label{eqn:buy-sell}
\end{align}}\vspace{-4mm}

For example, consider a federation of 4 clouds with buy and sell prices bid in Table \ref{table:example}, each seeking to buy/sell one VM. Clouds 2 and 3 bid the two largest buy prices \$$20$ and \$$15$, which are higher than sell prices from clouds 1 and 4. Hence the buyer cloud 2 and the seller cloud 4 win, while the clearing buy and sell prices are \$$15$ and \$$13$, respectively.

\vspace{-3mm}
\begin{table}[h]
  \centering
  \caption{Double auction bids: an illustrative example}\label{table:example}\vspace{-3mm}
  \begin{tabular}{|c|c|c|c|c|}
    \hline
     & Cloud 1 & Cloud 2 & Cloud 3 & Cloud 4 \\\hline
    Buy-bid & \$10 & \$20 & \$15 & \$8 \\\hline
    Sell-bid & \$13 & \$22 & \$16 & \$9\\
    \hline
  \end{tabular}
\end{table}\vspace{-4mm}

\section{Dynamic Social-Welfare Maximization Algorithm: a Benchmark}\label{sec:welfare}

We also present a dynamic algorithm that maximizes the time-averaged social welfare in the federation (optimization problem (\ref{eqn:social-max})), and its derivation steps based on the Lyapunov optimization framework. This algorithm is used as a benchmark to examine the efficiency of Alg.~\ref{alg:profit} in social welfare.



\subsection{Derivation Details}

Similar to the derivation of Alg.~\ref{alg:profit}, we first derive a one-shot optimization problem (\ref{eqn:profit-oneslot2}) for the federation to solve based on the \emph{drift-plus-penalty} framework of Lyapunov optimization, and then derive the dynamic benchmark algorithm to solve it optimally in each time slot.

In each time slot $t$, define the set of queues $\Theta(t)$ in the federation as

\vspace{-5mm}{\small
\begin{align*}
\Theta(t)= \{Q_i^s(t),Z_i^s(t)|i\in [1, F], s\in [1, S]\}.
\end{align*}}\vspace{-6mm}

Define the Lyapunov function as follows:

\vspace{-4mm}{\small
\begin{align*}
L(\Theta(t)) = \frac{1}{2}\sum_{i\in[1, F]}\sum_{s\in [1, S]}[(Q_i^s(t))^2 + (Z_i^s(t))^2].
\end{align*}}\vspace{-4mm}

Then the one-slot conditional Lyapunov drift is

\vspace{-5mm}{\small
\begin{align*}
\Delta(\Theta(t))= L(\Theta(t+1)) - L(\Theta(t)). 
\end{align*}}\vspace{-6mm}

Squaring the queuing laws Eqn.~(\ref{eqn:queue1}) and (\ref{eqn:queue2}), we can derive the following inequality (details can be found in Appendix \ref{appendix:drift2})

\vspace*{-5mm}{\small\begin{align}
&\Delta(\Theta(t))+V\cdot \sum_{i\in [1, F]}[\sum_{m\in [1, M]} [\beta_i(t) n_i^m(t)] + \sum_{s\in [1, S]}D_i^s(t) \xi_i^s \notag\\
&-\sum_{s\in [1, S]} p_i^s(t)r_i^s(t)]\notag\\
\leq&  B + \sum_{i\in [1, F]}\sum_{s\in [1, S]}[Q_i^s(t) r_i^s(t) + Z_i^s(t) \epsilon_s - V p_i^s(t)r_i^s(t)]\notag\\
&- \varphi_1(t)-\varphi_2(t),\label{eqn:drift-plus-penalty2}
\end{align}}\vspace*{-6mm}

\noindent where $V>0$ is a user-defined positive parameter for gauging the optimality of the time-averaged social welfare, 
 $B=\sum_{i\in [1, F]}B_i$ is a constant with $B_i=\frac{1}{2}\sum_{s\in [1, S]}[[\sum_{j=1}^F C_j^{m_s}N_j^{m_s}/g_s+D_i^{s(max)}]^2 + [R_i^s]^2+[\epsilon_s]^2 + [D_i^{s(max)}+\sum_{j=1}^F C_j^{m_s}N_j^{m_s}/g_s]^2]$, and 

\vspace{-4mm}
{\small \begin{align}
\varphi_1(t) =& \sum_{i\in [1, F]}[\sum_{s\in [1, S]}[Q_i^s(t) + Z_i^s(t)]\cdot \sum_{j\in [1, F]}\mu_{ij}^s(t)\notag\\
&-V \beta_i(t) \sum_{m\in [1, M]}  n_i^m(t)],\label{eqn:varphi1}\\
\varphi_2(t) =& \sum_{i\in [1, F]}\sum_{s\in [1, S]}D_i^s(t)[Q_i^s(t)+Z_i^s(t)-V\cdot \xi_i^s].\label{eqn:varphi2}
\end{align}}\vspace{-4mm}



Based on the drift-plus-penalty framework \cite{book2010}, a dynamic
algorithm can be derived for the federation to observe job and virtual queues $\Theta(t)$, job arrival rates ($r_i^s(t),\ \forall i\in [1,F], s\in [1, S]$), the current cost for server operation ($\beta_i(t),\ \forall i\in [1, F]$) in each
time slot, and minimizes the RHS of the inequality (\ref{eqn:drift-plus-penalty2}),
such that a lower bound for the time-averaged social welfare is maximized. Note that $B + \sum_{i\in [1, F]}\sum_{s\in [1, S]}[Q_i^s(t) r_i^s(t) + Z_i^s(t) \epsilon_s - V p_i^s(t)r_i^s(t)]$ in the RHS of (\ref{eqn:drift-plus-penalty2}) is fixed in time slot $t$. Hence, to maximize a lower bound of the time-averaged social welfare for the federation, the dynamic algorithm should solve the one-shot optimization problem in each time slot $t$ as follows:


\vspace{-5mm}
{\small \begin{align}
\max&~~~~\varphi_1(t) + \varphi_2(t)\label{eqn:profit-oneslot2} \\
s.t.&~~~~\text{Constraints (\ref{eqn:drop-cons}), (\ref{eqn:capacity1})-(\ref{eqn:capacity2})},\ \forall i\in [1, F].\nonumber
\end{align}}\vspace{-6mm}

The maximization problem in (\ref{eqn:profit-oneslot2}) can be decoupled into two independent optimization problems:

\vspace{-5mm}{\small
\begin{align}
\max&~~\varphi_1(t)\label{eqn:profit-oneshot1-1}~~~~s.t.~~\text{Constraint (\ref{eqn:capacity1})-(\ref{eqn:capacity2})},\ \forall i\in [1, F],
\end{align}}\vspace{-6mm}

\noindent which is related to decisions on job scheduling and server provisioning, and

\vspace{-5mm}{\small
\begin{align}
\max&~~\varphi_2(t)\label{eqn:profit-oneshot2-2}~~~~s.t.&~~\text{Constraint (\ref{eqn:drop-cons})},\ \forall i\in [1, F], 
\end{align}}\vspace{-6mm}

\noindent which is related to decisions on job dropping. We note that to maximize social welfare, the decisions that the federation needs to make are not related to any inter-cloud VM trading mechanism, since the income and expenditure due to VM trades among the clouds have canceled each other when calculating the social welfare. We next solve problem (\ref{eqn:profit-oneshot1-1}) and problem (\ref{eqn:profit-oneshot2-2}) to derive the optimal decisions.



\emph{1) Server provisioning:} We start with solving $n_i^m(t)$, $\forall m\in [1, M]$, $i\in[1,F]$, by assuming known values of job scheduling decisions $\mu_{ij}^s(t)$'s, and present the value of the former in terms of the latter. 
 In this case, problem (\ref{eqn:profit-oneshot1-1}) is equivalent to the following minimization problem:

\vspace{-4mm}{\small
\begin{align}
\min&~~~~V \sum_{i\in [1, F]}\beta_i(t) \sum_{m\in [1, M]}n_i^m(t)\notag\\ 
s.t.&~~~~\text{Constraint (\ref{eqn:capacity1}) - (\ref{eqn:capacity2})},\ \forall i\in [1, F].\notag
\end{align}}\vspace{-6mm}

Since $V\beta_i(t)\ge 0$, the best strategy is to assign the minimal feasible value to $n_i^m(t)$, for each VM type $m$ at each cloud $i$, that satisfies constraints (\ref{eqn:capacity1}) and (\ref{eqn:capacity2}). Hence, the optimal number of activated servers at cloud $i$ to provision type-$m$ VM is

\vspace{-4mm}{\small\begin{align}
n_{i}^m(t) = [\sum_{j\in [1,F]}\sum_{s\in[1,S], m_s=m}\mu_{ji}^s(t)\cdot g_s]/C_{i}^m.\label{eqn:server-provision2}
\end{align}}\vspace{-4mm}

\emph{2) Job scheduling:} We next derive $\mu_{ij}^s(t)$, $\forall i\in [1, F]$, $j\in [1,F]$, $s\in [1, S]$, with $n_i^m(t)$ given in Eqn.~(\ref{eqn:server-provision2}). Problem (\ref{eqn:profit-oneshot1-1}) is equivalent to the following maximization problem:

\vspace{-4mm}{\small
\begin{align}
\max&~~~~\sum_{i\in [1, F]}\sum_{s\in [1, S]}\sum_{j\in [1, F]}\mu_{ij}^s(t)\cdot[Q_i^s(t)+Z_i^s(t)-V \beta_j\cdot \frac{g_s}{C_j^{m_s}}]\notag\\ 
s.t.&~~~~\text{Constraint (\ref{eqn:capacity1}) - (\ref{eqn:capacity2})},\ \forall i\in [1, F].\notag
\end{align}}\vspace{-6mm}


This is a maximum-weight scheduling problem, with $Q_i^s(t)+Z_i^s(t)-\frac{V\beta_j(t)  g_s}{C_j^{m_s}}$ as the per-job scheduling weight for each $\mu_{ij}^s(t)$. Combining constraints (\ref{eqn:capacity1}) and (\ref{eqn:capacity2}), we have

\vspace{-4mm}{\small
\begin{align*}
\sum_{i\in [1, F]}\sum_{s:m_s=m, s\in [1, S]}g_s\mu_{ij}^s(t)\leq C_j^m N_j^m,\ \forall j\in [1, F].
\end{align*}}\vspace{-4mm}

The best strategy is to assign all the type-$m$ VMs in cloud $j$ at the number of $C_j^m N_j^m$ to serve jobs of type $\acute{s}_{m}$ of cloud $\acute{i}_{m}$ with the maximum per-VM scheduling weight $\frac{Q_i^s(t)+Z_i^s(t)}{g_s}-\frac{V\beta_j(t)}{C_j^{m_s}}$ if it is positive (equivalently, the largest $\frac{Q_i^s(t)+Z_i^s(t)}{g_s}$ as defined in Eqn.~(\ref{eqn:weight}) and (\ref{eqn:max-weight}) in Sec.~\ref{sec:algorithm} if $\frac{Q_i^s(t)+Z_i^s(t)}{g_s}>\frac{V\beta_j(t)}{C_j^{m_s}}$), among all job types from all clouds requiring type-$m$ VMs. Hence, the optimal solution to the number of type-$s$ jobs of cloud $i$ to run at cloud $j$ is 

\vspace{-4mm}{\small\begin{align}
\mu_{ij}^s(t)=\begin{cases}C_{j}^{m_s}\cdot N_{j}^{m_s} / g_s& \text{if }\frac{Q_i^s(t)+Z_i^s(t)}{ g_s}>  \frac{V\beta_{i}(t)}{C_{i}^{m_s}}\\ & \text{ and }\ <i,s>=<\acute{i}_m, \acute{s}_m>,\\ 0 & \text{Otherwise,}\end{cases}\label{eqn:schedule3}
\end{align}}\vspace{-4mm}

\noindent where

\vspace{-5mm}{\small\begin{align}
<\acute{i}_{m},\acute{s}_{m}> = arg\max_{i\in [1,F],s\in [1, S], m_s=m}\{W_i^s(t)\}, \label{eqn:weight2}
\end{align}}\vspace{-5mm}

\noindent and $W_i^s(t)$ is the weight defined in Eqn.~(\ref{eqn:weight}).

%
%
%
%



\emph{3) Job dropping:} Problem (\ref{eqn:profit-oneshot2-2}) is a maximum-weight problem with weight $Q_i^s(t)+Z_i^s(t)-V\cdot \xi_i^s$ for job-dropping decision $D_i^s(t)$  
 in the objective function. If $Q_i^s(t)+Z_i^s(t)-V\cdot \xi_i^s> 0$, type-$s$ jobs at cloud $i$ should be dropped at the maximum rate; 
 otherwise, there is no drop. 
 Hence, the optimal number of type-$s$ jobs dropped by cloud $i$ in $t$ is 

\vspace{-4mm}{\small\begin{align}
D_i^s(t)=\begin{cases}D_s^{(max)} & \text{if }Q_i^s(t)+Z_i^s(t)> V\cdot \xi_{i}^s\\ 0 & \text{Otherwise.}\end{cases}\label{eqn:drop2}
\end{align}}\vspace{-5mm}

\subsection{The Dynamic Benchmark Algorithm}

Alg.~\ref{alg:social} summarizes the dynamic algorithm for the federation
to carry out ({\em e.g.}, on a centralized controller) in each time slot, in order to maximize its time-averaged
social welfare over the long run.






{\small \vspace{-4mm}
\begin{algorithm}[H]
\small \caption{Dynamic Social Welfare Maximization Algorithm in Time Slot $t$} \label{alg:social}

\textbf{Input}: $r_i^s(t)$, $Q_i^s(t)$, $Z_i^s(t)$, $g_s$, $m_s$, $\xi_i^s$, $C_i^m$, $N_i^m$ and $\beta_i(t)$, $\forall i\in [1, F], s\in [1, S]$.

\textbf{Output}: $D_i^s(t)$, $\mu_{ij}^s(t)$ and $n_i^m(t)$, $\forall i\in [1, F], m\in [1, M], s\in [1, S]$.
\begin{algorithmic}[1]
\State \textbf{Job scheduling and server provisioning}: Decide $\mu_{ij}^s(t)$ and $n_i^m(t)$ with Eqn.~(\ref{eqn:schedule3}) and (\ref{eqn:server-provision2});

\State \textbf{Job dropping}: Decide $D_i^s(t)$ with Eqn.~(\ref{eqn:drop2});

\State Update $Q_i^s(t)$ and $Z_i^s(t)$ with Eqn.~(\ref{eqn:queue1}) and (\ref{eqn:queue2}).

\end{algorithmic}
\end{algorithm}
}\vspace{-5mm}

\section{Performance Analysis}\label{sec:analysis}

We next analyze the performance guarantee provided by our dynamic individual-profit maximization algorithm and the double auction mechanism. \opt{short}{Due to space limit, all detailed proofs can be found in \cite{tech-report}.}

\subsection{Properties of the Double Auction Mechanism}

\opt{long}{
\begin{theorem}[True Valuation]\label{theorem:truevalues}
The VM valuations on buy-bids, \emph{i.e.}, Eqn.~(\ref{eqn:true-buy}) and (\ref{eqn:true-vol-b}), and sell-bids, \emph{i.e.}, Eqn.~(\ref{eqn:true-sell}) and (\ref{eqn:true-vol-s}), are true values.
\end{theorem}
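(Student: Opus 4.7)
The plan is to verify that the expressions in Eqn.~(\ref{eqn:true-buy})--(\ref{eqn:true-vol-s}) satisfy the definition of true values given in Sec.~\ref{subsec:auction}, by computing the marginal change in the one-shot objective $\varphi_1^i(t)+\varphi_2^i(t)$ of problem~(\ref{eqn:profit-oneshot1}) when cloud $i$ acquires (or gives up) one additional type-$m$ VM at a hypothetical trading price. Three regimes of the price (equal to, above, below the claimed true value) must be shown to induce indifference, strict loss, and strict gain respectively; the volume claim is then a corollary of the sign of this marginal change.

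For the \emph{buy price}, I would first argue that the best use of any extra type-$m$ VM at cloud $i$ is to serve jobs of the type $s_m^*$ defined in Eqn.~(\ref{eqn:max-weight}), since by the scheduling derivation leading to Eqn.~(\ref{eqn:schedule1})--(\ref{eqn:schedule2}) the per-VM scheduling weight $W_i^{s'}(t)=(Q_i^{s'}(t)+Z_i^{s'}(t))/g_{s'}$ is maximized at $s'=s_m^*$. Hence acquiring one more VM at price $b_i^m(t)$ changes the objective by $W_i^{s_m^*}(t)-V\cdot b_i^m(t)$. Setting this marginal gain to zero yields exactly Eqn.~(\ref{eqn:true-buy}); when $b_i^m(t)$ falls below that value the marginal gain is strictly positive (profit increases), and when it exceeds that value the marginal gain is strictly negative (profit decreases), matching the three-condition definition of the true buy price.

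For the \emph{sell price}, I would split into the two branches that appear in Eqn.~(\ref{eqn:true-sell}). If $W_i^{s_m^*}(t)/V>\beta_i(t)/C_i^m$, then locally using the VM for jobs of type $s_m^*$ is already profitable, so selling that VM at price $s_i^m(t)$ keeps the underlying server powered up (operational-cost term unchanged) but trades the lost scheduling contribution $W_i^{s_m^*}(t)$ for revenue $V\cdot s_i^m(t)$; indifference at $s_i^m(t)=W_i^{s_m^*}(t)/(V\,g_{s_m^*}/g_{s_m^*})$ recovers the first branch. In the complementary regime the VM would not have been powered up for local use, so selling it forces the extra operational cost $V\beta_i(t)/C_i^m$ against the revenue $V\cdot s_i^m(t)$, giving the break-even sell price $\beta_i(t)/C_i^m$ of the second branch. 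The three-condition true-value property then follows from the sign of this marginal change, exactly as in the buy case.

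For the \emph{volume} bids in Eqn.~(\ref{eqn:true-vol-b}) and Eqn.~(\ref{eqn:true-vol-s}), I would note that once $s_m^*$ is fixed the per-VM marginal quantities $W_i^{s_m^*}(t)$ and $\beta_i(t)/C_i^m$ are independent of how many VMs have already been traded in this slot. Therefore, at any strictly better-than-true price, every additional unit of trade yields the same strictly positive marginal profit, so the utility-maximizing cloud is willing to buy or sell as many VMs as feasibility permits, capped by the aggregate type-$m$ inventory in the federation; at the true price itself the cloud is indifferent and may as well bid this maximum, justifying Eqn.~(\ref{eqn:true-vol-b}) and Eqn.~(\ref{eqn:true-vol-s}). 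The main obstacle I anticipate is justifying the per-VM marginal reasoning inside the joint optimization over $\mu_{ij}^s(t)$, $n_i^m(t)$, and the auction outcomes $\hat{\gamma}_i^m(t),\hat{\eta}_i^m(t),\alpha_{ij}^m(t)$; for this I will invoke the decomposition of problem~(\ref{eqn:profit-oneshot1}) across VM types together with the closed-form server-provisioning rule Eqn.~(\ref{eqn:server-provision}), which together ensure that after the trade quantities are fixed the remaining subproblem decomposes cleanly and that the marginal effect of one extra purchased or sold VM is precisely as claimed.
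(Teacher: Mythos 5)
Your proposal is correct and follows essentially the same route as the paper's proof in Appendix~\ref{appendix:truevalue}: reduce via the per-VM-type decomposition of problem~(\ref{eqn:profit-oneshot1}) and the closed-form server-provisioning rule~(\ref{eqn:server-provision}), isolate a linear-in-$\alpha_{ij}^m(t)$ marginal term such as $\sum_{j\neq i}\alpha_{ij}^m(t)\bigl[W_i^{s_m^*}(t)-V\hat{b}_i^m(t)\bigr]$, and read off the break-even price from the sign of that coefficient, with the volume claims following because the per-VM marginal weight is constant within a slot. The one thing the paper does that you leave implicit is the separate handling of the case in which cloud $i$'s buy-bid and sell-bid both win (Case~3 of the appendix), where Properties~1--3 use individual rationality together with ex-post budget balance to force $\hat{s}_i^m(t)=\hat{b}_i^m(t)$ so that self-trading is profit-neutral and the two marginal analyses decouple; making that explicit would close the only small soft spot in your argument.
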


\vspace{1mm}
This theorem is proved based on the definition of the true values and the optimization problem (\ref{eqn:profit-oneslot}) solved in each time slot by each cloud in Appendix \ref{appendix:truevalue}.

}
\begin{theorem}[Truthfulness]\label{theorem:truthfulness}
Bidding truthfully is the dominant strategy of each cloud in the double auction in Sec.~\ref{sec:auction_mechanism}, \emph{i.e.}, no cloud can achieve a higher profit in (\ref{eqn:profit-oneslot}) by bidding with values other than its true values of the buy and sell bids, in Eqn.~(\ref{eqn:true-buy})(\ref{eqn:true-vol-b})(\ref{eqn:true-sell})(\ref{eqn:true-vol-s}).
\end{theorem}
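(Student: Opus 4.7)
My plan is to decouple truthfulness into the quantity-bid and price-bid components, and then to treat buyers and sellers separately on the price side. The quantity bids are not a manipulation lever by construction: the mechanism in Sec.~\ref{sec:auction_mechanism} hard-codes $\gamma_i^m(t) = \tilde\gamma_i^m(t)$ and $\eta_i^m(t) = \tilde\eta_i^m(t)$ using the capacity-based formulas (\ref{eqn:true-vol-b})--(\ref{eqn:true-vol-s}), which are functions only of the publicly known parameters $C_i^m$ and $N_i^m$. What remains is to prove dominant-strategy truthfulness for the price bids $b_i^m(t)$ and $s_i^m(t)$ under a fixed but arbitrary profile of bids by the other clouds, which amounts to showing that cloud $i$'s contribution to the one-shot objective (\ref{eqn:profit-oneslot}) is weakly maximized by reporting $(\tilde b_i^m(t), \tilde s_i^m(t))$.

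For the buyer side, fix cloud $i$ and VM type $m$ and let $v$ denote the largest type-$m$ buy-bid among the other clouds. The winner-determination rule gives $\theta_2^m(t)=v$ precisely when $b_i^m(t) > v$; in that case cloud $i$ is the sole winning buyer, is charged the uniform clearing price $\hat b_i^m(t) = v$ by (\ref{eqn:buy-price}), and receives a volume $\hat\gamma_i^m(t) = \sum_{j=1}^{j'-1} L_j^m(t)$ by (\ref{eqn:buy-vol}) that depends only on $v$ and the sell-bids, not on $b_i^m(t)$. The buyer-side contribution to $\varphi_1^i(t)$ is therefore $V(\tilde b_i^m(t) - v)\hat\gamma_i^m(t)$ when winning and $0$ when losing; since $\hat\gamma_i^m(t) \geq 0$, the winning branch is weakly preferred iff $\tilde b_i^m(t) \geq v$, which is exactly the branch selected by the truthful report $b_i^m(t) = \tilde b_i^m(t)$. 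Any deviation can only switch the outcome to the weakly inferior branch, which is the standard Vickrey argument.

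For the seller side, treat the buy-bids as fixed so that $\theta_2^m(t)$ is a constant, and rank the other clouds' type-$m$ sell-bids in ascending order. A case analysis on the insertion position of $s_i^m(t)$ into this ranking establishes three facts: (i) cloud $i$'s winning status is monotone in $s_i^m(t)$; (ii) when $i$ wins with a bid strictly inside the winning block, the per-unit payment $\vartheta_{j'}^m(t)$ equals the bid just above the winning set and is independent of $s_i^m(t)$; and (iii) the transition threshold at which $i$ switches from winning to losing coincides with this same next-up value. Denote this value by $p^\ast$. The seller-side contribution to $\varphi_1^i(t)$ is $V(p^\ast - \tilde s_i^m(t))\hat\eta_i^m(t)$ when winning and $0$ when losing, where the true value in (\ref{eqn:true-sell}) correctly encodes the per-unit operational cost $\beta_i(t)/C_i^m$ and the opportunity cost $\tilde b_i^m(t)$ of not using the VM locally. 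The winning branch is weakly preferable iff $\tilde s_i^m(t) \leq p^\ast$, which is again the branch produced by the truthful report.

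The main technical obstacle I anticipate is the three-part claim in the seller-side case analysis, particularly verifying the independence of the critical value $\vartheta_{j'}^m(t)$ from $s_i^m(t)$ and the alignment of the winning threshold with $\vartheta_{j'}^m(t)$, across all subcases in which a manipulation by $i$ could move the critical index $j'$ itself. A careful split on whether $i$'s truthful position in the sorted sell-bid list lies below, at, or above the critical index, together with the tie-breaking convention in (\ref{eqn:j'}), should dispatch each subcase. Once this monotone-threshold structure is in hand, combining the buyer and seller arguments with the already-settled quantity component yields dominant-strategy truthfulness for the whole bid $(b_i^m(t), \gamma_i^m(t), s_i^m(t), \eta_i^m(t))$.
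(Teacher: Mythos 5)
Your threshold/Vickrey re-framing is a cleaner packaging of what the paper does: the paper proves two auxiliary lemmas (monotone winner determination, and bid-independent pricing) and then grinds through a $4\times 8$ enumeration over the four win/lose statuses and the eight sign patterns of the joint deviation, whereas your (i)--(iii) on the seller side are exactly those two lemmas plus the crucial extra observation that the seller-side threshold and the seller-side payment coincide at $p^\ast$. The paper never isolates that identification, which is what makes your argument close in one line rather than thirty-two subcases. On the buyer side your single-winner second-price argument is likewise a compact restatement of the same lemmas.

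The gap is in the decoupling itself. On the seller side you ``treat the buy-bids as fixed so that $\theta_2^m(t)$ is a constant,'' and that sentence silently fixes cloud $i$'s \emph{own} buy-bid as well. But $\theta_2^m(t)$ is the second-highest buy-bid over all bidders including $i$, and any cloud that does not win as a buyer earns exactly zero buyer surplus regardless of which losing buy-bid it reports. It is therefore free, at zero marginal cost on the buyer side, to push $b_i^m(t)$ up toward (but not past) $\theta_1^m(t)$; whenever $i$ is, or thereby becomes, the second-highest bidder, this raises $\theta_2^m(t)$, hence raises the critical index $j'$ in (\ref{eqn:j'}), hence raises $\vartheta_{j'}^m(t)$ --- the very price $i$ collects as a winning seller. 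The structural fact $\tilde s_i^m(t)\ge\tilde b_i^m(t)$ implied by (\ref{eqn:true-buy})--(\ref{eqn:true-sell}) does not foreclose this: it only prevents $i$ from simultaneously being the truthful price-setting buyer and a truthful winning seller, but $i$ can sit third or lower in the truthful buy ranking, win a sell-bid truthfully, and then profit by deviating upward on the buy side alone. Your separate treatment of the two price bids never examines this joint deviation $b_i^m(t)\ne\tilde b_i^m(t),\ s_i^m(t)=\tilde s_i^m(t)$, which is precisely where the cross-market term lives. For what it is worth, the paper's own Case~3/Case~4 subcases nominally cover that joint deviation but dismiss it with the assertion that the paid sell price ``remains the same,'' which is false once $\theta_2^m(t)$ depends on $b_i^m(t)$; so the gap is shared, not introduced by you. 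The clean way to repair your proof is a bidder-exclusion rule: when computing the $\theta_2^m(t)$ that governs cloud $i$'s seller-side winner determination and clearing price, drop $i$'s own buy-bid from the sort. With that modification the buyer and seller problems genuinely decouple and the rest of your argument goes through unchanged.
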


\vspace{1mm}
\opt{long}{
We prove this theorem by contradiction and show that, in all cases, no cloud can do better with problem (\ref{eqn:profit-oneslot}) by bidding untruthfully. Details are in Appendix \ref{appendix:truthfulness}.
}

\begin{theorem}[Individual Rationality]\label{theorem:rationality}
No winning buyer pays more than its buy-bid price, and no winning seller is paid less than its sell-bid price, \emph{i.e.},
$\hat{b}_i^m(t)\leq b_i^m(t) \text{ and }\hat{s}_i^m(t) \geq s_i^m(t),\ \forall i\in [1, F], m\in [1, M].$
\end{theorem}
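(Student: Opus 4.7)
The plan is to prove the two inequalities separately by directly invoking the sorting structure set up by the Winner Determination step and the pricing rules in Eqn.~(\ref{eqn:buy-price}) and Eqn.~(\ref{eqn:sell-price}). Since the mechanism sorts bids before assigning winners and clearing prices, individual rationality is essentially a corollary of where winners sit in the sorted order relative to the clearing prices.

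First I would handle the buyer side. Fix a VM type $m$ and a winning buyer cloud $i$, so $b_i^m(t)$ is a winning buy-bid. By the rule in the auction, there is exactly one winning buyer whose bid equals $\theta_1^m(t)$, the largest buy-bid price. Hence $b_i^m(t)=\theta_1^m(t)$. The charged price for this winner is $\hat{b}_i^m(t)=\theta_2^m(t)$ by Eqn.~(\ref{eqn:buy-price}). Since the buy-bids were sorted in descending order, $\theta_2^m(t)\leq \theta_1^m(t)=b_i^m(t)$, giving $\hat{b}_i^m(t)\leq b_i^m(t)$ immediately.

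Next I would handle the seller side. Fix a winning seller cloud $i$ for type-$m$ VMs; by the winner determination rule, its sell-bid price appears at some position $j\in\{1,\ldots,j'-1\}$ in the ascending sort, i.e.\ $s_i^m(t)=\vartheta_j^m(t)$. The payment to every winning seller is $\hat{s}_i^m(t)=\vartheta_{j'}^m(t)$ by Eqn.~(\ref{eqn:sell-price}). Because the sell-bids were sorted in ascending order and $j\leq j'-1<j'$, we have $\vartheta_j^m(t)\leq \vartheta_{j'}^m(t)$, i.e.\ $s_i^m(t)\leq \hat{s}_i^m(t)$, as required.

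There is no real obstacle here: both inequalities fall out of the ordering used to define the clearing prices. The only subtlety worth noting is the case $j'=1$, in which no seller wins and the claim holds vacuously on the seller side; similarly, if fewer than two sell-bids lie below $\theta_2^m(t)$ then no bid wins and the conclusion is trivial since both $\hat{b}_i^m(t)$ and $\hat{s}_i^m(t)$ are $0$ while losing bids are not constrained. These boundary cases should be mentioned briefly to confirm the statement covers every configuration of bids.
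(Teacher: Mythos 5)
Your proof is correct and follows essentially the same route as the paper's Appendix proof: the buyer side reduces to $\hat{b}_i^m(t)=\theta_2^m(t)\leq\theta_1^m(t)=b_i^m(t)$ by the descending sort, and the seller side to $\hat{s}_i^m(t)=\vartheta_{j'}^m(t)\geq\vartheta_j^m(t)=s_i^m(t)$ for a winner at position $j\leq j'-1$ in the ascending sort. The only addition over the paper's argument is your explicit handling of the no-winner boundary case, which is a harmless (and arguably useful) clarification.
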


\vspace{1mm}
\opt{long}{This theorem can be proved based on the winner determination and pricing schemes in our auction mechanism, with details in Appendix \ref{appendix:rationality}.}
Given that the buy-bid (sell-bid) price is the true value of the buyer (seller), this theorem implies that a cloud can receive a non-negative profit gain, if it successfully sells or buys VMs. Hence, {\em a cloud's profit obtained in a federation with potential VM trades with others, is always no lower than that obtained when operating alone.}

%

\begin{theorem}[Ex-post Budget Balance]\label{theorem:budget} 
At the auctioneer, the total payment collected from the buyers is no smaller than the overall price paid to the sellers, {\em i.e.},

\vspace{-4mm}{\small
\begin{align*}
\sum_{i\in [1, F]}[\hat{b}_i^m(t)\cdot \hat{\gamma}_i^m(t)-\hat{s}_i^m(t)\cdot \hat{\eta}_i^m(t)] \geq 0,\ \forall m\in [1, M].
\end{align*}}\vspace{-3mm}
\end{theorem}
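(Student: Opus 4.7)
The plan is to unfold the definitions of the clearing prices and allocated quantities from the auction mechanism in Sec.~\ref{sec:auction_mechanism}, then invoke the defining inequality~\eqref{eqn:j'} of the critical sell-index $j'$ to show the auctioneer's surplus is componentwise non-negative. First I would split on whether the auction produces any winner for VM type $m$. If the winning condition fails (fewer than two sell-bids $\vartheta_1^m(t),\vartheta_2^m(t)$ are no larger than $\theta_2^m(t)$), then by Eqn.~\eqref{eqn:buy-price}--\eqref{eqn:sell-vol} we have $\hat b_i^m(t)=\hat s_i^m(t)=0$ for every cloud $i$, and the inequality holds trivially with equality.

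Next, in the non-trivial case, I would identify the winners explicitly: exactly one buyer wins (the one holding the highest buy price $\theta_1^m(t)$) and it is charged a uniform per-VM price $\theta_2^m(t)$ for $\hat\gamma_i^m(t)=\sum_{j=1}^{j'-1}L_j^m(t)$ VMs. The winning sellers are those indexed $1,\ldots,j'-1$ in the ascending sell-price ordering, each paid the uniform per-VM price $\vartheta_{j'}^m(t)$, selling $L_j^m(t)$ units respectively. Summing over all clouds,
\begin{align*}
\sum_{i\in[1,F]} \hat b_i^m(t)\hat\gamma_i^m(t) &= \theta_2^m(t)\sum_{j=1}^{j'-1}L_j^m(t),\\
\sum_{i\in[1,F]} \hat s_i^m(t)\hat\eta_i^m(t) &= \vartheta_{j'}^m(t)\sum_{j=1}^{j'-1}L_j^m(t),
\end{align*}
so the auctioneer's surplus for type-$m$ VMs equals $\bigl(\theta_2^m(t)-\vartheta_{j'}^m(t)\bigr)\sum_{j=1}^{j'-1}L_j^m(t)$.

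Finally I would invoke~\eqref{eqn:j'}, which by construction gives $\vartheta_{j'}^m(t)\le\theta_2^m(t)$, and note that the allocated quantity $\sum_{j=1}^{j'-1}L_j^m(t)\ge 0$ since each $L_j^m(t)\ge 0$. This yields $\sum_{i\in[1,F]}[\hat b_i^m(t)\hat\gamma_i^m(t)-\hat s_i^m(t)\hat\eta_i^m(t)]\ge 0$, as required. There is no real obstacle here; the only subtlety is to be precise about the uniform-price structure used in Eqn.~\eqref{eqn:buy-price}--\eqref{eqn:sell-price} and to handle the degenerate no-winner case explicitly so that the argument is airtight for every $m\in[1,M]$.
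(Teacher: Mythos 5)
Your proof is correct and follows essentially the same route as the paper's: compute the aggregate buyer payment $\theta_2^m(t)\sum_{j=1}^{j'-1}L_j^m(t)$ and aggregate seller receipts $\vartheta_{j'}^m(t)\sum_{j=1}^{j'-1}L_j^m(t)$, then conclude from $\vartheta_{j'}^m(t)\le\theta_2^m(t)$ (Eqn.~(\ref{eqn:j'})). The only difference is that you make the degenerate no-winner case explicit, which the paper leaves implicit; otherwise the argument is the same.
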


\opt{long}{
This theorem is proved based on Eqn.~(\ref{eqn:j'}) - (\ref{eqn:sell-price}),  
 with details in Appendix \ref{appendix:budget}.
}

\subsection{SLA Guarantee}

\begin{lemma}\label{lemma:bounded-queue}
Let
 $Q_i^{s(max)} = V \xi_i^s + R_i^s$ and $Z_i^{s(max)} = V  \xi_i^s+\epsilon_s$. 
 If $D_i^{s(max)} \geq \max\{R_i^s, \epsilon_s\}$, 
 each job queue $Q_i^s(t)$ and each virtual queue $Z_i^s(t)$ are upper-bounded by $Q_i^{s(max)}$ and $Z_i^{s(max)}$, respectively, in $t\in[0,T-1]$, $\forall i\in [1, F], s\in [1, S]$.
\end{lemma}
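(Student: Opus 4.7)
The plan is to prove both bounds simultaneously by induction on the time slot $t$, exploiting the structure of the job-dropping rule (\ref{eqn:drop}) together with the queueing laws (\ref{eqn:queue1}) and (\ref{eqn:queue2}). The base case $t=0$ is trivial if one assumes empty queues initially; otherwise, one simply assumes the initial conditions satisfy the bounds. For the inductive step, the key observation is that (\ref{eqn:drop}) triggers aggressive draining at rate $D_i^{s(max)}$ whenever $Q_i^s(t)+Z_i^s(t)>V\xi_i^s$, and the assumption $D_i^{s(max)}\geq\max\{R_i^s,\epsilon_s\}$ guarantees that this drain outpaces any single-slot arrival into either queue.

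For the job queue $Q_i^s(t+1)$, I would split on whether $Q_i^s(t)\leq V\xi_i^s$ or $Q_i^s(t)> V\xi_i^s$. In the first case, the arrivals $r_i^s(t)\leq R_i^s$ push the queue up by at most $R_i^s$, yielding $Q_i^s(t+1)\leq V\xi_i^s+R_i^s=Q_i^{s(max)}$ regardless of what dropping/scheduling does. In the second case, since $Z_i^s(t)\geq 0$, the dropping trigger $Q_i^s(t)+Z_i^s(t)>V\xi_i^s$ fires, so $D_i^s(t)=D_i^{s(max)}\geq R_i^s$; then (\ref{eqn:queue1}) gives $Q_i^s(t+1)\leq\max\{Q_i^s(t)-D_i^{s(max)},0\}+R_i^s\leq Q_i^s(t)$, which is bounded by $Q_i^{s(max)}$ via the inductive hypothesis.

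For the virtual queue $Z_i^s(t+1)$, I would perform the analogous case split on $Z_i^s(t)\lessgtr V\xi_i^s$. If $Z_i^s(t)\leq V\xi_i^s$, then from (\ref{eqn:queue2}) the queue grows by at most $\mathbf{1}_{\{Q_i^s(t)>0\}}\epsilon_s\leq\epsilon_s$ in one slot (the other terms inside the max can only shrink it), giving $Z_i^s(t+1)\leq V\xi_i^s+\epsilon_s=Z_i^{s(max)}$. If $Z_i^s(t)>V\xi_i^s$, the dropping trigger again fires, so $D_i^s(t)=D_i^{s(max)}\geq\epsilon_s$. I would then handle the two sub-cases in (\ref{eqn:queue2}): when $Q_i^s(t)>0$, the $+\epsilon_s$ term is cancelled by $-D_i^{s(max)}$, so $Z_i^s(t+1)\leq Z_i^s(t)$; when $Q_i^s(t)=0$, the queue can only shrink by the large capacity term. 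In both sub-cases the inductive hypothesis closes the bound.

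The main obstacle is a careful bookkeeping of the indicator terms $\mathbf{1}_{\{Q_i^s(t)>0\}}$ and $\mathbf{1}_{\{Q_i^s(t)=0\}}$ inside the virtual-queue law, since only one of $\epsilon_s$ or the capacity drain term is active at a time; one must verify that the \emph{worst} case for growth of $Z_i^s$ (namely $Q_i^s(t)>0$ so that $\epsilon_s$ is added) is still dominated by the drop rate $D_i^{s(max)}\geq\epsilon_s$ whenever the trigger is on. Once both bounds are established, the SLA constraint (\ref{eqn:SLA}) follows by the standard $\epsilon$-persistent queue argument \cite{neely-infocom11}: bounded $Z_i^s(t)$ together with bounded $Q_i^s(t)$ translates into a deterministic worst-case response delay $d_s$ for every job, provided $\epsilon_s$ is chosen appropriately.
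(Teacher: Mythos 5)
Your proof is essentially the same as the paper's: induction on $t$ with a case split on $Q_i^s(t)\lessgtr V\xi_i^s$ and $Z_i^s(t)\lessgtr V\xi_i^s$, using $Z_i^s(t)\geq 0$ to show the drop trigger fires whenever $Q_i^s(t)>V\xi_i^s$, and using $D_i^{s(max)}\geq\max\{R_i^s,\epsilon_s\}$ to show the drain dominates any single-slot growth. One small wrinkle in your $Q$-queue case 2: the asserted chain $\max\{Q_i^s(t)-D_i^{s(max)},0\}+R_i^s\leq Q_i^s(t)$ can fail when $Q_i^s(t)<D_i^{s(max)}$ and $R_i^s>Q_i^s(t)$ (the max evaluates to $0$ and you are left with $R_i^s$, which need not be $\leq Q_i^s(t)$). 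The conclusion is unaffected, since in that sub-case $Q_i^s(t+1)\leq R_i^s\leq V\xi_i^s+R_i^s=Q_i^{s(max)}$ directly — the paper bounds against $Q_i^{s(max)}$ rather than $Q_i^s(t)$ precisely to sidestep this. With that one-line repair your argument matches the paper's.
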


\vspace{1mm}
This lemma can be proved by analyzing the job drop decision in (\ref{eqn:drop}) and the queue updates in (\ref{eqn:queue1})(\ref{eqn:queue2}). The condition $D_i^{s(max)} \geq \max\{R_i^s, \epsilon_s\}$ ensures that, when the queue lengths grow to satisfy the job drop condition, any further increase on the queues, \emph{e.g.}, $R_i^s$ and $\epsilon_s$, can be balanced by dropping enough number of jobs at the rate of $D_i^{s(max)}$. \opt{long}{Detailed proof is included in Appendix \ref{appendix:bounded-queue}.}

\vspace{1mm}
\begin{theorem}[SLA Guarantee]\label{theorem:SLA}
Each job of type $s\in [1, S]$ is either scheduled or dropped with Alg.~\ref{alg:profit} before its maximum response delay $d_s$, if 
 we set $\epsilon_s = \frac{Q_i^{s(max)}+Z_i^{s(max)}}{d_s}$.
\end{theorem}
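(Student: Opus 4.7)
My plan is to argue by contradiction, combining the FIFO service discipline with the growth rate of the virtual queue $Z_i^s$ implied by the $\epsilon$-persistence construction (\ref{eqn:queue2}). Suppose, for some cloud $i$ and some type $s$, a job arrives at slot $t_0$ and is still sitting in $Q_i^s$ at the start of slot $t_0+d_s+1$, i.e.\ it has been neither scheduled (via $\mu_{ij}^s$) nor dropped (via $D_i^s$) during the $d_s$ slots $t_0+1,\ldots,t_0+d_s$. By the queueing law (\ref{eqn:queue1}) the arrival enters the FIFO queue at the end of slot $t_0$, so its position is at most $Q_i^s(t_0+1)$, and by Lemma~\ref{lemma:bounded-queue} this position is at most $Q_i^{s(max)}$. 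Since the job is processed only when every job ahead of it has been processed, the total amount of service plus drops it witnesses,
\[
\sum_{t=t_0+1}^{t_0+d_s}\Bigl[\sum_{j=1}^{F}\mu_{ij}^s(t)+D_i^s(t)\Bigr],
\]
must be strictly less than $Q_i^s(t_0+1)\le Q_i^{s(max)}$.

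Next I would exploit (\ref{eqn:queue2}). Because the job is still unserved, by FIFO we have $Q_i^s(t)>0$ for every $t\in[t_0+1,t_0+d_s]$, so $\mathbf{1}_{\{Q_i^s(t)>0\}}=1$ and $\mathbf{1}_{\{Q_i^s(t)=0\}}=0$ throughout this window. Dropping the outer $\max\{\cdot,0\}$ gives the one-slot bound
\[
Z_i^s(t+1)\ge Z_i^s(t)+\epsilon_s-\sum_{j=1}^{F}\mu_{ij}^s(t)-D_i^s(t).
\]
Telescoping from $t_0+1$ to $t_0+d_s$ yields
\[
Z_i^s(t_0+d_s+1)\ge Z_i^s(t_0+1)+d_s\,\epsilon_s-\sum_{t=t_0+1}^{t_0+d_s}\Bigl[\sum_{j}\mu_{ij}^s(t)+D_i^s(t)\Bigr].
\]

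Plugging in the service bound from the first paragraph and the choice $\epsilon_s=(Q_i^{s(max)}+Z_i^{s(max)})/d_s$ gives
\[
Z_i^s(t_0+d_s+1)>0+d_s\epsilon_s-Q_i^{s(max)}=Z_i^{s(max)},
\]
which contradicts the upper bound $Z_i^s(t)\le Z_i^{s(max)}$ guaranteed by Lemma~\ref{lemma:bounded-queue}. Hence every type-$s$ job is either scheduled or dropped within $d_s$ slots of its arrival, which is exactly the SLA constraint (\ref{eqn:SLA}).

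The only delicate steps are the two bookkeeping facts invoked above: (i) correctly identifying the position of the tagged job within the FIFO queue so that the total service over the window can be upper-bounded by $Q_i^{s(max)}$, and (ii) making sure the indicator-function terms in the update of $Z_i^s$ evaluate as claimed throughout the window, so that the lower bound on $Z_i^s(t+1)-Z_i^s(t)$ is clean. Both of these rely on the FIFO scheduling of $Q_i^s$ and on the queue upper bound established in Lemma~\ref{lemma:bounded-queue}; once they are in place the telescoping calculation is immediate.
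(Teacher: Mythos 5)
Your argument matches the paper's Appendix~\ref{appendix:SLA} proof essentially line for line: a contradiction via FIFO, the lower bound $Z_i^s(\tau+1)\ge Z_i^s(\tau)+\epsilon_s-\sum_j\mu_{ij}^s(\tau)-D_i^s(\tau)$ obtained from (\ref{eqn:queue2}) when $Q_i^s(\tau)>0$, telescoping over the $d_s$-slot window, and closing with the two queue bounds from Lemma~\ref{lemma:bounded-queue}. The only cosmetic difference is that you rearrange the final inequality to contradict $Z_i^s\le Z_i^{s(max)}$ directly, whereas the paper rearranges it into a contradiction with the assumed value of $\epsilon_s$; these are the same step.
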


\vspace{1mm}
This theorem can be proved based on Lemma \ref{lemma:bounded-queue} and the $\epsilon$-persistence queue techniques \cite{neely-infocom11}. The condition on $\epsilon_s$ is to ensure that 
the queue lengths can grow to satisfy the job drop condition, \emph{i.e.}, $Q_i^s+Z_i^s(t) > V \xi_i^s$, if some jobs remain unscheduled in the last $d_s$ slots. Note that a cloud only drops jobs strategically, to balance the loss due to the job drop penalties and the gain in saving VMs for other jobs. \opt{long}{For more details, please refer to Appendix \ref{appendix:SLA}.}

\subsection{Optimality of Individual Profit and Social Welfare}
\label{sec:optanalysis}

\begin{theorem}[Individual Profit Optimality]\label{theorem:profit}
Let $\Omega_i^*$ be the offline optimum of time-averaged profit of cloud $i\in [1, F]$, obtained in a truthful, individual-rational, ex-post budget-balanced double auction, with complete information on its own job arrivals and prices in the entire time span $[0, T-1]$. The dynamic Algorithm \ref{alg:profit} can achieve a time-averaged profit $\Omega_i$ for cloud $i$ within a constant gap $B_i/V$ to $\Omega_i^*$, \emph{i.e.},

\vspace{-4mm}{\small
\begin{align*}
\Omega_i \geq \Omega_i^* - B_i/V,
\end{align*}}\vspace{-6mm}

\noindent where $V>0$ and {\small $B_i=\frac{1}{2}\sum_{s\in [1, S]}[[\sum_{j=1}^F C_j^{m_s}N_j^{m_s}/g_s+D_i^{s(max)}]^2 + [R_i^s]^2+[\epsilon_s]^2 + [D_i^{s(max)}+\sum_{j=1}^F C_j^{m_s}N_j^{m_s}/g_s]^2]$} is a constant.

\end{theorem}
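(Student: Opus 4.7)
The plan is to exploit the drift-plus-penalty inequality (\ref{eqn:drift-plus-penalty1}) that has already been derived for cloud $i$. Since Algorithm \ref{alg:profit} is constructed precisely to maximize $\varphi_1^i(t)+\varphi_2^i(t)+\varphi_3^i(t)$ in every slot, it minimizes the right-hand side of (\ref{eqn:drift-plus-penalty1}) over all feasible per-slot decisions available to cloud $i$. This is the heart of the drift-plus-penalty technique: among all slot-by-slot policies that cloud $i$ could follow subject to the same auction mechanism and feasibility constraints (\ref{eqn:drop-cons})--(\ref{eqn:capacity3}), Algorithm \ref{alg:profit} achieves the tightest bound on the Lyapunov drift minus $V$ times the instantaneous profit.

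Next I would invoke the existence of an optimal stationary randomized benchmark. Because the job arrival processes, the operational cost $\beta_i(t)$, and (under truthful bidding guaranteed by Theorem \ref{theorem:truthfulness}) the resulting auction environment faced by cloud $i$ are all ergodic, standard arguments from \cite{book2010} guarantee a stationary randomized policy $\pi^*$ for cloud $i$ that (i) submits only true-value bids --- without loss of generality by Theorems \ref{theorem:truthfulness} and \ref{theorem:rationality} --- and (ii) attains an expected per-slot profit arbitrarily close to $\Omega_i^*$ while meeting every feasibility constraint in expectation. Substituting the expected values produced by $\pi^*$ into the RHS of (\ref{eqn:drift-plus-penalty1}) (legitimate since Algorithm \ref{alg:profit} minimizes the RHS over \emph{all} feasible decisions) yields $\mathbb{E}\{\Delta(\Theta_i(t))\mid \Theta_i(t)\} - V\cdot\mathbb{E}\{\text{profit}_i(t)\mid\Theta_i(t)\}\le B_i - V\,\Omega_i^*$, after the fixed terms on the RHS of (\ref{eqn:drift-plus-penalty1}) cancel against the analogous terms in the expression for the profit.

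I would then take total expectations, sum the resulting inequality over $t=0,1,\ldots,T-1$, telescope the Lyapunov drift $L(\Theta_i(T))-L(\Theta_i(0))$, and divide by $T$. The Lyapunov function is non-negative and, by Lemma \ref{lemma:bounded-queue}, uniformly bounded in $T$ (since $Q_i^s(t)$ and $Z_i^s(t)$ are bounded by $Q_i^{s(max)}$ and $Z_i^{s(max)}$, respectively); hence $\mathbb{E}\{L(\Theta_i(T))-L(\Theta_i(0))\}/T\to 0$ as $T\to\infty$. Rearranging what remains gives exactly $\Omega_i\ge\Omega_i^*-B_i/V$, the claimed bound.

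The main obstacle I anticipate is justifying the stationary randomized benchmark inside the auction environment. Unlike the textbook stand-alone setting, cloud $i$'s attainable profit per slot depends on clearing prices determined jointly by every other cloud's bids, so one must fold that exogenous randomness into the ``state'' against which $\pi^*$ is defined, in order to argue that a stationary randomized rule for cloud $i$ alone is both well-defined and capable of attaining $\Omega_i^*$. Truthfulness (Theorem \ref{theorem:truthfulness}) simplifies this step, because it lets us restrict attention to policies that always submit true-value bids, matching the definition of $\Omega_i^*$ in the theorem statement. Once this existence point is settled, the remaining steps are routine drift-plus-penalty bookkeeping.
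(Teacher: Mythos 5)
Your approach is the same as the paper's: invoke the drift-plus-penalty inequality~(\ref{eqn:drift-plus-penalty1}), note that Alg.~\ref{alg:profit} minimizes its RHS over all feasible per-slot decisions, compare against a stationary randomized benchmark that attains $\Omega_i^*$, telescope, and divide by $TV$. One point you gloss over: after substituting the benchmark's decisions into the RHS, the $Q_i^s(t)$- and $Z_i^s(t)$-weighted terms do not ``cancel against the analogous terms in the expression for the profit.'' They leave a residual $\sum_s\big[Q_i^s(t)(r_i^s(t)-\sum_j\mu_{ij}^{s*}(t)-D_i^{s*}(t))+Z_i^s(t)(\epsilon_s-\sum_j\mu_{ij}^{s*}(t)-D_i^{s*}(t))\big]$, which the paper's argument disposes of only after taking time-averages, using that the benchmark stationary randomized policy satisfies $\bar r_i^s\le\sum_j\bar\mu_{ij}^{s*}+\bar D_i^{s*}$ and $\epsilon_s\le\sum_j\bar\mu_{ij}^{s*}+\bar D_i^{s*}$ so the cross term is non-positive in the limit; your phrase ``meeting every feasibility constraint in expectation'' implicitly carries this weight but should be made explicit. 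Also note that the paper discards the telescoped boundary term simply from $\mathbb{E}\{L(\Theta_i(T))\}\ge0$ and $L(\Theta_i(0))=0$ rather than needing the uniform queue bounds from Lemma~\ref{lemma:bounded-queue}, though your version works too.
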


\vspace{1mm}
The proof to this theorem is rooted in the Lyapunov optimization theory \cite{book2010}. The gap $B_i/V$ can be close to zero by fixing $\epsilon_s$  and increasing $V$. Detailed proof is included in \opt{short}{\cite{tech-report}}\opt{long}{Appendix \ref{appendix:profit}}.


\vspace{1mm}
\begin{theorem}[Social Welfare Optimality of Alg.~\ref{alg:social}]\label{theorem:welfare-alg2}
Let $\Pi^*$ be the offline optimum of the time-averaged social welfare in (\ref{eqn:social-max}), obtained with full information of the federation over the entire time span $[0, T-1]$. The time-averaged social welfare achieved by all clouds by running Alg.~\ref{alg:social}, approaches the offline-optimal social welfare $\Pi^*$, by a constant gap $B/V$, \emph{i.e.},


\vspace{-4mm}{\small
\begin{align*}
\Pi \geq \Pi^* - B/V,
\end{align*}}\vspace{-6mm}

\noindent where $V>0$ and $B=\sum_{i\in [1, F]}B_i$. $B_i$ is defined in Theorem \ref{theorem:profit}, $\forall i\in[1,F]$.
\end{theorem}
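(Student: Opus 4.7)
The plan is to mirror the proof of Theorem~\ref{theorem:profit} but at the federation level, leveraging the drift-plus-penalty inequality (\ref{eqn:drift-plus-penalty2}) already derived for Alg.~\ref{alg:social}. First, I would observe that by construction, Alg.~\ref{alg:social} solves the one-shot problem (\ref{eqn:profit-oneslot2}) optimally in each slot $t$: Eqns.~(\ref{eqn:server-provision2}), (\ref{eqn:schedule3}), (\ref{eqn:drop2}) are exactly the maximizers of $\varphi_1(t)+\varphi_2(t)$ over the feasible set defined by constraints (\ref{eqn:drop-cons})--(\ref{eqn:capacity2}). Hence, among \emph{all} feasible decisions in slot $t$ (including any offline/omniscient decision), Alg.~\ref{alg:social} achieves the smallest right-hand-side of (\ref{eqn:drift-plus-penalty2}).

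Second, I would invoke the standard existence result from Lyapunov optimization theory (\cite{book2010}): because the job arrival process $r_i^s(t)$ and the electricity cost process $\beta_i(t)$ are ergodic and bounded, for any $\delta>0$ there exists a stationary, randomized (possibly omniscient) policy $\pi^*$ whose per-slot decisions $\{n_i^{m*}(t), \mu_{ij}^{s*}(t), D_i^{s*}(t)\}$ are feasible, keep all queues mean-rate stable, and whose expected one-slot penalty equals $-\Pi^* + O(\delta)$ where $\Pi^*$ is the offline social-welfare optimum in (\ref{eqn:social-max}). Because this policy is feasible, plugging its decisions into the RHS of (\ref{eqn:drift-plus-penalty2}) yields an upper bound on $\Delta(\Theta(t)) - V\cdot(\text{social welfare per slot})$ that, after taking conditional expectation and using the stationary property, becomes $B - V\Pi^* + O(V\delta)$.

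Third, since Alg.~\ref{alg:social} minimizes the RHS of (\ref{eqn:drift-plus-penalty2}) over all feasible decisions, its achieved value is at least as good as that under $\pi^*$. Therefore, letting $\mathbb{E}\{\text{SW}(t)\}$ denote the per-slot expected social welfare under Alg.~\ref{alg:social},
\begin{align*}
\mathbb{E}\{\Delta(\Theta(t))\} - V\cdot \mathbb{E}\{\text{SW}(t)\} \leq B - V\Pi^* + O(V\delta).
\end{align*}
Summing this telescoping inequality over $t=0,1,\ldots,T-1$, dividing by $VT$, rearranging, and using the nonnegativity of $L(\Theta(T))$ together with boundedness of $L(\Theta(0))$ gives
\begin{align*}
\frac{1}{T}\sum_{t=0}^{T-1}\mathbb{E}\{\text{SW}(t)\} \geq \Pi^* - \frac{B}{V} - \frac{L(\Theta(0))}{VT} - O(\delta).
\end{align*}
Letting $T\to\infty$ and then $\delta\to 0$ yields $\Pi \geq \Pi^* - B/V$ as claimed.

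The main obstacle I expect is not the algebra but the clean invocation of the existence of the near-optimal stationary policy $\pi^*$: one must argue that the feasibility region defined by (\ref{eqn:drop-cons})--(\ref{eqn:capacity2}) together with the ergodicity/boundedness hypotheses on $r_i^s(t),\beta_i(t)$ meets the regularity conditions of Neely's framework, so that $\Pi^*$ is actually attainable (to within $\delta$) by some stationary randomized policy. Once this is in hand, the rest is a direct application of the drift-plus-penalty template, fully analogous to the per-cloud proof of Theorem~\ref{theorem:profit}, with $B_i$ replaced by $B=\sum_i B_i$ since the Lyapunov function aggregates queues across all clouds. The bound $B/V$ can be made arbitrarily small by choosing $V$ large, at the expected cost of larger queue backlogs (and hence larger $\epsilon_s$ in the SLA calibration of Theorem~\ref{theorem:SLA}).
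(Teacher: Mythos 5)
Your proposal is correct and follows essentially the same route as the paper's own proof: both apply the drift-plus-penalty template, compare Alg.~\ref{alg:social}'s per-slot decisions against an optimal stationary randomized policy plugged into the RHS of (\ref{eqn:drift-plus-penalty2}), telescope, divide by $VT$, and let $T\to\infty$. The only presentational differences are minor: you phrase the benchmark policy via a $\delta$-approximation and argue the queue cross-terms vanish by appealing to mean-rate stability, whereas the paper keeps the cross-terms $\bar{Q}_i^s(\bar{r}_i^s - \sum_j\bar{\mu}_{ij}^{s*} - \bar{D}_i^{s*})$ and $\bar{Z}_i^s(\epsilon_s - \sum_j\bar{\mu}_{ij}^{s*} - \bar{D}_i^{s*})$ explicitly and drops them using the sign conditions $\bar{r}_i^s \leq \sum_j\bar{\mu}_{ij}^{s*} + \bar{D}_i^{s*}$ and $\epsilon_s \leq \sum_j\bar{\mu}_{ij}^{s*} + \bar{D}_i^{s*}$ guaranteed by the stationary policy. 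These are equivalent in substance.
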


\vspace{1mm}
The proof to this theorem is also based on the Lyapunov optimization theory \cite{book2010}. The gap $B/V$ can be close to zero by fixing $\epsilon_s$  and increasing $V$. Detailed proof is included in \opt{short}{\cite{tech-report}}\opt{long}{Appendix \ref{appendix:social}}.

\vspace{1mm}
\begin{theorem}[Asymptotic Optimality in Social Welfare of Alg.~\ref{alg:profit}]\label{theorem:welfare}
Let $\Pi^*$ be the offline optimum of the time-averaged social welfare in (\ref{eqn:social-max}), obtained with full information of the federation over the entire time span $[0, T-1]$. Suppose all clouds are homogenous, \emph{i.e.}, with the same number of servers ($N_i^m$) and the same maximum per-server VM provisioning ($C_i^m$) for each VM type $m$, with i.i.d. service prices, job arrivals and operational costs. When the number of clouds, $F$, grows, the sum of time-averaged profits achieved by all clouds by running Alg.~\ref{alg:profit} under the double auction mechanism in Sec.~\ref{sec:auction_mechanism}, approaches the offline-optimal social welfare $\Pi^*$, by a constant gap $B/V$, \emph{i.e.},


\vspace{-4mm}{\small
\begin{align*}
\Pi \geq \Pi^* - B/V,
\end{align*}}\vspace{-6mm}

\noindent where $V>0$ and $B=\sum_{i\in [1, F]}B_i$. $B_i$ is defined in Theorem \ref{theorem:profit}, $\forall i\in[1,F]$.
\end{theorem}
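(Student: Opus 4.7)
The plan is to start from Theorem~\ref{theorem:profit} applied to every cloud $i\in[1,F]$ individually, giving $\Omega_i \geq \Omega_i^* - B_i/V$, where $\Omega_i^*$ is cloud $i$'s offline-optimal time-averaged profit attainable under any truthful, individual-rational, ex-post budget-balanced double auction. Summing over all clouds yields $\sum_i \Omega_i \geq \sum_i \Omega_i^* - B/V$ with $B=\sum_i B_i$. The inter-cloud VM payments cancel in the aggregate (buyer expenditure equals seller income under the single-buyer/multi-seller clearing rule of Sec.~\ref{sec:auction_mechanism}), so $\sum_i \Omega_i$ equals exactly the time-averaged social welfare $\Pi$ realized by running Alg.~\ref{alg:profit} under the auction. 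Hence the theorem reduces to showing that, in the homogeneous i.i.d. large-$F$ limit, $\sum_{i=1}^F \Omega_i^* \to \Pi^*$, so that the $B/V$ gap is the only remaining slack.

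Next I would exploit the homogeneity to reduce this to a per-cloud efficiency statement. Since all clouds share identical capacities $N^m,C^m$ and i.i.d. processes $r_i^s(t),p_i^s(t),\beta_i(t)$, by symmetry of the auction $\Omega_i^*\equiv\omega^*$ and $\Pi^*/F\to\pi^*$ for well-defined per-cloud rates, and the target inequality becomes $\omega^*\geq\pi^*-o(1)$ as $F\to\infty$. I would prove this via a coupling argument: construct an explicit feasible strategy for cloud $i$ that tracks, slot by slot, the per-cloud decisions used in Alg.~\ref{alg:social} (the benchmark from Theorem~\ref{theorem:welfare-alg2}), with any cross-cloud VM transfers demanded by Alg.~\ref{alg:social} implemented through the auction. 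Under homogeneity, in each slot the cloud with the largest weighted backlog $W^{s'}(t)$ has the highest true buy-value from \eqref{eqn:true-buy}, and the many remaining clouds have surplus capacity whose true sell-values are pinned to $\beta(t)/C^m$ by \eqref{eqn:true-sell}. Hence the single-buyer/multi-seller auction in Sec.~\ref{sec:auction_mechanism} realizes exactly the kind of match Alg.~\ref{alg:social} uses, and only the payment levels differ.

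The heart of the argument, and the main obstacle, is quantifying the payment inefficiency: the buyer pays $\theta_2^m(t)$, not its own value, and sellers receive $\vartheta_{j'}^m(t)$, not their own costs, so per-slot social welfare under the auction is below what Alg.~\ref{alg:social} extracts. I would bound this gap using a large-market concentration argument: when $F$ is large and the state distributions are i.i.d., the order statistics $\theta_1^m(t)$ and $\theta_2^m(t)$ of true buy-values concentrate together, and $\vartheta_{j'}^m(t)$ concentrates near the marginal seller's cost $\beta_i(t)/C_i^m$, so the per-slot efficiency loss is $o(1)$ per cloud. The classical Rustichini--Satterthwaite--Williams-style asymptotic efficiency of uniform-price double auctions is the right template, but it must be adapted because our bids are endogenously derived from Lyapunov queue weights rather than exogenous scalar valuations; I would handle this by noting that \eqref{eqn:true-buy}--\eqref{eqn:true-vol-s} express true values as deterministic functions of per-cloud state, so i.i.d.\ homogeneity of the primitives transfers to i.i.d.\ homogeneity of bids at each slot, after which a standard extreme-value/concentration bound applies.

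Combining the three ingredients, per-slot auction inefficiency of $o(1)$ per cloud, the drift-plus-penalty bound of Theorem~\ref{theorem:profit} summed across clouds, and the cancellation of inter-cloud payments, would yield $\Pi \geq \Pi^* - B/V - o(1)\cdot F/F = \Pi^*-B/V$ in the limit $F\to\infty$, which is exactly the stated claim. The $V$-parameter trade-off is inherited directly from Theorem~\ref{theorem:profit}, and by fixing $\epsilon_s$ and taking $V$ large after taking $F$ large, the constant gap $B/V$ can be driven arbitrarily close to zero.
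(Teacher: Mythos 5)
Your proposal takes a genuinely different route from the paper, but it has both a factual error and a structural gap that together undermine the argument.

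The factual error: you assert that inter-cloud VM payments cancel in the aggregate because ``buyer expenditure equals seller income under the single-buyer/multi-seller clearing rule.'' This is false. The mechanism in Sec.~\ref{sec:auction_mechanism} is \emph{ex-post budget balanced}, not exactly balanced: the winning buyer is charged $\theta_2^m(t)$ per VM while winning sellers receive $\vartheta_{j'}^m(t) \le \theta_2^m(t)$, so the auctioneer in general retains a strictly positive surplus (cf.\ Theorem~\ref{theorem:budget} and the \$15 vs.\ \$13 example in Table~\ref{table:example}). Hence $\sum_i\Omega_i$ is the realized social welfare \emph{minus} auctioneer surplus, not the social welfare itself.

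The structural gap: starting from $\sum_i\Omega_i \geq \sum_i\Omega_i^* - B/V$ (the sum of the Theorem~\ref{theorem:profit} bounds), closing the argument requires $\sum_i\Omega_i^* \to \Pi^*$. But $\Omega_i^*$ is cloud $i$'s offline-optimal \emph{profit under the auction}, so $\sum_i\Omega_i^*$ equals (at best) the offline social welfare achieved through the auction, diminished by both (i) the auction's matching inefficiency (sellers with prices between $\theta_2^m$ and $\theta_1^m$ don't trade) and (ii) the auctioneer's retained surplus. Your concentration argument addresses (i), but you never argue that (ii) vanishes; the per-VM spread $\theta_2^m(t)-\vartheta_{j'}^m(t)$ multiplied by an $O(F)$ trading volume does not obviously shrink as $F\to\infty$, and nothing in your sketch bounds it.

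The paper sidesteps both issues by \emph{not} summing the per-cloud profit bounds. Instead it works directly with the drift-plus-penalty inequality for the social-welfare objective, Eqn.~(\ref{eqn:drift-plus-penalty2}), in which the inter-cloud payment terms never appear---they cancel at the level of the objective of problem (\ref{eqn:social-max}), not at the level of realized transfers. The proof then shows that the decisions produced by Alg.~\ref{alg:profit} plus the double auction asymptotically minimize the RHS of (\ref{eqn:drift-plus-penalty2}) by comparing $\varphi_1(t)$ achieved by Alg.~\ref{alg:profit} versus by the centralized Alg.~\ref{alg:social}: the per-slot gap is exactly the welfare contributed by VMs held by sellers indexed $j'$ through $k$, which is bounded using order-statistics spacings of the (endogenous but, under homogeneity, i.i.d.) bid prices, giving $O(1/F)$. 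Once the RHS is asymptotically minimized, the standard Lyapunov telescoping argument used for Theorem~\ref{theorem:welfare-alg2} delivers $\Pi\ge\Pi^*-B/V$. Your large-market concentration intuition is aligned with that central step, but the decomposition you chose as the outer scaffold does not support it without further, unaddressed work on the auctioneer surplus.
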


\vspace{1mm}
To prove the theorem, we demonstrate that, when the number of clouds goes to infinity, the one-shot social welfare obtained with Alg.~\ref{alg:profit} is the same as that achieved by the dynamic benchmark Alg.~\ref{alg:social} in the same time slot. \opt{short}{Details are in \cite{tech-report}.}\opt{long}{Details are in Appendix \ref{appendix:welfare}.}




\section{Performance Evaluation}\label{sec:simulation}


\subsection{Simulation Setup}

We carry out trace-driven simulation studies based on Google cluster-usage data \cite{clusterdata:Wilkes2011}\cite{clusterdata:Reiss2011}, which record jobs submitted to the Google cluster, with information on their resource demands (CPU, RAM, etc.) and relative charges. 
 We translate the data into concrete job arrival rates, resource types and prices, to drive our simulations as follows.

We consider $24$ types of jobs ($<m_s, g_s, d_s>$), $6$ VM types ($m_s$) combined from $\{$\emph{small, median, large}$\}$ \emph{CPU} and $\{$\emph{small, large}$\}$ \emph{Memory}, and two \emph{SLA} levels ($d_s$), corresponding to a larger maximum respond delay and a smaller maximum response delay at half of the former. Each job requires either 1 VM or 2 VMs concurrently ($g_s$).

There are $10$ clouds in the federation. One time slot is $1$ hour. The number of servers in each cloud that provision VMs of each type ranges within $[800, 1000]$. Each server can provide $30$ \emph{small-memory} VMs or $10$ \emph{large-memory} VMs. The VM charge to the customer is decided by multiplying $g_s$ by the relative VM price in the Google data, and then by the unit VM price in the range of [0.05, 0.08] \$/h. The penalty for dropping a job is set to the maximum per-job VM charge in the system. Operational costs are set according to the electricity prices at $10$ different geographic locations provided in \cite{ferc}, which vary on a hourly basis. Each server consumes power at 1 KW/h. 

The number of job arrivals in each hour to the federation is set according to the cumulated job requests of each type submitted to the Google cluster during that hour, in the rough range of $[40000, 90000]$ requests per hour. We randomly assign each arrived job to one of the $10$ clouds, following a heavy-tailed distribution. 
 In operating the virtual queues, we set $\epsilon_s=1000$ for jobs requiring low response delay, and $\epsilon_s=500$ for those of long delays. The maximum number of job drops per hour is $1000$ for all job types.


For comparison purposes, we also implement a simpler heuristic algorithm for each cloud to bid in the double auction and to schedules its jobs/servers: The cloud decides a value for each unscheduled job in a queue as the penalty to drop it if the next time slot is the deadline for scheduling, or the charged price upon its arrival otherwise. The true values of buy/sell prices for a type-$m$ VM at this cloud are set to the same, as the largest average value of jobs in a queue, among all job queues requiring type-$m$ VM(s). The quantity of VMs in a buy-bid is set to the number of unscheduled jobs in the queue with the largest average value as computed above. The quantity of VMs in a sell-bid is the overall number of VMs of the type that the cloud can provide. All VMs purchased via the auction are used to serve jobs from the queue with the maximum average value. 
  A cloud maintains the minimum number of servers to support those jobs, and only drops a job when its maximum response delay is reached.
  
\vspace{-4mm}
\begin{figure}[!h]
\begin{center}
 \subfigure[$V=4\times 10^6$]
 { \includegraphics[angle=0,width=0.46\columnwidth]{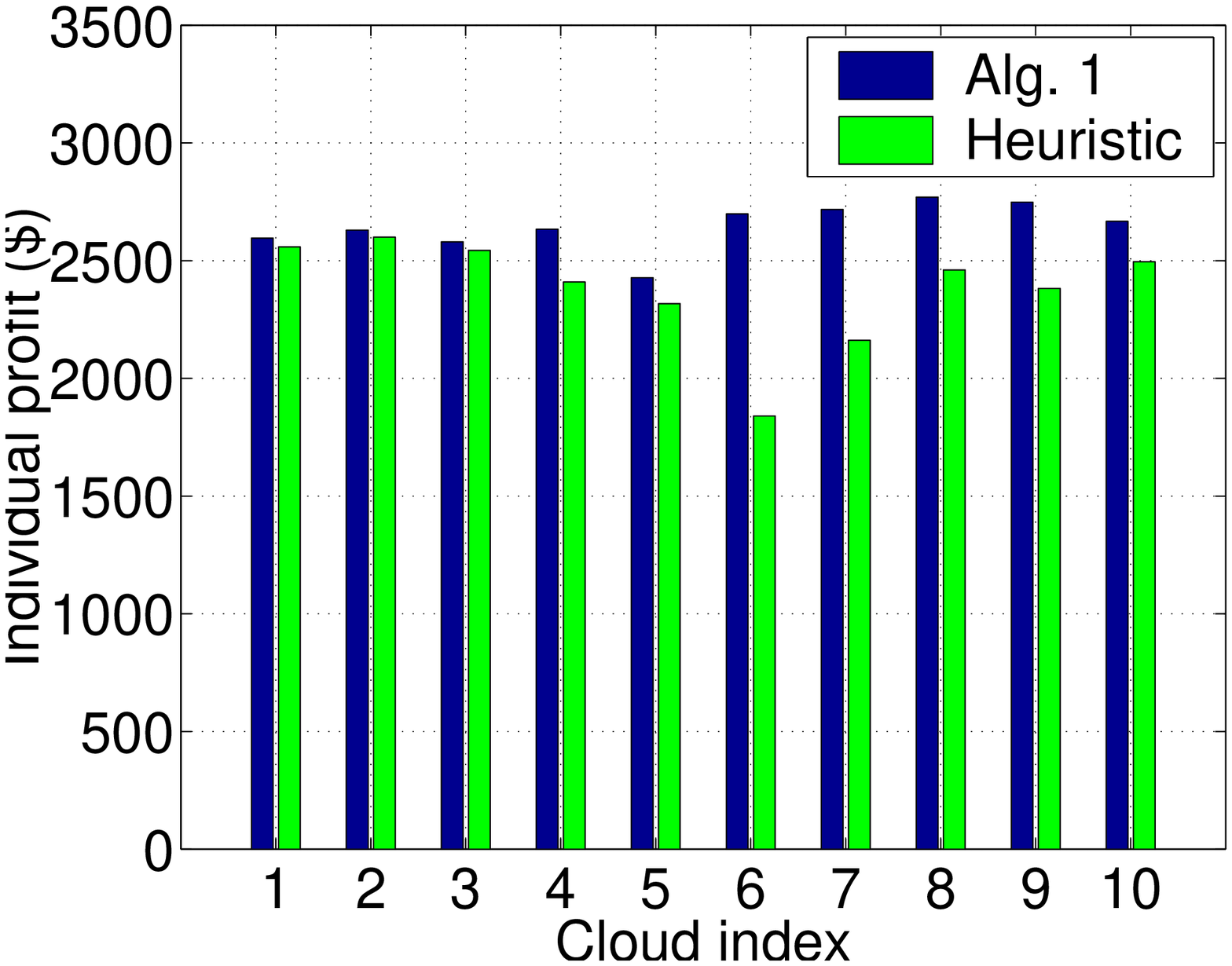}
   \label{fig:ps1}
 }
 \subfigure[$V=5\times 10^6$]
 { \includegraphics[angle=0,width=0.46\columnwidth]{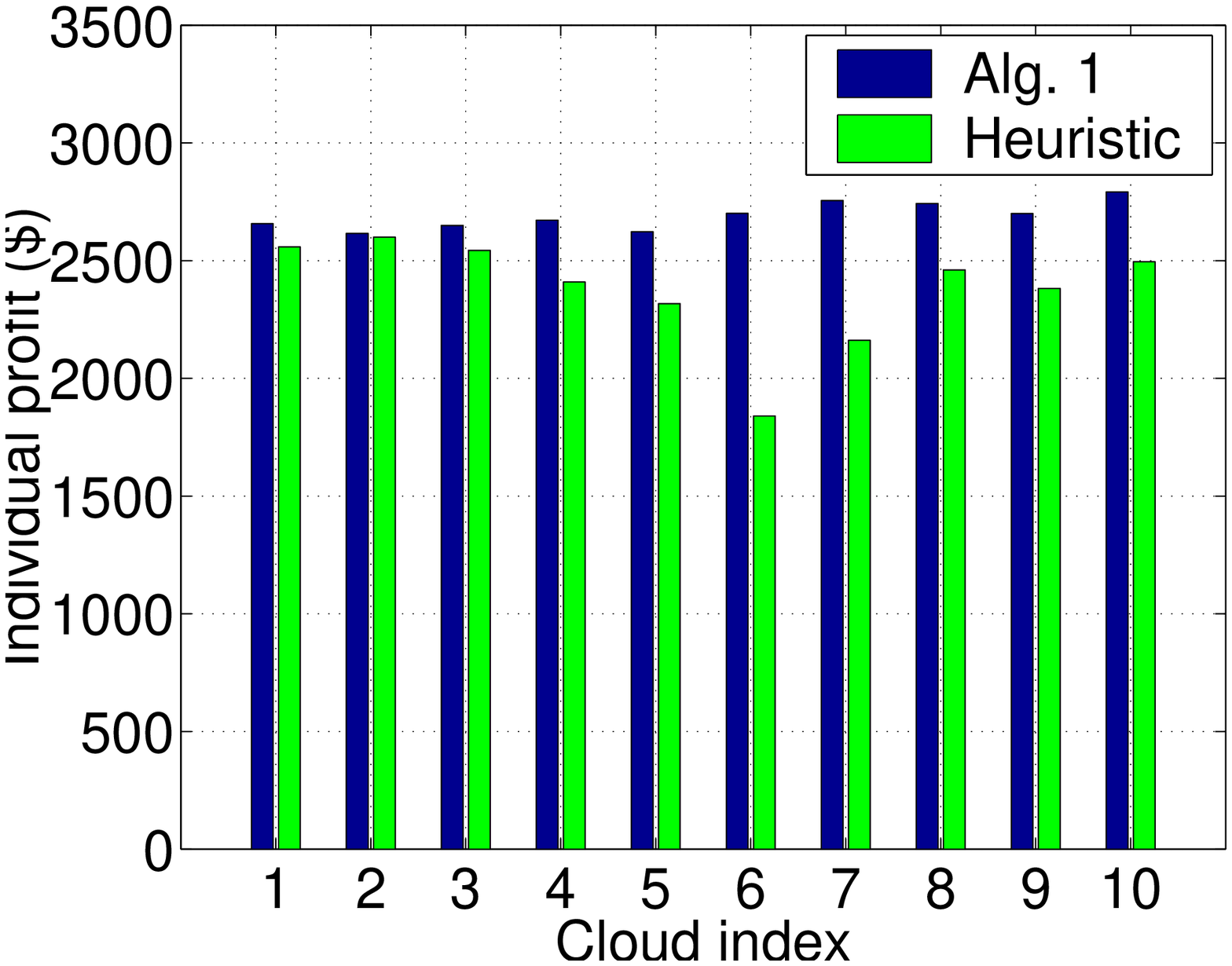}
   \label{fig:ps1}
 }
 \subfigure[$V=6\times 10^6$]
 { \includegraphics[angle=0,width=0.46\columnwidth]{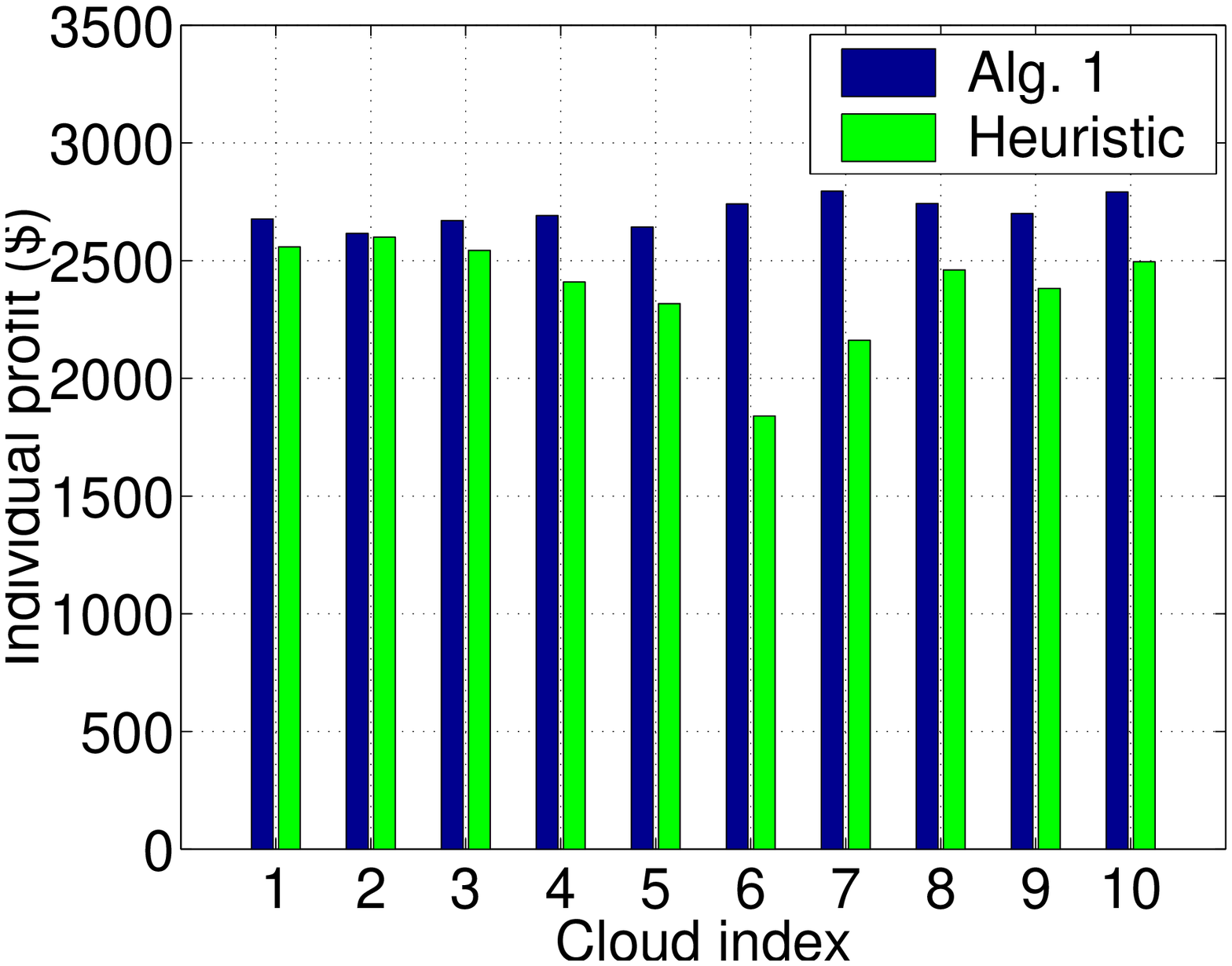}
   \label{fig:ps1}
 }
 \subfigure[$V=7\times 10^6$]
 { \includegraphics[angle=0,width=0.46\columnwidth]{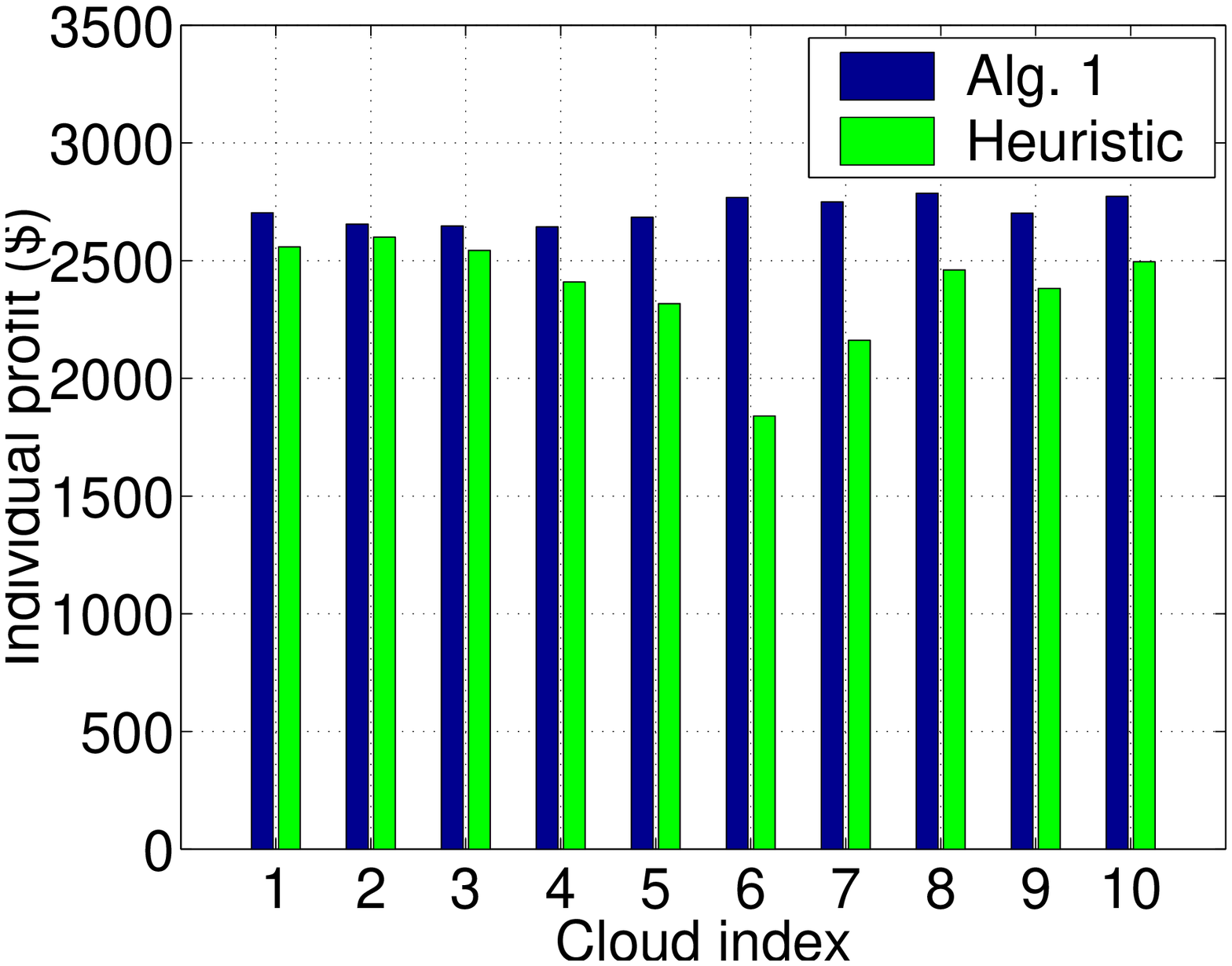}
   \label{fig:ps1}
 }
 \subfigure[$V=8\times 10^6$]
 { \includegraphics[angle=0,width=0.46\columnwidth]{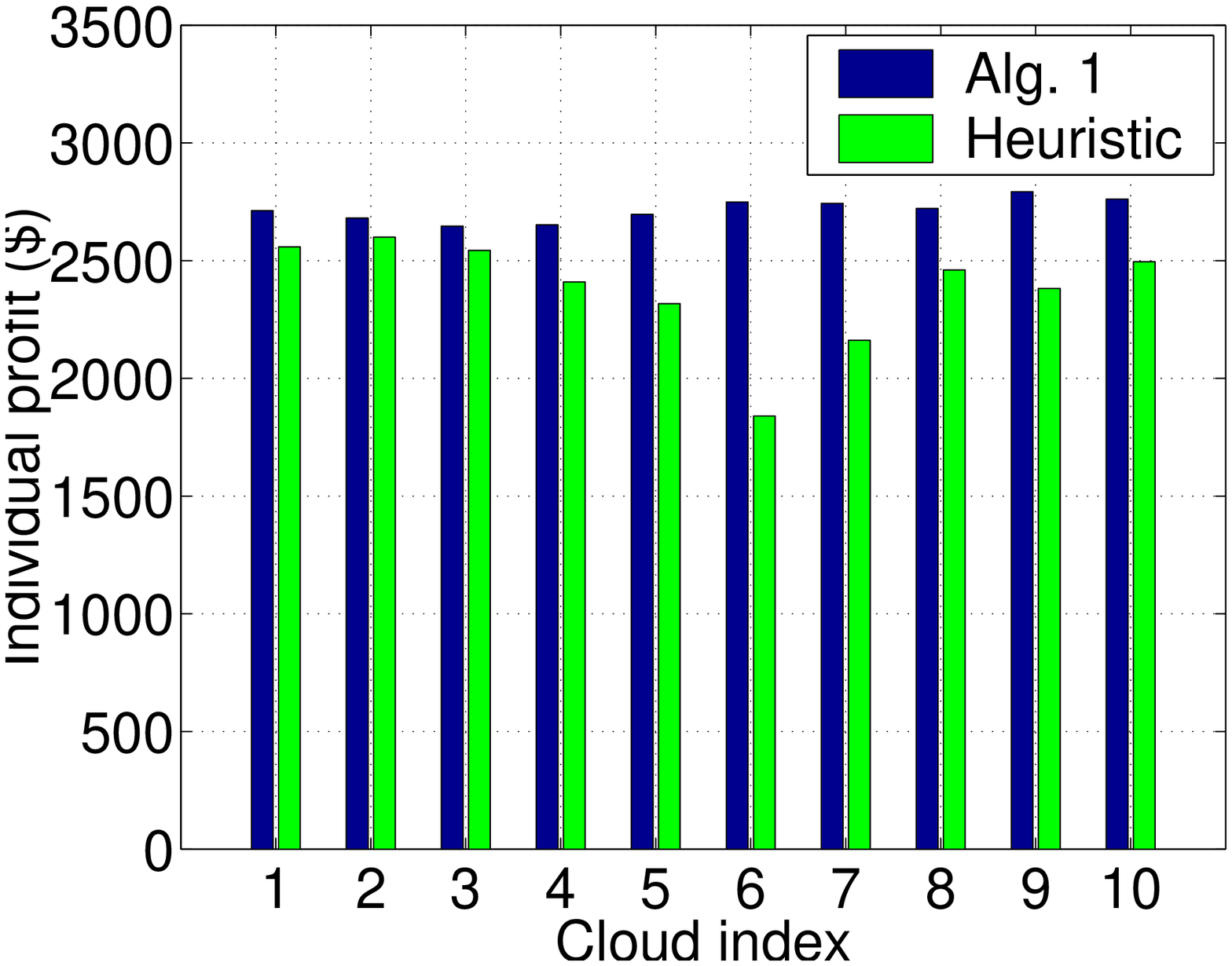}
   \label{fig:ps1}
 }
 \subfigure[$V=9\times 10^6$]
 { \includegraphics[angle=0,width=0.46\columnwidth]{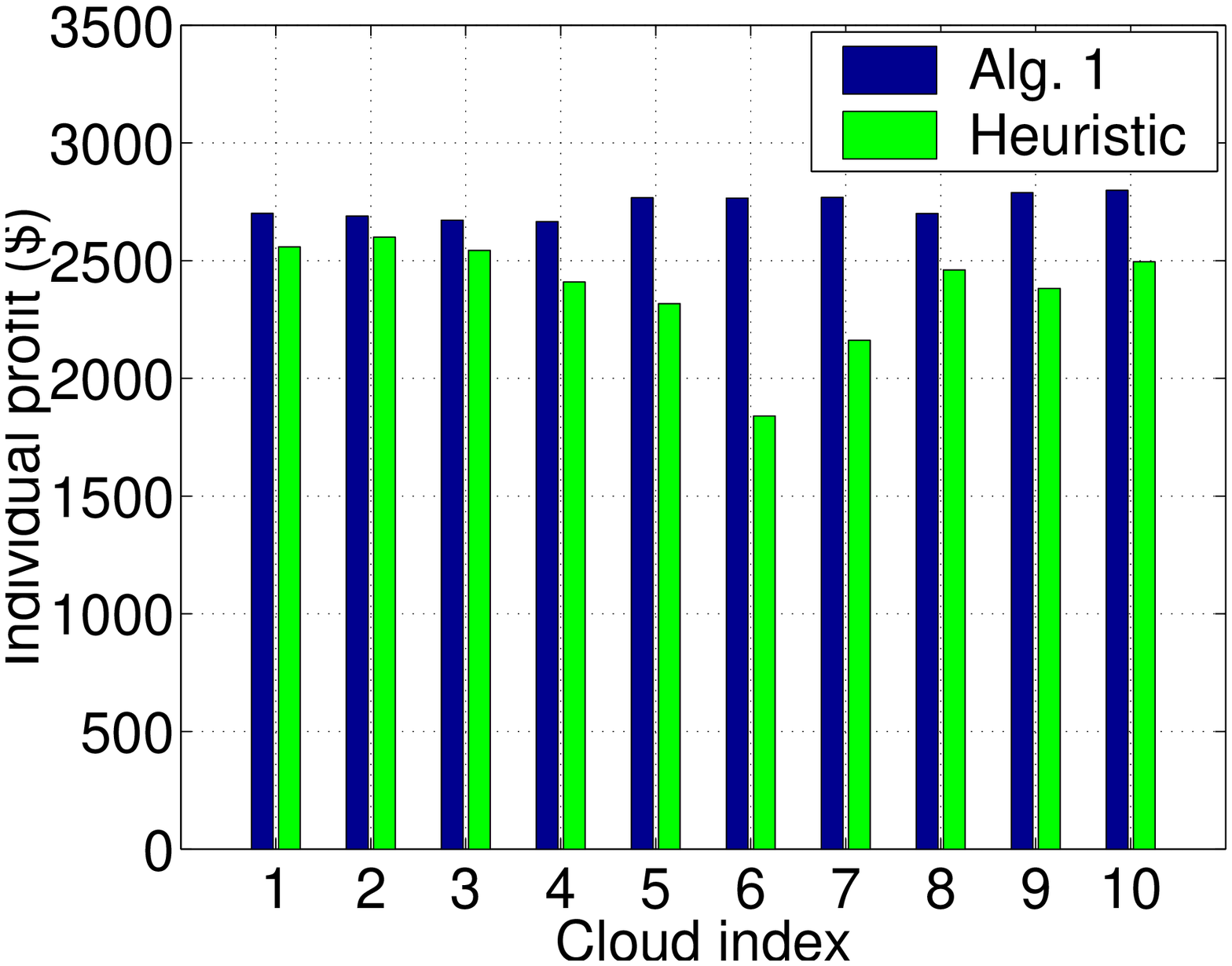}
   \label{fig:ps1}
 }\vspace{-3mm}
 \caption{Comparisons of individual profit with different values of $V$.
}
 \label{fig:profits}
\end{center}
\vspace{-8mm}
\end{figure}

\subsection{Individual Profit and Social Welfare}

We compare the time-averaged profit achieved at each cloud with our dynamic algorithm in Alg.~\ref{alg:profit} and with the heuristic algorithm, after the system has been running for $2000$ hours. \opt{long}{Fig.~\ref{fig:profits} shows that our algorithm can achieve a higher profit than the heuristic, at each of the $10$ clouds, when the value of $V$ is no less than $4\times 10^6$. 
 The observation is that when $V$ is larger, the individual profit with our algorithm is even better, since it is closer to the offline optimum.
 
 We next compare the social welfare achieved with Alg.~\ref{alg:profit}, the heuristic, and the dynamic benchmark Alg.~\ref{alg:social}.
 Fig.~\ref{fig:p2} shows that social welfare achieved with Alg.~\ref{alg:profit} is mostly within $7.7\%$ of that by the benchmark algorithm, even under our heterogenous settings. It outperforms the heuristic by $19.2\%$. The social welfare is larger at larger $V$'s in cases of both  Alg.~\ref{alg:profit} and the benchmark algorithm, verifying Theorems \ref{theorem:profit} and \ref{theorem:welfare} in that they approach the respective offline optimum when $V$ grows. 

}

\opt{short}{Fig.~\ref{fig:p1} shows that our algorithm can achieve a higher profit than the heuristic, at each of the $10$ clouds. More comparisons with different values of $V$ can be found in \cite{tech-report}, 
 with the observation that when $V$ is larger, the individual profit is even better, since it is closer to the offline optimum.

\vspace{-4mm}
\begin{figure}[H]
\begin{center}
 \subfigure[Individual Profit with $V=10^7$]
 { \includegraphics[angle=0,width=0.46\columnwidth]{profit}
   \label{fig:p1}
 }
 \subfigure[Social Welfare]
 { \includegraphics[angle=0,width=0.46\columnwidth]{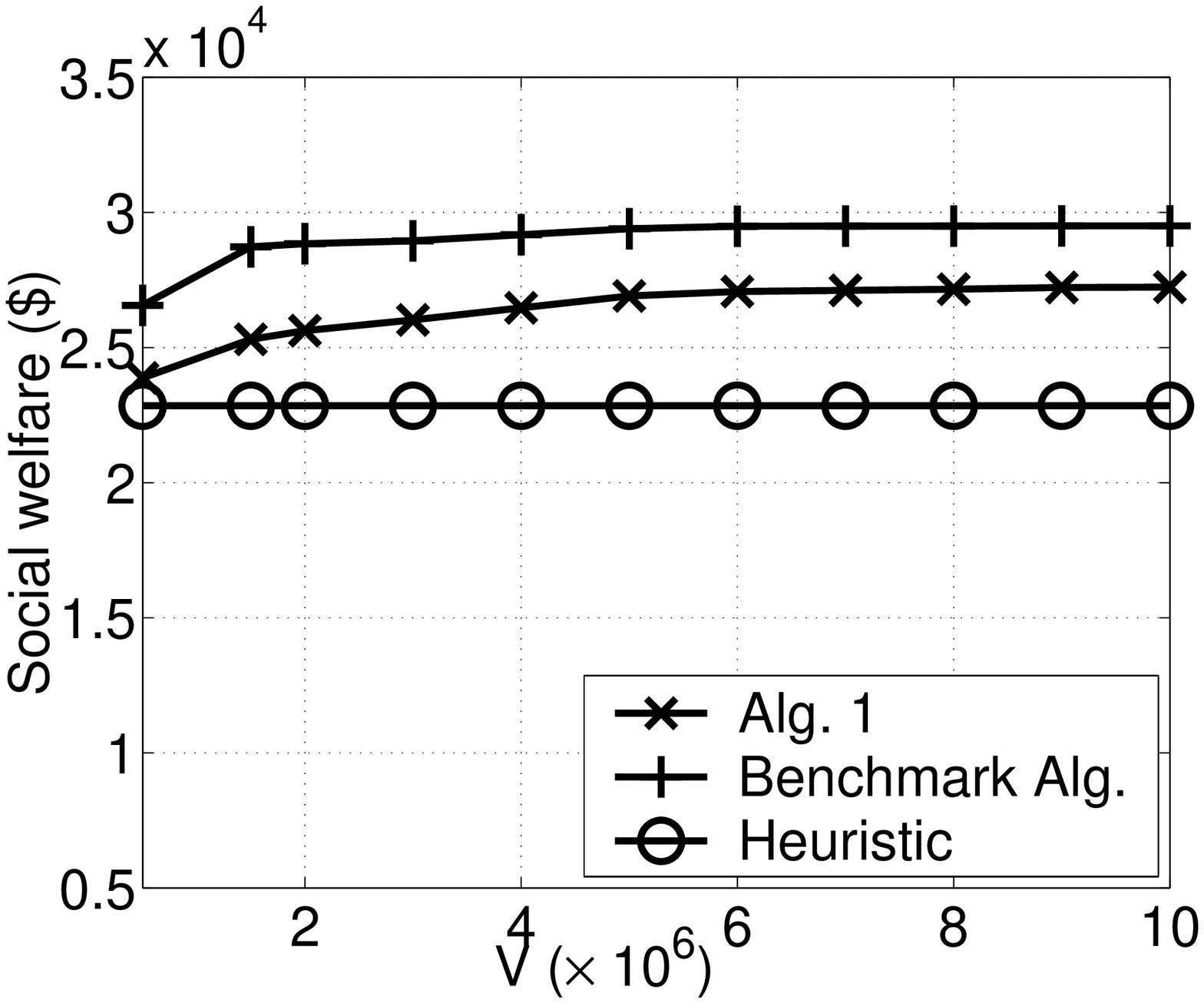}
   \label{fig:p2}
 }\vspace{-3mm}
 \caption{Comparisons of individual profit and social welfare.
}
 \label{fig:profit-welfare}
\end{center}
\vspace{-6mm}
\end{figure}
}

\opt{long}{
\vspace{-4mm}
\begin{figure}[H]
\begin{center}
{ \includegraphics[angle=0,width=0.46\columnwidth]{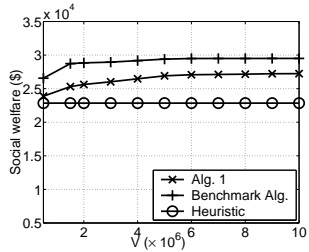}

 }\vspace{-3mm}
 \caption{Comparisons of social welfare.
}
\label{fig:p2}
\end{center}
\vspace{-6mm}
\end{figure}
}

%

\subsection{Response Delay and Job Drop}

We next investigate the scheduling delays experienced by jobs. In our system, a maximum response delay is set as the SLA objective for each type of jobs. Here, we study the average response delay actually experienced by the jobs, when the longer maximum response delay is set to different values. 
Fig.~\ref{fig:w1} shows that both Alg.~\ref{alg:profit} and the benchmark algorithm incur a low average response delay (well ahead of scheduling deadlines), as compared to that of the heuristic. The reasons are: i) the heuristic algorithm always greedily keeps jobs in queues for future scheduling until near the deadline; and ii) both Alg.~\ref{alg:profit} and the benchmark algorithm evaluate the scheduling urgency better than the heuristic does, such that jobs are tended to be served well before the deadlines.

\vspace{-4mm}\begin{figure}[H]
\begin{center}
 \subfigure[Average response delay]
 { \includegraphics[angle=0,width=0.46\columnwidth]{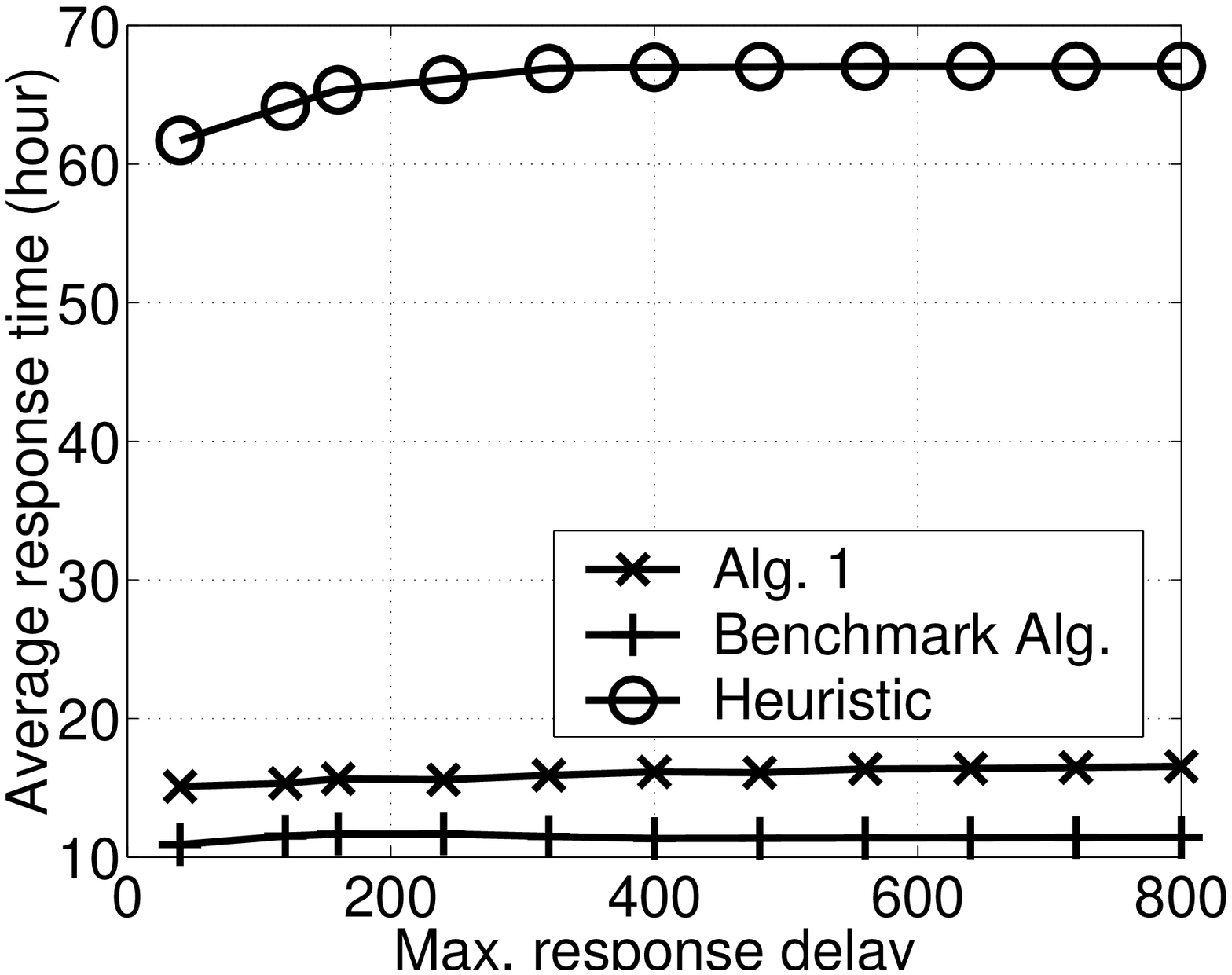}
   \label{fig:w1}
 }
 \subfigure[Averaged job-drop percentage]
 { \includegraphics[angle=0,width=0.46\columnwidth]{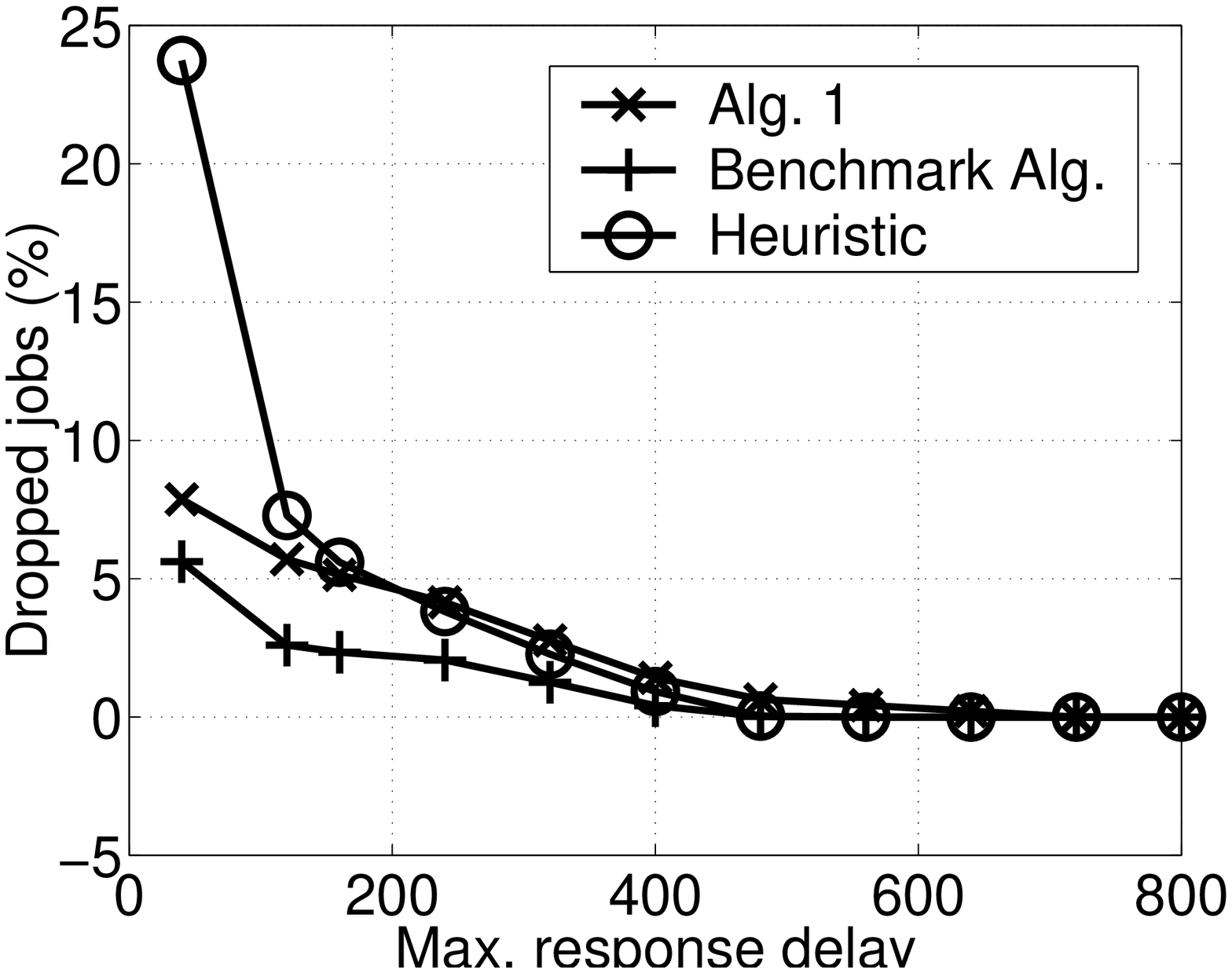}
   \label{fig:w2}
 }\vspace{-3mm}
 \caption{Comparisons of average job scheduling delay and drop percentage.}
 \label{fig:delay-drop}
\end{center}
\vspace{-6mm}
\end{figure}

We also study the percentage of admitted jobs in the entire federation that are eventually dropped with the three algorithms. Fig.~\ref{fig:w2} reveals that the drop rate decreases quickly with the increase of the allowed maximum response delay, and Alg.~\ref{alg:profit} and the benchmark algorithm again outperform the heuristic, due to their well-designed scheduling strategies. 
\section{Conclusion}\label{sec:conclusion} \vspace{-1mm}

This paper investigates both individual-profit maximizing and social-welfare efficient strategies at individual selfish clouds in a cloud federation, in VM trades across cloud boundaries. We tailor a truthful, individual-rational, ex-post budget-balanced double auction as the inter-cloud trading mechanism, and design a dynamic algorithm for each cloud to decide the best VM valuation and bidding strategies, and to schedule job service/drop and server provisioning in the most economic fashion, under time-varying job arrivals and operational costs. The proposed algorithm can obtain a time-averaged profit for each cloud within a constant gap to its offline maximum, as well as a close-to-optimum social welfare in the entire federation, based on both solid theoretical analysis and trace-driven simulation studies under realistic setting.   
 As future work, we are interested in broadening our investigations to front-end job pricing and competition for customers among the clouds, and the connection between front-end charging strategies and inter-cloud trading strategies in a cloud federation.

\vspace{-4mm}

\bibliographystyle{IEEEtran}
\bibliography{infocom13-hxli}

\opt{long}{
\begin{appendices}

\vspace{-2mm}\section{Derivation of the one-shot optimization problem for individual profit maximization 
}\label{appendix:drift}

Squaring the queuing laws (\ref{eqn:queue1}) and (\ref{eqn:queue2}), we can derive the following inequality

\vspace{-4mm}{\small
\begin{align*}
\Delta(\Theta_i(t))\leq& \frac{1}{2}\sum_{s\in [1, S]}[[\sum_{j=1}^F \mu_{ij}^s(t)+D_i^s(t)]^2 + [r_i^s(t)]^2 + 2Q_i^s(t)[r_i^s(t)\\
&-\sum_{j=1}^F \mu_{ij}^s(t)-D_i^s(t)] + [\mathbf{1}_{\{Q_i^s(t)>0\}}\epsilon_s]^2 + [D_i^s(t)\\
&+ \mathbf{1}_{\{Q_i^s(t)=0\}}\sum_{j=1}^F C_j^{m_s}N_j^{m_s}/g_s\\ &+ \mathbf{1}_{\{Q_i^s(t)>0\}}\sum_{j=1}^F \mu_{ij}^s(t)]^2\\
& + 2Z_i^s(t)[\mathbf{1}_{\{Q_i^s(t)>0\}}[\epsilon_s-\sum_{j=1}^F \mu_{ij}^s(t)]-D_i^s(t)\\
&-\mathbf{1}_{\{Q_i^s(t)=0\}}\sum_{j=1}^F C_j^{m_s}N_j^{m_s}/g_s]]\\
\leq & \frac{1}{2}\sum_{s\in [1, S]}[[\sum_{j=1}^F C_j^{m_s}N_j^{m_s}/g_s+D_i^{s(max)}]^2 + [R_i^s]^2\\
&+ 2Q_i^s(t)[r_i^s(t)-\sum_{j=1}^F \mu_{ij}^s(t)-D_i^s(t)]\\
&+ [\epsilon_s]^2 + [D_i^{s(max)}+\sum_{j=1}^F C_j^{m_s}N_j^{m_s}/g_s]^2 \\
& + 2Z_i^s(t)[\epsilon_s-\sum_{j=1}^F \mu_{ij}^s(t)-D_i^s(t)]]\\
=& B_i + \sum_{s\in [1, S]}[Q_i^s(t)[r_i^s(t)-\sum_{j=1}^F \mu_{ij}^s(t)-D_i^s(t)]\\ &+ Z_i^s(t)[\epsilon_s-\sum_{j=1}^F \mu_{ij}^s(t)-D_i^s(t)]],
\end{align*}}\vspace{-4mm}

\noindent where {\small $B_i=\frac{1}{2}\sum_{s\in [1, S]}[[\sum_{j=1}^F C_j^{m_s}N_j^{m_s}/g_s+D_i^{s(max)}]^2 + [R_i^s]^2+[\epsilon_s]^2 + [D_i^{s(max)}+\sum_{j=1}^F C_j^{m_s}N_j^{m_s}/g_s]^2]$}.

By applying the drift-plus-penalty framework (or equivalently, drift-minus-profit here), we subtract the weighted one-shot individual profit of cloud $i$ in time $t$, \emph{i.e.}, $V\cdot [\sum_{m\in [1, M]} [\hat{s}_i^m(t) \hat{\eta}_i^m(t) - \hat{b}_i^m(t) \hat{\gamma}_i^m(t)-\beta_i(t) n_i^m(t)]+ \sum_{s\in [1, S]}[p_i^s(t)\cdot r_i^s(t)- D_i^s(t) \xi_i^s]]$, on both sides of the above inequality. Hence, we have the following inequality:


\vspace{-5mm}{\small\begin{align*}
&\Delta(\Theta_i(t))-V\cdot [\sum_{m\in [1, M]} [\hat{s}_i^m(t) \hat{\eta}_i^m(t) - \hat{b}_i^m(t) \hat{\gamma}_i^m(t)-\beta_i(t) n_i^m(t)]\notag \\ &+ \sum_{s\in [1, S]}[p_i^s(t)\cdot r_i^s(t)- D_i^s(t) \xi_i^s]] \notag\\
\leq&  B_i + \sum_{s\in [1, S]}[Q_i^s(t) r_i^s(t) + Z_i^s(t) \epsilon_s - V p_i^s(t)\cdot r_i^s(t)] \notag\\
&- \varphi_1^i(t)-\varphi_2^i(t)-\varphi_3^i(t),
\end{align*}}\vspace{-5mm}

\noindent where $V>0$ is a user-defined positive constant that can be understood as the
weight of profit in the expression.

\opt{short}{

\section{Derivation of dynamic algorithm}\label{appendix:algorithm}

The maximization problem in Eqn.~(\ref{eqn:profit-oneslot}) can be decoupled into two independent optimizations:

\vspace{-4mm}{\small
\begin{align}
\max&~~~~\varphi_1^i(t)+\varphi_2^i(t)\label{eqn:profit-oneshot1}\\
s.t.&~~~~\text{Constraint (\ref{eqn:capacity1})-(\ref{eqn:capacity3})}.\notag
\end{align}}\vspace{-4mm}

\noindent and,

\vspace{-4mm}{\small
\begin{align}
\max&~~~~\varphi_3^i(t)\label{eqn:profit-oneshot2}\\
s.t.&~~~~\text{Constraint (\ref{eqn:drop-cons})}.\notag
\end{align}}\vspace{-4mm}

We first solve problem (\ref{eqn:profit-oneshot1}) to derive the optimal solutions to VM valuation \& bid, job scheduling and server provisioning. Then, we find solution to job dropping with problem (\ref{eqn:profit-oneshot2}).

\vspace{1mm}
\noindent \textbf{Problem (\ref{eqn:profit-oneshot1})}: Before solving the optimization problem, some useful properties of the VM valuation \& bid are discussed based on the truthfulness, individual rationality and ex-post budget balance of the given double auction scheme.

\vspace{1mm}
\noindent \emph{Property 1. } Each buyer/seller bids with the true values, \emph{i.e.}, $b_i^m = \tilde{b}_i^m(t)$, $s_i^m(t)=\tilde{s}_i^m(t)$, $\gamma_i^m(t)=\tilde{\gamma}_i^m(t)$ and $\eta_i^m(t)=\tilde{\eta}_i^m(t)$. This is based on the truthfulness of the double auction mechanism, for which bidding truthfully is always the dominant strategy.

\vspace{1mm}
\noindent \emph{Property 2. } Each buyer/seller pays/charges a price that is no higher/lower than the true value while the number of traded VMs is no larger than the maximum bided value if he/she wins, \emph{i.e.}, $\hat{b}_i^m(t)\leq \tilde{b}_i^m(t)$, $\hat{s}_i^m(t)\geq \tilde{s}_i^m(t)$, $\hat{\gamma}_i^m(t)\leq \tilde{\gamma}_i^m(t)$ and $\hat{\eta}_i^m(t)\leq \tilde{\eta}_i^m(t)$. This is because of the individual rationality of the double auction mechanism, such that the utility improvement obtained from the auction for each cloud is non-negative.

\vspace{1mm}
\noindent \emph{Property 3. } If a cloud wins with both buy-bid and sell-bid, he/she cannot buy a VM with a price strictly higher than that to sell, \emph{i.e.}, we must have $\hat{s}_i^m(t)\geq  \hat{b}_i^m(t)$. Otherwise, there will be a positive utility loss at the cloud by self-trading its own VMs, which violates the individual rationality.

\vspace{1mm}
\noindent \emph{Property 4. } If a cloud wins with both buy-bid and sell-bid, he/she cannot sell a VM with a price strictly higher than that to buy, \emph{i.e.}, we must have $\hat{s}_i^m(t)\leq \hat{b}_i^m(t)$. Otherwise, the auctioneer has to pay a positive price to that cloud in order to compensate the price difference for those inter-cloud traded VMs, which contradicts with the ex-post budget balance at the auctioneer.

\vspace{1mm}
\noindent \emph{Property 5. } With the combination of Property 4 and 5, we have that, if a cloud wins with both buy-bid and sell-bid, he/she buys and sells at the same price, \emph{i.e.}, $\hat{s}_i^m(t)=\hat{b}_i^m(t)$.

\vspace{1mm}
Next, each cloud $i\in [1, F]$ utilizes the above properties to solve the problem (\ref{eqn:profit-oneshot1}).

\vspace{1mm}
\noindent \emph{Step 1 -- Solve $n_i^m(t)$, $\forall m\in [1, M]$}: We start with solving $n_i^m(t)$ by assuming already known feasible assignments to $\hat{s}_i^m(t)$, $\hat{\eta}_i^m(t)$, $\hat{b}_i^m(t)$, $\hat{\gamma}_i^m(t)$, $\alpha_{ij}^m(t)$ and $\mu_{ij}^s(t)$. Next, in steps 2 and 3, we find optimal solutions to other variables iteratively, with values independent of $n_i^m(t)$. In this case, the problem (\ref{eqn:profit-oneshot1}) is equivalent to solve the following minimization problem,

\vspace{-4mm}{\small
\begin{align}
\min&~~~~V \beta_i(t) \sum_{m\in [1, M]}n_i^m(t)\label{eqn:profit-oneshot3}\\
s.t.&~~~~\text{Constraint (\ref{eqn:capacity1}), (\ref{eqn:capacity2}) and (\ref{eqn:capacity3})},\notag
\end{align}}\vspace{-4mm}

Apparently, for each VM type $m$, the best strategy is to assign the minimal feasible value to $n_i^m(t)$ by satisfying constraints (\ref{eqn:capacity1}) and (\ref{eqn:capacity3}), which can be combined into

\vspace{-4mm}{\small
\begin{align*}
\sum_{s\in [1,S], m_s=m}\mu_{ii}^s(t)\cdot g_s + \sum_{j\neq i}\alpha_{ji}^m(t) \leq C_i^m n_i^m(t).
\end{align*}}\vspace{-4mm}

\noindent Hence, we have the optimal solution to $n_i^m(t)$ as in Eqn.~(\ref{eqn:server-provision}) in Sec.~\ref{sec:algorithm}.

\vspace{1mm}
\noindent \emph{Step 2 -- Solve $\mu_{ij}^s(t)$, $\forall j\in [1,F], s\in [1, S]$}: In this step, we assume already known assignments to $\hat{s}_i^m(t)$, $\hat{\eta}_i^m(t)$, $\hat{b}_i^m(t)$, $\hat{\gamma}_i^m(t)$ and $\alpha_{ij}^m(t)$ while $n_i^m(t)$ is defined as in Eqn.~(\ref{eqn:server-provision}). Then, the problem (\ref{eqn:profit-oneshot1}) is equivalent to solve the following optimization,

\vspace{-4mm}{\small
\begin{align}
\max&~~~~\varphi_2^i(t)-V \beta_i \sum_{s\in [1,S], m_s=m}\mu_{ii}^s(t)\cdot \frac{g_s}{C_i^m} \label{eqn:profit-oneshot3}\\
s.t.&~~~~\text{Constraint (\ref{eqn:capacity1}), (\ref{eqn:capacity2}) and (\ref{eqn:capacity3})},\notag
\end{align}}\vspace{-4mm}

\noindent with $\mu_{ij}^s(t)$ as the controlling variables.

This is a maximum weight scheduling problem, with $Q_i^s(t)+Z_i^s(t)$ as the per-job scheduling weight for each $\mu_{ij}^s(t)$ ($j\neq i$) and $Q_i^s(t)+Z_i^s(t)-\frac{V\beta_i(t)  g_s}{C_i^m}$ as the per-job scheduling weight for each $\mu_{ii}^s(t)$. There are two cases:

\vspace{1mm}
\noindent -- $j\neq i$: $\mu_{ij}^s(t)$ is only constrained by $\alpha_{ij}^{m_s}(t)$, which is the number of bought VMs of type $m$ from cloud $j$. Based on constraint (\ref{eqn:capacity3}), we have that, with known $\alpha_{ij}^m(t)$, $\forall j\neq i, m\in [1, M]$, the best strategy is to assign full capacity of $\alpha_{ij}^m(t)$ to the service type $s_m^*$ with the maximum per-VM scheduling weight $\frac{Q_i^s(t)+Z_i^s(t)}{g_s}$ as defined in Eqn.~(\ref{eqn:weight}) and (\ref{eqn:max-weight}) in Sec.~\ref{sec:algorithm}. Then, we have the optimal solution to $\mu_{ij}^s(t)$, $\forall j\in [1, F], j\neq i, s\in [1,S]$ with Eqn.~(\ref{eqn:schedule2}).

\vspace{1mm}
\noindent -- $j = i$: In this case, by combining the constraints (\ref{eqn:capacity1}), (\ref{eqn:capacity2}) and (\ref{eqn:capacity3}), we have that

\vspace{-4mm}{\small
\begin{align*}
\sum_{s:m_s=m, s\in [1, S]}g_s\mu_{ii}^s(t)\leq C_i^m N_i^m-\sum_{j\neq i}\alpha_{ji}^m(t).
\end{align*}}\vspace{-4mm}

Similar with the above reasoning, we have that, with known $\alpha_{ji}^m(t)$, $\forall j\neq i, m\in [1, M]$, the best strategy is to assign full capacity of $C_i^m n_i^m(t)-\sum_{j\neq i}\alpha_{ji}^m(t)$ to the service type $s_m^*$ with the maximum per-VM scheduling weight $\frac{Q_i^s(t)+Z_i^s(t)}{g_s}-\frac{V\beta_i(t)}{C_i^m}$ if it is positive, or equivalently the maximum $\frac{Q_i^s(t)+Z_i^s(t)}{g_s}$ as defined in Eqn.~(\ref{eqn:weight}) and (\ref{eqn:max-weight}) in Sec.~\ref{sec:algorithm} if $\frac{Q_i^s(t)+Z_i^s(t)}{g_s}>\frac{V\beta_i(t)}{C_i^m}$. Then, we have the optimal solution to $\mu_{ii}^s(t)$, $\forall s\in [1,S]$ with Eqn.~(\ref{eqn:schedule2}).

\vspace{1mm}
\noindent \emph{Step 3 -- Find true values $\hat{b}_i^m(t)$, $\hat{\gamma}_i^m(t)$ $\hat{s}_i^m(t)$ and $\hat{\eta}_i^m(t)$, $\forall m\in [1, M]$}: According to property 1, we have that, once the true values are identified, each cloud always bids with true values resulting in the final assignments to $\hat{s}_i^m(t)$, $\hat{\eta}_i^m(t)$, $\hat{b}_i^m(t)$, $\hat{\gamma}_i^m(t)$ and $\alpha_{ij}^m(t)$ decided any given double auction scheme.

With property 2, we know that, the per-VM utility gain for each winner (buy and/or sell) is non-negative. Hence, the best strategy for bidding volume is that each buy/sell always bids for maximum possible VMs. Then, we have the true valuations for buy-bid and sell-bid volumes as in Eqn.~(\ref{eqn:true-vol-b}) and (\ref{eqn:true-vol-s}), respectively.

Finally, we identify the true valuation for bidding prices of each VM type $m\in [1, M]$ case by case:

\vspace{1mm}
\noindent -- Case 1: Cloud $i$ only wins the buy-bid. In this case, we have that i) all bought type-$m$ VMs are from other clouds and should be used for job scheduling according to constraint Eqn.~(\ref{eqn:capacity3}); and ii) $\hat{s}_i^m(t)=0$ and $\hat{\eta}_i^m(t)=0$.

A nice property of problem (\ref{eqn:profit-oneshot1}) is that, all variables related with VM type $m$, \emph{i.e.}, $b_i^m(t)$, $\gamma_i^m$, $\hat{b}_i^m(t)$, $\hat{\gamma}_i^m(t)$, $\alpha_{ij}^m(t)$, $n_i^m(t)$ and $\mu_{ij}^s(t)$ with $m_s=m$, are independent from those related with other VM types. Hence, the optimal solutions to variables related with VM type $m$ can be found by solving the following optimization problem,

\vspace{-4mm}{\small
\begin{align}
\max&~~~~\sum_{s: m_s=m, s\in [1, S]}\sum_{j\in [1, F]}\mu_{ij}^s(t) [Q_i^s(t)+ Z_i^s(t)]-V\hat{b}_i^m(t)\hat{\gamma}_i^m(t)\notag\\&~~~~-V\beta_i(t)n_i^m(t) \label{eqn:profit-oneshot4}\\
s.t.&~~~~\text{Constraint (\ref{eqn:capacity1})-(\ref{eqn:capacity3})}.\notag
\end{align}}\vspace{-4mm}

We can replace $\hat{\gamma}_i^m(t)$ with $\sum_{j\in [1, F]}\alpha_{ij}^m(t)$ with Eqn.~(\ref{eqn:buy-cons}). Next, we replace each $\mu_{ij}^s(t)$ with the optimal solutions in Eqn.~(\ref{eqn:schedule1}) and (\ref{eqn:schedule2}) in problem (\ref{eqn:profit-oneshot4}). So, we have that

\vspace{-4mm}{\small
\begin{align}
\max&~~~~\sum_{j\neq i, j\in [1, F]}\alpha_{ij}^m(t) [\frac{Q_i^{s_m^*}(t)+ Z_i^{s_m^*}(t)}{g_{s_m^*}}-V\hat{b}_i^m(t)]\notag\\&~~~~+\mu_{ii}^{s_m^*}(t)-V\beta_i(t)n_i^m(t) \label{eqn:profit-oneshot5}.
\end{align}}\vspace{-4mm}

According to Eqn.~(\ref{eqn:schedule1}) and (\ref{eqn:server-provision}), if $\frac{Q_i^{s_m^*}(t)+Z_i^{s_m^*}(t)}{V g_{s_m^*}}>\frac{\beta_i(t)}{C_i^m}$, we have that

\vspace{-4mm}{\small
\begin{align*}
\mu_{ii}^{s_m^*}(t)-V\beta_i(t)n_i^m(t)=N_i^m [C_i^m \frac{Q_i^{s_m^*}(t)+Z_i^{s_m^*}(t)}{g_{s_m^*}} - V \beta_i(t)],
\end{align*}}\vspace{-4mm}

\noindent otherwise, we have that

\vspace{-4mm}{\small
\begin{align*}
\mu_{ii}^{s_m^*}(t)-V\beta_i(t)n_i^m(t)=0.
\end{align*}}\vspace{-4mm}

Both of the above values are constants. As a result, the optimization problem (\ref{eqn:profit-oneshot5}) is finally equivalent to find optimal solution for the following problem,

\vspace{-4mm}{\small
\begin{align}
\max&~~~~\sum_{j\neq i, j\in [1, F]}\alpha_{ij}^m(t) [\frac{Q_i^{s_m^*}(t)+ Z_i^{s_m^*}(t)}{g_{s_m^*}}-V\hat{b}_i^m(t)]\label{eqn:profit-oneshot6}.
\end{align}}\vspace{-4mm}

According to the definition of the true value of buy-bid price, we know that the true valuation of $\tilde{b}_i^m(t)$ should be $\frac{Q_i^{s_m^*}(t)+ Z_i^{s_m^*}(t)}{V g_{s_m^*}}$ as defined in Eqn.~(\ref{eqn:true-buy}) since: i) if $\hat{b}_i^m(t)>\frac{Q_i^{s_m^*}(t)+ Z_i^{s_m^*}(t)}{V g_{s_m^*}}$, we will have utility loss for problem (\ref{eqn:profit-oneshot6}) and thus problem (\ref{eqn:profit-oneslot}); ii) if $\hat{b}_i^m(t)<\frac{Q_i^{s_m^*}(t)+ Z_i^{s_m^*}(t)}{V g_{s_m^*}}$, we will have utility improvement for problem (\ref{eqn:profit-oneshot6}) and thus problem (\ref{eqn:profit-oneslot}); and iii) if $\hat{b}_i^m(t)=\frac{Q_i^{s_m^*}(t)+ Z_i^{s_m^*}(t)}{V g_{s_m^*}}$, we will have zero utility improvement or loss for problem (\ref{eqn:profit-oneshot6}) and thus problem (\ref{eqn:profit-oneslot}).

\vspace{1mm}
\noindent -- Case 2: Cloud $i$ only wins the sell-bid. In this case, we have that i) all sold type-$m$ VMs are used by other clouds for job scheduling according to constraint Eqn.~(\ref{eqn:capacity3}); and ii) $\hat{b}_i^m(t)=0$ and $\hat{\gamma}_i^m(t)=0$.

Similar with above analysis, the optimal solutions to variables related with VM type $m$ can be found by solving the following optimization problem,

\vspace{-4mm}{\small
\begin{align}
\max&~~~~\sum_{s: m_s=m, s\in [1, S]}\sum_{j\neq i}\mu_{ij}^s(t) [Q_i^s(t)+ Z_i^s(t)]+V\hat{s}_i^m(t)\hat{\eta}_i^m(t)\notag\\&~~~~-V\beta_i(t)n_i^m(t) \label{eqn:profit-oneshot7}\\
s.t.&~~~~\text{Constraint (\ref{eqn:capacity1})-(\ref{eqn:capacity3})}.\notag
\end{align}}\vspace{-4mm}

We can replace $\hat{\eta}_i^m(t)$ with $\sum_{j\neq i}\alpha_{ji}^m(t)$ using Eqn.~(\ref{eqn:sell-cons}) and the fact in this case that $\alpha_{ii}^m(t)=0$. Next, we replace each $\mu_{ii}^s(t)$ and $n_i^m(t)$ with the optimal solutions in Eqn.~(\ref{eqn:schedule1}), (\ref{eqn:schedule2}) and (\ref{eqn:server-provision}) in problem (\ref{eqn:profit-oneshot7}). So, we have that,

\noindent -- i) if $\frac{Q_i^{s_m^*}(t)+Z_i^{s_m^*}(t)}{V g_{s_m^*}}>\frac{\beta_i(t)}{C_i^m}$, problem (\ref{eqn:profit-oneshot7}) is equivalent to

\vspace{-4mm}{\small
\begin{align*}
\max&~~~~N_i^m [C_i^m \frac{Q_i^{s_m^*}(t)+Z_i^{s_m^*}(t)}{g_{s_m^*}} - V \beta_i(t)]\\
    &~~~~+\sum_{j\neq i}\alpha_{ji}^m(t)[V \hat{s}_i^m(t) - \frac{Q_i^{s_m^*}(t)+Z_i^{s_m^*}(t)}{g_{s_m^*}}],
\end{align*}}\vspace{-4mm}

\noindent where the true value of $\tilde{s}_i^m(t)$ should be $\frac{Q_i^{s_m^*}(t)+Z_i^{s_m^*}(t)}{Vg_{s_m^*}}$ according the definition of true value of sell-bid;

\noindent -- ii) otherwise, problem (\ref{eqn:profit-oneshot7}) is equivalent to

\vspace{-4mm}{\small
\begin{align*}
\max&~~~~V\sum_{j\neq i}\alpha_{ji}^m(t) [\hat{s}_i^m(t)-\beta_i(t)/C_i^m],
\end{align*}}\vspace{-4mm}

\noindent where the true value of $\tilde{s}_i^m(t)$ should be $\beta_i(t)/C_i^m$ according the definition of true value of sell-bid. So, we have the solution to true value of sell-bid price as in Eqn.~(\ref{eqn:true-sell}).

\vspace{1mm}
\noindent -- Case 3: Cloud $i$ wins both buy-bid and sell-bid. In this case, we have that $\alpha_{ii}^m(t)>0$. However, based on property 5, we see that $\hat{s}_i^m(t)=\hat{b}_i^m(t)$ and the overall utility gain on the self-trading of VM, \emph{i.e.}, $\alpha_{ii}^m(t)[\hat{s}_i^m(t)-\hat{b}_i^m(t)]$, is zero. Besides, according to Eqn.~(\ref{eqn:capacity1}), the value of $\alpha_{ii}^m(t)$ does not influence the job scheduling and server provisioning. Hence, all the derivations are the same with above two cases and the true valuations of buy-bid and sell-bid are defined as in Eqn.(\ref{eqn:true-buy}) and (\ref{eqn:true-sell}).

\vspace{1mm}
\noindent \textbf{Problem (\ref{eqn:profit-oneshot2})}: It is not hard to find that, the only controllable variables are the job-dropping decisions, \emph{i.e.}, $D_i^s(t)$ ($\forall s\in [1,S]$), with $Q_i^s(t)+Z_i^s(t)-V\cdot \xi_i^s$ as the weight for $D_i^s(t)$ in the objective $\varphi_3^i(t)$. If $Q_i^s(t)+Z_i^s(t)-V\cdot \xi_i^s> 0$, the weight for $D_i^s(t)$ is positive and type-$s$ jobs should be dropped at maximum rate, \emph{i.e.}, $D_i^s(t)=D_i^{s(max)}$; otherwise, the weight is non-positive and the optimal solution is $D_i^s(t)=0$. So, we get the algorithm for job dropping as in Alg.~\ref{alg:profit}.

}

\section{Finding true values $\hat{b}_i^m(t)$, $\hat{\gamma}_i^m(t)$, $\hat{s}_i^m(t)$ and $\hat{\eta}_i^m(t)$}
\label{appendix:truevalue}



Based on individual rationality and truthfulness of the double auction mechanism, each buyer/seller pays/charges a price that is no higher/lower than the corresponding bid (true) value, while the number of VMs actually traded is no larger than the bid (true) value if the bid is successful, \emph{i.e.}, $\hat{b}_i^m(t)\leq \tilde{b}_i^m(t)$, $\hat{s}_i^m(t)\geq \tilde{s}_i^m(t)$, $\hat{\gamma}_i^m(t)\leq \tilde{\gamma}_i^m(t)$ and $\hat{\eta}_i^m(t)\leq \tilde{\eta}_i^m(t)$.
That is, the utility obtained by each cloud by participating in the auction is non-negative.
Hence, the utility obtained by trading each VM at each winning buyer, \emph{i.e.}, $\tilde{b}_i^m(t)-\hat{b}_i^m(t)$, or seller, \emph{i.e.}, $\tilde{s}_i^m(t)-\hat{s}_i^m(t)$, is non-negative. Therefore, bidding for the maximum number of potential VMs provisioned, maximizes the utility of a seller or buyer, 
 {\em i.e.}, the maximum number of type-m VMs a cloud is willing to sell or buy is the maximum number of potential type-$m$ VMs provisioned in the federation, and hence the true values of the VM volumes to bid at each cloud are derived as in Eqn.~(\ref{eqn:true-vol-b}) and (\ref{eqn:true-vol-s}), respectively.

We next identify the true values of the bidding prices for each type of VMs, $m\in[1,M]$, at cloud $i$ case by case:

\vspace{1mm}
\noindent $\triangleright$ {\em Case 1: Cloud $i$'s buy-bid for type-$m$ VMs wins, but not the sell-bid.}

In this case, we know that: i) all bought type-$m$ VMs are from other clouds and should be used for job scheduling according to constraint (\ref{eqn:capacity3}); ii) $\hat{s}_i^m(t)=0$ and $\hat{\eta}_i^m(t)=0$.

A nice property of problem (\ref{eqn:profit-oneshot1}) is that, all decision variables related to type-$m$ VMs, \emph{i.e.}, $b_i^m(t)$, $\gamma_i^m$, $\hat{b}_i^m(t)$, $\hat{\gamma}_i^m(t)$, $\alpha_{ij}^m(t)$, $n_i^m(t)$, and $\mu_{ij}^s(t)$ with $m_s=m$, are independent from those related to the other types of VMs. Hence, the optimal solutions to decision variables related to type-$m$ VMs can be derived by solving the following sub problem from (\ref{eqn:profit-oneshot1}):

\vspace{-4mm}{\small
\begin{align}
\max&~~~~\sum_{s: m_s=m, s\in [1, S]}\sum_{j\in [1, F]}\mu_{ij}^s(t) [Q_i^s(t)+ Z_i^s(t)]-V\hat{b}_i^m(t)\hat{\gamma}_i^m(t)\notag\\&~~~~-V\beta_i(t)n_i^m(t) \label{eqn:profit-oneshot4}\\
s.t.&~~~~\text{Constraint (\ref{eqn:capacity1})-(\ref{eqn:capacity3})}.\notag
\end{align}}\vspace{-4mm}

In (\ref{eqn:profit-oneshot4}), we replace $\hat{\gamma}_i^m(t)$ by $\sum_{j\in [1, F]}\alpha_{ij}^m(t)$ based on Eqn.~(\ref{eqn:buy-cons}), and replace $\mu_{ij}^s(t)$'s by the optimal solutions in Eqn.~(\ref{eqn:schedule1}) and (\ref{eqn:schedule2}) (to be derived in Sec.~\ref{sec:scheduling}). We obtain

\vspace{-4mm}{\small
\begin{align}
\max&~~~~\sum_{j\neq i, j\in [1, F]}\alpha_{ij}^m(t) [\frac{Q_i^{s_m^*}(t)+ Z_i^{s_m^*}(t)}{g_{s_m^*}}-V\hat{b}_i^m(t)]\notag\\
    &~~~~+\mu_{ii}^{s_m^*}(t) (Q_i^{s_m^*}(t)+ Z_i^{s_m^*}(t))-V\beta_i(t)n_i^m(t) \label{eqn:profit-oneshot5}.
\end{align}}\vspace{-4mm}

According to Eqn.~(\ref{eqn:schedule1}) and Eqn.~(\ref{eqn:server-provision}), if $\frac{Q_i^{s_m^*}(t)+Z_i^{s_m^*}(t)}{V g_{s_m^*}}>\frac{\beta_i(t)}{C_i^m}$, we have

\vspace{-4mm}{\small
\begin{align*}
&\mu_{ii}^{s_m^*}(t) (Q_i^{s_m^*}(t)+ Z_i^{s_m^*}(t))-V\beta_i(t)n_i^m(t)\\
=&N_i^m [C_i^m \frac{Q_i^{s_m^*}(t)+Z_i^{s_m^*}(t)}{g_{s_m^*}} - V \beta_i(t)];
\end{align*}}\vspace{-4mm}

\noindent otherwise, we have

\vspace{-4mm}{\small
\begin{align*}
\mu_{ii}^{s_m^*}(t) (Q_i^{s_m^*}(t)+ Z_i^{s_m^*}(t))-V\beta_i(t)n_i^m(t)=0.
\end{align*}}\vspace{-4mm}

Both RHS values of the above equations are constants. As a result, the optimization problem (\ref{eqn:profit-oneshot5}) is finally equivalent to the following one:

\vspace{-4mm}{\small
\begin{align}
\max&~~~~\sum_{j\neq i, j\in [1, F]}\alpha_{ij}^m(t) [\frac{Q_i^{s_m^*}(t)+ Z_i^{s_m^*}(t)}{g_{s_m^*}}-V\hat{b}_i^m(t)]\label{eqn:profit-oneshot6}.
\end{align}}\vspace{-4mm}

According to the definition of true values, we know that the true value of $\tilde{b}_i^m(t)$ should be $\frac{Q_i^{s_m^*}(t)+ Z_i^{s_m^*}(t)}{V g_{s_m^*}}$ as defined in Eqn.~(\ref{eqn:true-buy}), since: i) if $\hat{b}_i^m(t)>\frac{Q_i^{s_m^*}(t)+ Z_i^{s_m^*}(t)}{V g_{s_m^*}}$, the utility in (\ref{eqn:profit-oneshot6}) is negative, and hence a profit loss in terms of problem (\ref{eqn:profit-oneslot}) for cloud $i$; ii) if $\hat{b}_i^m(t)<\frac{Q_i^{s_m^*}(t)+ Z_i^{s_m^*}(t)}{V g_{s_m^*}}$, the utility in (\ref{eqn:profit-oneshot6}) is positive, and hence a profit gain in terms of problem (\ref{eqn:profit-oneslot}); and iii) if $\hat{b}_i^m(t)=\frac{Q_i^{s_m^*}(t)+ Z_i^{s_m^*}(t)}{V g_{s_m^*}}$, the utility in (\ref{eqn:profit-oneshot6}) is zero, and the profit of cloud $i$ in (\ref{eqn:profit-oneslot}) remains the same as not acquiring the VMs.

\vspace{1mm}
\noindent $\triangleright$ {\em Case 2: Cloud $i$'s sell-bid for type-$m$ VMs wins, but not the buy-bid.}

In this case, we know that: i) all type-$m$ VMs sold from cloud $i$ are used by other clouds for job scheduling according to constraint (\ref{eqn:capacity3}); and ii) $\hat{b}_i^m(t)=0$ and $\hat{\gamma}_i^m(t)=0$.

Similar to the analysis in Case 1, the optimal solutions to variables related to type-$m$ VMs can be obtained by solving the following optimization problem:

\vspace{-4mm}{\small
\begin{align}
\max&~~~~\sum_{s: m_s=m, s\in [1, S]}\sum_{j\neq i}\mu_{ij}^s(t) [Q_i^s(t)+ Z_i^s(t)]+V\hat{s}_i^m(t)\hat{\eta}_i^m(t)\notag\\&~~~~-V\beta_i(t)n_i^m(t) \label{eqn:profit-oneshot7}\\
s.t.&~~~~\text{Constraint (\ref{eqn:capacity1})-(\ref{eqn:capacity3})}.\notag
\end{align}}\vspace{-4mm}

In (\ref{eqn:profit-oneshot7}), we replace $\hat{\eta}_i^m(t)$ by $\sum_{j\neq i}\alpha_{ji}^m(t)$ based on Eqn.~(\ref{eqn:sell-cons}) and the fact in this case that $\alpha_{ii}^m(t)=0$, and replace $\mu_{ij}^s(t)$'s and $n_i^m(t)$'s with the optimal solutions in Eqn.~(\ref{eqn:schedule1}), (\ref{eqn:schedule2}) and Eqn.~(\ref{eqn:server-provision}). Then we have the following two cases:

\noindent  i) if $\frac{Q_i^{s_m^*}(t)+Z_i^{s_m^*}(t)}{V g_{s_m^*}}>\frac{\beta_i(t)}{C_i^m}$, problem (\ref{eqn:profit-oneshot7}) is equivalent to

\vspace{-4mm}{\small
\begin{align*}
\max&~~~~N_i^m [C_i^m \frac{Q_i^{s_m^*}(t)+Z_i^{s_m^*}(t)}{g_{s_m^*}} - V \beta_i(t)]\\
    &~~~~+\sum_{j\neq i}\alpha_{ji}^m(t)[V \hat{s}_i^m(t) - \frac{Q_i^{s_m^*}(t)+Z_i^{s_m^*}(t)}{g_{s_m^*}}],
\end{align*}}\vspace{-4mm}

\noindent where the true value of $\tilde{s}_i^m(t)$ should be $\frac{Q_i^{s_m^*}(t)+Z_i^{s_m^*}(t)}{Vg_{s_m^*}}$ according to the definition of true value of the price to sell a type-$m$ VM;

\noindent  ii) otherwise, problem (\ref{eqn:profit-oneshot7}) is equivalent to

\vspace{-4mm}{\small
\begin{align*}
\max&~~~~V\sum_{j\neq i}\alpha_{ji}^m(t) [\hat{s}_i^m(t)-\beta_i(t)/C_i^m],
\end{align*}}\vspace{-4mm}

\noindent where the true value of $\tilde{s}_i^m(t)$ should be $\beta_i(t)/C_i^m$ according to the definition of the true value of the price to sell a type-$m$ VM. Hence, we have derived the true values of  $\tilde{s}_i^m(t)$ given in Eqn.~(\ref{eqn:true-sell}).

\vspace{1mm}
\noindent $\triangleright$ {\em Case 3: Both Cloud $i$'s buy-bid and sell-bid for type-$m$ VMs win.}

In this case, the following properties hold:

\vspace{1mm}
\noindent \emph{Property 1. } If both cloud $i$'s buy-bid and sell-bid for type-$m$ VMs win, the cloud cannot buy a type-$m$ VM with a price strictly higher than its price to sell a type-$m$ VM, \emph{i.e.}, $\hat{s}_i^m(t)\geq  \hat{b}_i^m(t)$. Otherwise, there will be a positive profit loss at the cloud by self-trading its own type-$m$ VMs, which violates its individual rationality.

\vspace{1mm}
\noindent \emph{Property 2. } If both cloud $i$'s buy-bid and sell-bid for type-$m$ VMs win, the cloud cannot sell a type-$m$ VM with a price strictly higher than its price to buy a type-$m$ VM, \emph{i.e.}, $\hat{s}_i^m(t)\leq \hat{b}_i^m(t)$. Otherwise, the auctioneer has to pay a positive sum to compensate for the price difference for those inter-cloud traded type-$m$ VMs, which contradicts the ex-post budget balance property at the auctioneer.

\vspace{1mm}
\noindent \emph{Property 3. } Based on Properties 1 and 2, if both cloud $i$'s buy-bid and sell-bid for type-$m$ VMs win, the actual buy and sell prices at the cloud for type-$m$ VMs are the same, \emph{i.e.}, $\hat{s}_i^m(t)=\hat{b}_i^m(t)$.

According to Property 3, we derive that the overall profit gain at cloud $i$ for self-trading of type-$m$ VMs, $\alpha_{ii}^m(t)[\hat{s}_i^m(t)-\hat{b}_i^m(t)]$, is zero.

The optimal solutions to variables related to type-$m$ VMs can be obtained by solving the following optimization problem:

\vspace{-4mm}{\small
\begin{align}
\max&~~~~\sum_{s: m_s=m, s\in [1, S]}\sum_{j\in [1, F]}\mu_{ij}^s(t) [Q_i^s(t)+ Z_i^s(t)]-V\hat{b}_i^m(t)\hat{\gamma}_i^m(t)\notag\\
    &~~~~+V\hat{s}_i^m(t)\hat{\eta}_i^m(t)-V\beta_i(t)n_i^m(t) \label{eqn:profit-oneshot8}\\
s.t.&~~~~\text{Constraint (\ref{eqn:capacity1})-(\ref{eqn:capacity3})}.\notag
\end{align}}\vspace{-4mm}

In (\ref{eqn:profit-oneshot8}), we replace $\hat{\gamma}_i^m(t)$ by $\sum_{j\in [1, F]}\alpha_{ij}^m(t)$ based on Eqn.~(\ref{eqn:buy-cons}), and $\hat{\eta}_i^m(t)$ by $\sum_{j\in [1,F]}\alpha_{ji}^m(t)$ based on Eqn.~(\ref{eqn:sell-cons}).
We also replace $\mu_{ij}^s(t)$'s and $n_i^m(t)$'s with the optimal solutions in Eqn.~(\ref{eqn:schedule1}), (\ref{eqn:schedule2}) and Eqn.~(\ref{eqn:server-provision}). Then we have the following two cases:

\noindent  i) if $\frac{Q_i^{s_m^*}(t)+Z_i^{s_m^*}(t)}{V g_{s_m^*}}>\frac{\beta_i(t)}{C_i^m}$, problem (\ref{eqn:profit-oneshot8}) is equivalent to

\vspace{-4mm}{\small
\begin{align*}
\max&~~~~N_i^m [C_i^m \frac{Q_i^{s_m^*}(t)+Z_i^{s_m^*}(t)}{g_{s_m^*}} - V \beta_i(t)]\\
    &~~~~\sum_{j\neq i}\alpha_{ij}^m(t) [\frac{Q_i^{s_m^*}(t)+ Z_i^{s_m^*}(t)}{g_{s_m^*}}-V\hat{b}_i^m(t)]\\
    &~~~~+\sum_{j\neq i}\alpha_{ji}^m(t)[V \hat{s}_i^m(t) - \frac{Q_i^{s_m^*}(t)+Z_i^{s_m^*}(t)}{g_{s_m^*}}],
\end{align*}}\vspace{-4mm}

\noindent where the true values of $\tilde{b}_i^m(t)$ and $\tilde{s}_i^m(t)$ should both be $\frac{Q_i^{s_m^*}(t)+ Z_i^{s_m^*}(t)}{V g_{s_m^*}}$ according to the definition of true value of the price to buy/sell a type-$m$ VM;

\noindent  ii) otherwise, problem (\ref{eqn:profit-oneshot8}) is equivalent to

\vspace{-4mm}{\small
\begin{align*}
\max&~~~~\sum_{j\neq i}\alpha_{ij}^m(t) [\frac{Q_i^{s_m^*}(t)+ Z_i^{s_m^*}(t)}{g_{s_m^*}}-V\hat{b}_i^m(t)]\\
    &~~~~V\sum_{j\neq i}\alpha_{ji}^m(t) [\hat{s}_i^m(t)-\beta_i(t)/C_i^m],
\end{align*}}\vspace{-4mm}

\noindent where the true value of $\tilde{b}_i^m(t)$ should be $\frac{Q_i^{s_m^*}(t)+ Z_i^{s_m^*}(t)}{V g_{s_m^*}}$, and the true value of $\tilde{s}_i^m(t)$ should be $\beta_i(t)/C_i^m$ according to the definition of the true value of the price to buy/sell a type-$m$ VM. Hence, we have derived the true values of $\tilde{b}_i^m(t)$ and $\tilde{s}_i^m(t)$ given in Eqn.~(\ref{eqn:true-buy}) and Eqn.~(\ref{eqn:true-sell}).



\section{Derivation of the one-shot optimization problem for social welfare maximization 
}\label{appendix:drift2}

Squaring the queuing laws (\ref{eqn:queue1}) and (\ref{eqn:queue2}), we can derive the following inequality

\vspace{-4mm}{\small
\begin{align*}
\Delta(\Theta(t))\leq& \frac{1}{2}\sum_{i\in [1, F]}\sum_{s\in [1, S]}[[\sum_{j=1}^F \mu_{ij}^s(t)+D_i^s(t)]^2 + [r_i^s(t)]^2\\
& + 2Q_i^s(t)[r_i^s(t)-\sum_{j=1}^F \mu_{ij}^s(t)-D_i^s(t)] + [\mathbf{1}_{\{Q_i^s(t)>0\}}\epsilon_s]^2\\
& + [D_i^s(t)+ \mathbf{1}_{\{Q_i^s(t)=0\}}\sum_{j=1}^F C_j^{m_s}N_j^{m_s}/g_s\\ &+ \mathbf{1}_{\{Q_i^s(t)>0\}}\sum_{j=1}^F \mu_{ij}^s(t)]^2\\
& + 2Z_i^s(t)[\mathbf{1}_{\{Q_i^s(t)>0\}}[\epsilon_s-\sum_{j=1}^F \mu_{ij}^s(t)]-D_i^s(t)\\
&-\mathbf{1}_{\{Q_i^s(t)=0\}}\sum_{j=1}^F C_j^{m_s}N_j^{m_s}/g_s]]\\
\leq & \frac{1}{2}\sum_{i\in [1, F]}\sum_{s\in [1, S]}[[\sum_{j=1}^F C_j^{m_s}N_j^{m_s}/g_s+D_i^{s(max)}]^2 + [R_i^s]^2\\
&+ 2Q_i^s(t)[r_i^s(t)-\sum_{j=1}^F \mu_{ij}^s(t)-D_i^s(t)]\\
&+ [\epsilon_s]^2 + [D_i^{s(max)}+\sum_{j=1}^F C_j^{m_s}N_j^{m_s}/g_s]^2 \\
& + 2Z_i^s(t)[\epsilon_s-\sum_{j=1}^F \mu_{ij}^s(t)-D_i^s(t)]]\\
=& B + \sum_{i\in [1, F]}\sum_{s\in [1, S]}[Q_i^s(t)[r_i^s(t)-\sum_{j=1}^F \mu_{ij}^s(t)-D_i^s(t)]\\ &+ Z_i^s(t)[\epsilon_s-\sum_{j=1}^F \mu_{ij}^s(t)-D_i^s(t)]],
\end{align*}}\vspace{-4mm}

\noindent where {\small $B=\frac{1}{2}\sum_{i\in [1, F]}\sum_{s\in [1, S]}[[\sum_{j=1}^F C_j^{m_s}N_j^{m_s}/g_s+D_i^{s(max)}]^2 + [R_i^s]^2+[\epsilon_s]^2 + [D_i^{s(max)}+\sum_{j=1}^F C_j^{m_s}N_j^{m_s}/g_s]^2]$}.

By applying the drift-plus-penalty framework (or equivalently, drift-minus-profit here), we subtract the weighted one-shot social welfare in time $t$, \emph{i.e.}, $V\cdot \sum_{i\in [1, F]}[\sum_{m\in [1, M]} [\hat{s}_i^m(t) \hat{\eta}_i^m(t) - \hat{b}_i^m(t) \hat{\gamma}_i^m(t)-\beta_i(t) n_i^m(t)]+ \sum_{s\in [1, S]}[p_i^s(t)\cdot r_i^s(t)- D_i^s(t) \xi_i^s]]$, on both sides of the above inequality. Hence, we have the following inequality:


\vspace{-5mm}{\small\begin{align*}
&\Delta(\Theta(t))-V\cdot \sum_{i\in [1, F]}[\sum_{m\in [1, M]} [\hat{s}_i^m(t) \hat{\eta}_i^m(t) - \hat{b}_i^m(t) \hat{\gamma}_i^m(t)\notag\\
&-\beta_i(t) n_i^m(t)] + \sum_{s\in [1, S]}[p_i^s(t)\cdot r_i^s(t)- D_i^s(t) \xi_i^s]] \notag\\
\leq&  B + \sum_{i\in [1, F]}\sum_{s\in [1, S]}[Q_i^s(t) r_i^s(t) + Z_i^s(t) \epsilon_s - V p_i^s(t)\cdot r_i^s(t)] \notag\\
&- \varphi_1(t)-\varphi_2(t),
\end{align*}}\vspace{-5mm}

\noindent where $V>0$ is a user-defined positive constant that can be understood as the
weight of profit in the expression.

\section{Proof to Theorem \ref{theorem:truthfulness}}\label{appendix:truthfulness}

This theorem can be proved based on the following two lemmas.

\begin{lemma}[Monotonic winner determination]\label{lemma:monotonic}
Given prices of buy-bids $\{b_1^m(t), \ldots, b_i^m(t),\ldots, b_F^m(t)\}$ and sell-bids $\{s_1^m(t), \ldots, s_i^m(t), \ldots, s_F^m(t)\}$, we have that
\begin{enumerate}
\item If cloud $i$ wins the buy-bid by bidding with $b_i^m(t)$, then cloud $i$ also wins the buy-bid by bidding with $b'>b_i^m(t)$;
\item If cloud $i$ wins the sell-bid by bidding with $s_i^m(t)$, then cloud $i$ also wins the sell-bid by bidding with $s'<s_i^m(t)$.
\end{enumerate}
\end{lemma}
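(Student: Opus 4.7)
My plan is to attack both parts by direct case analysis, tracking exactly what the winner determination procedure of Sec.~\ref{sec:auction_mechanism} does when cloud $i$ perturbs its own bid while every other cloud's bid is held fixed. The key observation I intend to exploit is that the mechanism is run separately for each VM type $m$ and sorts buy-bids in descending order and sell-bids in ascending order, so changing cloud $i$'s single bid only moves $i$'s own position in the corresponding sorted list; in particular, the second-highest buy price $\theta_2^m(t)$ (when $i$ only perturbs its buy-bid) and the full buy-side sorted list (when $i$ only perturbs its sell-bid) remain unchanged, which is what makes the winning condition in (\ref{eqn:j'}) easy to track.

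\textbf{Buy-bid monotonicity (part 1).} First I would argue that cloud $i$ winning with $b_i^m(t)$ implies, under whatever tie-break the auctioneer uses, that $b_i^m(t) = \theta_1^m(t)$ and that at least two sell-bids $\vartheta_1^m(t), \vartheta_2^m(t)$ are no larger than $\theta_2^m(t)$. If cloud $i$ instead bids $b' > b_i^m(t)$, then $b'$ strictly exceeds the old $\theta_1^m(t)$ and hence exceeds every other cloud's buy-price, so in the new sorted list $b'$ is the unique top bid and the new second-highest price equals the old $\theta_2^m(t)$ (which came from some other cloud). Since the sell side is untouched, the two sell-bids that originally witnessed $\vartheta_2^m(t)\leq \theta_2^m(t)$ still do, so the winning condition of Sec.~\ref{sec:auction_mechanism} triggers again and cloud $i$ remains the winning buyer.

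\textbf{Sell-bid monotonicity (part 2) and the obstacle.} Symmetrically, if cloud $i$ wins with $s_i^m(t)$, then $s_i^m(t)$ sits at some rank $k\in\{1,\dots,j'-1\}$ in the ascending sort, so $s_i^m(t)\le \vartheta_{j'-1}^m(t)\le \theta_2^m(t)$. When cloud $i$ bids $s' < s_i^m(t)$, its new rank is at most $k$; all other sellers' relative order and all buy-bids, and hence $\theta_2^m(t)$, are unchanged; and the new critical index $j'_{\mathrm{new}}$ satisfies $j'_{\mathrm{new}}\ge j'$ because the set of sell-bids priced no higher than $\theta_2^m(t)$ can only grow when one of its members is lowered. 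So cloud $i$'s new rank stays strictly below $j'_{\mathrm{new}}$, keeping it in the winning set. The only subtle point, which I expect to be the main obstacle, is the tie-breaking convention and the book-keeping when a strict drop in $s_i^m(t)$ creates fresh ties at lower prices or shifts the cut-off $\vartheta_{j'}^m(t)$; I plan to handle this by invoking the ``ties broken arbitrarily'' convention of Sec.~\ref{sec:auction_mechanism} together with the observation that $i$'s strict price improvement on its own side can only (weakly) improve its rank, so whatever tie-break previously admitted $i$ must continue to admit it.
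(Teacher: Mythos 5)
Your proposal is correct and follows essentially the same direct case analysis as the paper's proof (Appendix C): raising the winning buy-bid keeps cloud $i$ at the top of the sorted buy list while leaving $\theta_2^m(t)$ and the sell side fixed, and lowering the winning sell-bid only improves $i$'s rank while $j'$ and $\vartheta_{j'}^m(t)$ stay put. You are in fact slightly more careful than the paper in explicitly checking that the ``at least two sell-bids not exceeding $\theta_2^m(t)$'' trigger condition is preserved and that the critical index $j'$ cannot decrease, two book-keeping points the paper leaves implicit.
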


\begin{proof}
We prove the cases in the lemma respectively as follows,
\begin{enumerate}
\item Since cloud $i$ wins the buy-bid with $b_i^m(t)$, we know that $b_i^m(t)$ is the largest among all buy-bids, \emph{i.e.}, $b_i^m(t)\geq b_j^m(t)$, $\forall j\neq i, j\in [1, F]$. With $b'>b_i^m(t)$, we have that $b'> b_j^m(t)$, $\forall j\neq i, j\in [1, F]$. Hence, if cloud $i$ proposes a buy-bid with $b'$, it can still win the buy-bid according to our winner determination decision, since its buy-bid price is still the largest among all buy-bids.
\item Since cloud $i$ wins the sell-bid with $s_i^m(t)$ and the sell-bids are sorted in ascending order, we know that $s_i^m(t)\leq \vartheta_{j'}^m(t)$, where $j'$ is the critical index as defined in Eqn.~(\ref{eqn:j'}), and $s_i^m(t)$ is among the $(j'-1)^{th}$ lowest sell-bids in ascending order. With $s'<s_i^m(t)$, we also have that $s'< \vartheta_{j'}^m(t)$ and $s'$ is among the $(j'-1)^{th}$ lowest sell-bids in ascending order. Hence, if cloud $i$ propose a sell-bid with $s'$, it can still win the sell-bid according to our winner determination decision, since its sell-bid price is still among the $(j'-1)^{th}$ lowest sell-bids.
\end{enumerate}
\end{proof}

\begin{lemma}[Bid-independent pricing]\label{lemma:bid-independent}
Given prices of buy-bids $\{b_1^m(t), \ldots, b_i^m(t),\ldots, b_F^m(t)\}$ and sell-bids $\{s_1^m(t), \ldots, s_i^m(t), \ldots, s_F^m(t)\}$, we have that
\begin{enumerate}
\item If cloud $i$ wins the buy-bid by bidding with $b_i^m(t)$ and $b'$, the charged price $\hat{b}_i^m(t)$ to cloud $i$ is the same for both;
\item If cloud $i$ wins the sell-bid by bidding with $s_i^m(t)$ and $s'$, the charged price $\hat{s}_i^m(t)$ to cloud $i$ is the same for both.
\end{enumerate}
\end{lemma}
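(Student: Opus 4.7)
The plan is to prove both parts by showing that the clearing price charged to a winning cloud $i$ depends only on the bids of the \emph{other} clouds, so varying cloud $i$'s own bid (among winning values) cannot change the charge. The entire argument is combinatorial and rests directly on the winner/pricing rules in Eqn.~(\ref{eqn:buy-price})--(\ref{eqn:sell-price}) together with the characterization of $\theta_2^m(t)$ and $\vartheta_{j'}^m(t)$.

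For part 1 (buyers), the charge is $\hat{b}_i^m(t)=\theta_2^m(t)$, the second-largest buy-bid. If cloud $i$ is the winner in both the bid-$b_i^m(t)$ scenario and the bid-$b'$ scenario, then in either scenario cloud $i$'s bid occupies the top position of the sorted buy-bid list (with a fixed tiebreaking rule independent of the bid value). Hence in both scenarios $\theta_2^m(t)$ equals $\max_{j\neq i} b_j^m(t)$, which involves only the bids of other clouds. The two charges are therefore identical. I would write this out in one short display followed by a one-line justification.

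For part 2 (sellers), the charge is $\hat{s}_i^m(t)=\vartheta_{j'}^m(t)$, the largest sell-bid not exceeding $\theta_2^m(t)$. First, by the same reasoning as in part 1, $\theta_2^m(t)$ is a function only of all buy-bids (and cloud $i$ is changing a sell-bid), so $\theta_2^m(t)$ is fixed across the two scenarios. Let $A=\{s_j^m(t):j\neq i,\ s_j^m(t)\le \theta_2^m(t)\}$; this set is the same in both scenarios. The critical claim is that $\vartheta_{j'}^m(t)=\max A$ in each scenario. Indeed, $\vartheta_{j'}^m(t)$ is by definition the maximum element of the enlarged set $A\cup\{s\}$ (where $s$ is cloud $i$'s bid, which satisfies $s\le \theta_2^m(t)$ since $i$ wins), but because $i$ is a \emph{winning} seller its bid sits at an index strictly below $j'$, so $s<\vartheta_{j'}^m(t)$; this forces $\vartheta_{j'}^m(t)$ to come from $A$, giving $\vartheta_{j'}^m(t)=\max A$. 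Since $\max A$ is independent of $s$, the charge is the same for $s_i^m(t)$ and $s'$.

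The main subtlety (and the place I would be most careful in writing) is the seller case, because the \emph{critical index} $j'$ itself can shift when cloud $i$ replaces $s_i^m(t)$ by $s'$ — the sorted sequence of sell-bids reorders. The key is to note that even when the index $j'$ moves, the \emph{value} $\vartheta_{j'}^m(t)$ does not, because it is pinned down as $\max A$. A secondary nuisance is ties: if cloud $i$'s bid equals another bid exactly at the $\vartheta_{j'}^m(t)$ boundary, the tiebreaking rule determines whether $i$ wins; but the lemma hypothesizes that $i$ wins under both bids, which already commits the tiebreaking consistently, so this issue does not affect the argument.
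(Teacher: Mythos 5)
Your proof is correct and takes essentially the same approach as the paper: the clearing price is set by a marginal bid ($\theta_2^m(t)$ for buyers, $\vartheta_{j'}^m(t)$ for sellers) that, conditional on cloud $i$ winning, is never cloud $i$'s own bid and hence does not move when cloud $i$'s bid moves. The paper's sell-bid case simply asserts in one line that the \emph{value} $\vartheta_{j'}^m(t)$ is unchanged; you sharpen this by pinning $\vartheta_{j'}^m(t)$ down explicitly as $\max A$ where $A$ is the set of \emph{other} clouds' sell prices not exceeding $\theta_2^m(t)$, and arguing that since a winner sits at an index strictly below $j'$, the $j'$-th slot is always occupied by an element of $A$ -- that is a genuine tightening of the argument, not a different route, and it also gives a cleaner answer to the index-shifting worry you raise. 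One small note: the tie-breaking remark is correct but could be stated more affirmatively; the hypothesis that $i$ wins under both bids already commits the tie-break in $i$'s favor, so $\vartheta_{j'}^m(t)$ remains supplied by the other cloud at the boundary.
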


\begin{proof}
We prove the the cases in the lemma respectively as follows,
\begin{enumerate}
\item Since cloud $i$ wins the buy-bid, the charged price $\hat{b}_i^m(t)$ should be $\theta_2^m(t)$, which is the second largest buy-bid price independent of cloud $i$'s buy-bid, according to our pricing scheme in the auction. And we have that $b_i^m(t)\geq \theta_2^m(t)$ and $b'\geq \theta_2^m(t)$. We also have that, as long as cloud $i$ wins the buy-bid, the value of the second largest buy-bid price $\theta_2^m(t)$ does not change. Hence, cloud $i$ should be charged with $\theta_2^m(t)$ by bidding with no matter $b_i^m(t)$ or $b'$.
\item Since cloud $i$ wins the sell-bid, the charged price $\hat{s}_i^m(t)$ should be $\vartheta_{j'}^m(t)$, which is the $j'^{th}$ lowest sell-bid price, according to our pricing scheme in the auction. And we have that $s_i^m(t)\leq \vartheta_{j'}^m(t)$ and $s'\leq \vartheta_{j'}^m(t)$. As long as cloud $i$ wins the sell-bid, we have that $s_i^m(t)$ and $s'$ should be among the $(j'-1)^{th}$ lowest sell-bid prices and the value of $\vartheta_{j'}^m(t)$ does not change. Hence, cloud $i$ should be charged with the same price with $\vartheta_{j'}^m(t)$ by bidding with no matter $b_i^m(t)$ or $b'$ if it wins the sell-bid.
\end{enumerate}
\end{proof}

We can then prove the Theorem \ref{theorem:truthfulness} to show that any cloud $i$ cannot obtain higher utility gain by bidding untruthfully, \emph{i.e.}, $b_i^m(t)\neq \tilde{b}_i^m(t)$ and/or $s_i^m(t)\neq \tilde{s}_i^m(t)$, $\forall m\in [1,M]$, by analyzing all possible auction results.


\noindent \textbf{Case 1} -- \emph{Cloud $i$ wins both buy-bid and sell-bid with truthful bidding}: In this case, the charged/paid prices for buy-bid and sell-bid are $\theta_2^m(t)$ and $\vartheta_{j'}^m(t)$, respectively. We discuss the all possibly cases of untruthful bidding as follows,

\begin{itemize}
\item \emph{Bid untruthfully with $b_i^m(t)>\tilde{b}_i^m(t)$ and $s_i^m(t)>\tilde{s}_i^m(t)$}: According to Lemma \ref{lemma:monotonic}, cloud $i$ still wins the buy-bid but may either win or lose the sell-bid. If it also wins the sell-bid, we have that the charged/paid prices for buy-bid and sell-bid are $\theta_2^m(t)$ and $\vartheta_{j'}^m(t)$, respectively, according to Lemma \ref{lemma:bid-independent}; the utility gain is zero by bidding untruthfully since the charged/paid prices are the same with that by bidding truthfully. If it loses the sell-bid, we have that the charged prices for buy-bid and sell-bid are $\theta_2^m(t)$ and $0$, respectively, according to Lemma \ref{lemma:bid-independent} and our pricing scheme, and no VM is sold by cloud $i$; the utility gain is non-positive by bidding untruthfully, since the charged buy-bid price remains the same while there is non-negative utility loss, \emph{i.e.}, $[\vartheta_{j'}^m(t)-\tilde{s}_i^m(t)]\hat{\eta}_i^m(t)$ with $\vartheta_{j'}^m(t)\geq \tilde{s}_i^m(t)$, by losing the sell-bid.

\item \emph{Bid untruthfully with $b_i^m(t)>\tilde{b}_i^m(t)$ and $s_i^m(t)=\tilde{s}_i^m(t)$}: According to Lemma \ref{lemma:monotonic}, cloud $i$ still wins the buy-bid. It is also easy to see that cloud $i$ still wins the sell-bid with the same bidding price for sell-bid. Hence, the charged/paid prices for buy-bid and sell-bid are $\theta_2^m(t)$ and $\vartheta_{j'}^m(t)$, respectively, according to Lemma \ref{lemma:bid-independent}. The utility gain is zero by bidding untruthfully.

\item \emph{Bid untruthfully with $b_i^m(t)>\tilde{b}_i^m(t)$ and $s_i^m(t)<\tilde{s}_i^m(t)$}: According to Lemma \ref{lemma:monotonic}, cloud $i$ still wins both the buy-bid and the sell-bid. The charged/paid prices for buy-bid and sell-bid are $\theta_2^m(t)$ and $\vartheta_{j'}^m(t)$, respectively, according to Lemma \ref{lemma:bid-independent}. The utility gain is zero by bidding untruthfully.

\item \emph{Bid untruthfully with $b_i^m(t)=\tilde{b}_i^m(t)$ and $s_i^m(t)>\tilde{s}_i^m(t)$}: Cloud $i$ stills wins the buy-bid with the same bidding price for buy-bid. However, it may either win or lose the sell-bid. If it also wins the sell-bid, we have that the charged/paid prices for buy-bid and sell-bid are $\theta_2^m(t)$ and $\vartheta_{j'}^m(t)$, respectively, according to Lemma \ref{lemma:bid-independent} and our pricing scheme; the utility gain is zero by bidding untruthfully since the charged/paid prices are the same with that by bidding truthfully. If it loses the sell-bid, we have that the charged prices for buy-bid and sell-bid are $\theta_2^m(t)$ and $0$, respectively, according to our pricing scheme, and no VM is sold by cloud $i$; the utility gain is non-positive by bidding untruthfully, since the charged buy-bid price remains the same while there is non-negative utility loss, by losing the sell-bid.

\item \emph{Bid untruthfully with $b_i^m(t)=\tilde{b}_i^m(t)$ and $s_i^m(t)<\tilde{s}_i^m(t)$}: According to Lemma \ref{lemma:monotonic}, cloud $i$ still wins the sell-bid. It is also easy to see that cloud $i$ still wins the buy-bid with the same bidding price for buy-bid. Hence, the charged/paid prices for buy-bid and sell-bid are $\theta_2^m(t)$ and $\vartheta_{j'}^m(t)$, respectively, according to Lemma \ref{lemma:bid-independent}. The utility gain is zero by bidding untruthfully.

\item \emph{Bid untruthfully with $b_i^m(t)<\tilde{b}_i^m(t)$ and $s_i^m(t)>\tilde{s}_i^m(t)$}: The cloud can either win both buy and sell bids, or win buy-bid only, or win sell-bid only, or lose both bids. If the cloud still wins both bids, we have that the charged/paid prices for buy-bid and sell-bid are $\theta_2^m(t)$ and $\vartheta_{j'}^m(t)$, respectively, according to our pricing scheme; the utility gain is zero by bidding untruthfully since the charged/paid prices are the same with that by bidding truthfully. If it only wins the buy-bid, we have that the charged prices for buy-bid and sell-bid are $\theta_2^m(t)$ and $0$, respectively, according to Lemma our pricing scheme, and no VM is sold by cloud $i$; the utility gain is non-positive by bidding untruthfully, since the charged buy-bid price remains the same while there is non-negative utility loss, by losing the sell-bid. If it only wins the sell-bid, we have that the charged prices for buy-bid and sell-bid are $0$ and $\vartheta_{j'}^m(t)$, respectively, according to our pricing scheme, and no VM is bought by cloud $i$; the utility gain is non-positive by bidding untruthfully, since the charged sell-bid price remains the same while there is non-negative utility loss, \emph{i.e.}, $[\tilde{b}_i^m(t)-\theta_2^m(t)]\cdot \hat{\gamma}_i^m(t)$ with $\tilde{b}_i^m(t)\geq \theta_2^m(t)$, by losing the buy-bid. If it loses both bids, we have that the charged prices for buy-bid and sell-bid are both $0$, according to our pricing scheme, and no VM is bought by or sold by cloud $i$; the utility gain is non-positive by bidding untruthfully, since there is non-negative utility loss, \emph{i.e.}, $[\tilde{b}_i^m(t)-\theta_2^m(t)]\cdot \hat{\gamma}_i^m(t)+[\vartheta_{j'}^m(t)-\tilde{s}_i^m(t)]\cdot \hat{\eta}_i^m(t)$ with $\tilde{b}_i^m(t)\geq \theta_2^m(t)$ and $\vartheta_{j'}^m(t)\geq \tilde{s}_i^m(t)$, by losing the both bids.

\item \emph{Bid untruthfully with $b_i^m(t)<\tilde{b}_i^m(t)$ and $s_i^m(t)=\tilde{s}_i^m(t)$}: Cloud $i$ stills wins the sell-bid with the same bidding price for sell-bid. However, it may either win or lose the buy-bid. If it also wins the buy-bid, we have that the charged/paid prices for buy-bid and sell-bid are $\theta_2^m(t)$ and $\vartheta_{j'}^m(t)$, respectively, according to our pricing scheme; the utility gain is zero by bidding untruthfully since the charged/paid prices are the same with that by bidding truthfully. If it loses the buy-bid, we have that the charged prices for buy-bid and sell-bid are $0$ and $\vartheta_{j'}^m(t)$, respectively, according to our pricing scheme, and no VM is bought by cloud $i$; the utility gain is non-positive by bidding untruthfully, since the charged sell-bid price remains the same while there is non-negative utility loss, by losing the buy-bid.

\item \emph{Bid untruthfully with $b_i^m(t)<\tilde{b}_i^m(t)$ and $s_i^m(t)<\tilde{s}_i^m(t)$}: According to Lemma \ref{lemma:monotonic}, cloud $i$ still wins the sell-bid but may either win or lose the buy-bid. If it also wins the buy-bid, we have that the charged/paid prices for buy-bid and sell-bid are $\theta_2^m(t)$ and $\vartheta_{j'}^m(t)$, respectively, according to Lemma \ref{lemma:bid-independent} and our pricing scheme; the utility gain is zero by bidding untruthfully since the charged/paid prices are the same with that by bidding truthfully. If it loses the buy-bid, we have that the charged prices for buy-bid and sell-bid are $0$ and $\vartheta_{j'}^m(t)$, respectively, according to Lemma \ref{lemma:bid-independent} and our pricing scheme, and no VM is bought by cloud $i$; the utility gain is non-positive by bidding untruthfully, since the charged sell-bid price remains the same while there is non-negative utility loss, by losing the buy-bid.
\end{itemize}

\noindent \textbf{Case 2} -- \emph{Cloud $i$ wins buy-bid but loses sell-bid with truthful bidding}: In this case, the charged/paid prices for buy-bid and sell-bid are $\theta_2^m(t)$ and $0$, respectively. we discuss the all possible cases of untruthful bidding as follows,

\begin{itemize}
\item \emph{Bid untruthfully with $b_i^m(t)>\tilde{b}_i^m(t)$ and $s_i^m(t)>\tilde{s}_i^m(t)$}: According to Lemma \ref{lemma:monotonic}, cloud $i$ still wins the buy-bid. It is also easy to see that cloud $i$ still loses the sell-bid, since otherwise we will have a contradiction to Lemma \ref{lemma:monotonic}. Hence, the charged/paid prices for buy-bid and sell-bid are $\theta_2^m(t)$ and $0$, respectively, according to Lemma \ref{lemma:bid-independent} and our pricing scheme. The utility gain is zero by bidding untruthfully.

\item \emph{Bid untruthfully with $b_i^m(t)>\tilde{b}_i^m(t)$ and $s_i^m(t)=\tilde{s}_i^m(t)$}: According to Lemma \ref{lemma:monotonic}, cloud $i$ still wins the buy-bid. It is also easy to see that cloud $i$ still loses the sell-bid with the same bidding price for sell-bid. Hence, the charged/paid prices for buy-bid and sell-bid are $\theta_2^m(t)$ and $0$, respectively, according to Lemma \ref{lemma:bid-independent} and pricing scheme. The utility gain is zero by bidding untruthfully.

\item \emph{Bid untruthfully with $b_i^m(t)>\tilde{b}_i^m(t)$ and $s_i^m(t)<\tilde{s}_i^m(t)$}: According to Lemma \ref{lemma:monotonic}, cloud $i$ still wins the buy-bid but may either win or lose the sell-bid. If it loses the sell-bid, we have that the charged prices for buy-bid and sell-bid are $\theta_2^m(t)$ and $0$, respectively, according to Lemma \ref{lemma:bid-independent} and our pricing scheme, and no VM is sold by cloud $i$; the utility gain is zero by bidding untruthfully since the charged/paid prices are the same with that by bidding truthfully. If it also wins the sell-bid, we have that the charged/paid prices for buy-bid and sell-bid are $\theta_2^m(t)$ and $\vartheta_{j'}^m(t)\leq \tilde{s}_i^m(t)$, respectively, according to Lemma \ref{lemma:bid-independent} and our pricing scheme; the utility gain is non-positive by bidding untruthfully, since the charged buy-bid price remains the same while there is non-negative utility loss, \emph{i.e.}, $[\tilde{s}_i^m(t)-\vartheta_{j'}^m(t)]\hat{\eta}_i^m(t)$ with $\vartheta_{j'}^m(t)\leq \tilde{s}_i^m(t)$, by winning the sell-bid.

\item \emph{Bid untruthfully with $b_i^m(t)=\tilde{b}_i^m(t)$ and $s_i^m(t)>\tilde{s}_i^m(t)$}: Cloud $i$ still wins the buy-bid with the same bidding price. It is also easy to see that cloud $i$ still loses the sell-bid, since otherwise we will have a contradiction to Lemma \ref{lemma:monotonic}. Hence, the charged/paid prices for buy-bid and sell-bid are $\theta_2^m(t)$ and $0$, respectively, according to our pricing scheme. The utility gain is zero by bidding untruthfully.

\item \emph{Bid untruthfully with $b_i^m(t)=\tilde{b}_i^m(t)$ and $s_i^m(t)<\tilde{s}_i^m(t)$}: Cloud $i$ still wins the buy-bid with the same bidding price. However, it may either win or lose the sell-bid. If it loses the sell-bid, we have that the charged prices for buy-bid and sell-bid are $\theta_2^m(t)$ and $0$, respectively, according to our pricing scheme, and no VM is sold by cloud $i$; the utility gain is zero by bidding untruthfully since the charged/paid prices are the same with that by bidding truthfully. If it also wins the sell-bid, we have that the charged/paid prices for buy-bid and sell-bid are $\theta_2^m(t)$ and $\vartheta_{j'}^m(t)\leq \tilde{s}_i^m(t)$, respectively, according to our pricing scheme; the utility gain is non-positive by bidding untruthfully, since the charged buy-bid price remains the same while there is non-negative utility loss, by winning the sell-bid.

\item \emph{Bid untruthfully with $b_i^m(t)<\tilde{b}_i^m(t)$ and $s_i^m(t)>\tilde{s}_i^m(t)$}: It is easy to see that cloud $i$ still loses the sell-bid, since otherwise we will have a contradiction to Lemma \ref{lemma:monotonic}. However, it may either win or lose the buy-bid. If it also wins the buy-bid, we have that the charged/paid prices for buy-bid and sell-bid are $\theta_2^m(t)$ and $0$, respectively, according to our pricing scheme; the utility gain is zero by bidding untruthfully since the charged/paid prices are the same with that by bidding truthfully. If it loses the buy-bid, we have that the charged prices for buy-bid and sell-bid are $0$ and $0$, respectively, according to Lemma our pricing scheme, and no VM is bought or sold by cloud $i$; the utility gain is non-positive by bidding untruthfully, since the charged sell-bid price remains the same while there is non-negative utility loss, \emph{i.e.}, $[\tilde{b}_i^m(t)-\theta_2^m(t)]\cdot \hat{\gamma}_i^m(t)$ with $\tilde{b}_i^m(t)\geq \theta_2^m(t)$, by losing the buy-bid.

\item \emph{Bid untruthfully with $b_i^m(t)<\tilde{b}_i^m(t)$ and $s_i^m(t)=\tilde{s}_i^m(t)$}: The cloud $i$ still loses the sell-bid with the same bidding price. However, it may either win or lose the buy-bid. If it also wins the buy-bid, we have that the charged/paid prices for buy-bid and sell-bid are $\theta_2^m(t)$ and $0$, respectively, according to our pricing scheme; the utility gain is zero by bidding untruthfully since the charged/paid prices are the same with that by bidding truthfully. If it loses the buy-bid, we have that the charged prices for buy-bid and sell-bid are $0$ and $0$, respectively, according to our pricing scheme, and no VM is bought or sold by cloud $i$; the utility gain is non-positive by bidding untruthfully, since the charged sell-bid price remains the same while there is non-negative utility loss, by losing the buy-bid.

\item \emph{Bid untruthfully with $b_i^m(t)<\tilde{b}_i^m(t)$ and $s_i^m(t)<\tilde{s}_i^m(t)$}: The cloud can either win both buy and sell bids, or win buy-bid only, or win sell-bid only, or lose both bids. If the cloud wins both bids, we have that the charged/paid prices for buy-bid and sell-bid are $\theta_2^m(t)$ and $\vartheta_{j'}^m(t)\leq \tilde{s}_i^m(t)$, respectively, according to our pricing scheme; the utility gain is zero by bidding untruthfully since the charged buy-bid price remains the same while there is non-negative utility loss, by winning the sell-bid. If it only wins the buy-bid, we have that the charged prices for buy-bid and sell-bid are $\theta_2^m(t)$ and $0$, respectively, according to our pricing scheme, and no VM is sold by cloud $i$; the utility gain is zero by bidding untruthfully since the charged/paid prices are the same with that by bidding truthfully. If it only wins the sell-bid, we have that the charged prices for buy-bid and sell-bid are $0$ and $\vartheta_{j'}^m(t)\leq \tilde{s}_i^m(t)$, respectively, according to our pricing scheme, and no VM is bought by cloud $i$; the utility gain is non-positive by bidding untruthfully, since there is non-negative utility loss, \emph{i.e.}, $[\tilde{b}_i^m(t)-\theta_2^m(t)]\cdot \hat{\gamma}_i^m(t)-[\vartheta_{j'}^m(t)-\tilde{s}_i^m(t)]\cdot \hat{\eta}_i^m(t)$ with $\tilde{b}_i^m(t)\geq \theta_2^m(t)$ and $\vartheta_{j'}^m(t)\leq \tilde{s}_i^m(t)$, by losing the buy-bid while winning the sell-bid. If it loses both bids, we have that the charged prices for buy-bid and sell-bid are both $0$, according to our pricing scheme, and no VM is bought by or sold by cloud $i$; the utility gain is non-positive by bidding untruthfully, since the charged sell-bid price remains the same while there is non-negative utility loss, \emph{i.e.}, $[\tilde{b}_i^m(t)-\theta_2^m(t)]\cdot \hat{\gamma}_i^m(t)$ with $\tilde{b}_i^m(t)\geq \theta_2^m(t)$, by losing the buy-bid.

\end{itemize}

\noindent \textbf{Case 3} -- \emph{Cloud $i$ wins sell-bid but loses buy-bid with truthful bidding}: In this case, the charged/paid prices for buy-bid and sell-bid are $0$ and $\vartheta_{j'}^m(t)$, respectively. we discuss the all possible cases of untruthful bidding as follows,

\begin{itemize}
\item \emph{Bid untruthfully with $b_i^m(t)>\tilde{b}_i^m(t)$ and $s_i^m(t)>\tilde{s}_i^m(t)$}: The cloud can either win both buy and sell bids, or win buy-bid only, or win sell-bid only, or lose both bids. If the cloud wins both bids, we have that the charged/paid prices for buy-bid and sell-bid are $\theta_2^m(t)\geq \tilde{b}_i^m(t)$ and $\vartheta_{j'}^m(t)$, respectively, according to our pricing scheme; the utility gain is non-positive by bidding untruthfully since the charged sell-bid price remains the same while there is non-negative utility loss, \emph{i.e.}, $[\theta_2^m(t)-\tilde{b}_i^m(t)]\cdot \hat{\gamma}_i^m(t)$ with $\tilde{b}_i^m(t)\leq \theta_2^m(t)$, by winning the buy-bid. If it only wins the buy-bid, we have that the charged prices for buy-bid and sell-bid are $\theta_2^m(t)$ and $0$, respectively, according to our pricing scheme, and no VM is sold by cloud $i$; the utility gain is non-positive by bidding untruthfully since there is non-negative utility loss, \emph{i.e.}, $[\theta_2^m(t)-\tilde{b}_i^m(t)]\cdot \hat{\gamma}_i^m(t)+[\vartheta_{j'}^m(t)-\tilde{s}_i^m(t)]\cdot \hat{\eta}_i^m(t)$ with $\tilde{b}_i^m(t)\leq \theta_2^m(t)$ and $\vartheta_{j'}^m(t)\geq \tilde{s}_i^m(t)$, by winning the buy-bid while losing the sell-bid. If it only wins the sell-bid, we have that the charged prices for buy-bid and sell-bid are $0$ and $\vartheta_{j'}^m(t)\leq \tilde{s}_i^m(t)$, respectively, according to our pricing scheme, and no VM is bought by cloud $i$; the utility gain is zero by bidding untruthfully since the charged/paid prices are the same with that by bidding truthfully. If it loses both bids, we have that the charged prices for buy-bid and sell-bid are both $0$, according to our pricing scheme, and no VM is bought by or sold by cloud $i$; the utility gain is non-positive by bidding untruthfully, since the charged buy-bid price remains the same while there is non-negative utility loss, \emph{i.e.}, $[\vartheta_{j'}^m(t)-\tilde{s}_i^m(t)]\cdot \hat{\eta}_i^m(t)$ with $\vartheta_{j'}^m(t)\geq \tilde{s}_i^m(t)$, by losing the sell-bid.

\item \emph{Bid untruthfully with $b_i^m(t)>\tilde{b}_i^m(t)$ and $s_i^m(t)=\tilde{s}_i^m(t)$}: Cloud $i$ still wins the sell-bid with the same bidding price. However, it may either win or lose the buy-bid. If it also wins the buy-bid, we have that the charged/paid prices for buy-bid and sell-bid are $\theta_2^m(t)$ and $\vartheta_{j'}^m(t)$, respectively, according to our pricing scheme; the utility gain is non-positive by bidding untruthfully since the charged sell-bid price remains the same while there is non-negative utility loss, \emph{i.e.}, $[\theta_2^m(t)-\tilde{b}_i^m(t)]\cdot \hat{\gamma}_i^m(t)$ with $\tilde{b}_i^m(t)\leq \theta_2^m(t)$, by winning the buy-bid. If it loses the buy-bid, we have that the charged prices for buy-bid and sell-bid are $0$ and $\vartheta_{j'}^m(t)$, respectively, according to our pricing scheme, and no VM is bought or sold by cloud $i$; the utility gain is non-positive by bidding untruthfully, since the charged/paid prices are the same with that by bidding truthfully.

\item \emph{Bid untruthfully with $b_i^m(t)>\tilde{b}_i^m(t)$ and $s_i^m(t)<\tilde{s}_i^m(t)$}: According to Lemma \ref{lemma:monotonic}, cloud $i$ still wins the sell-bid but may either win or lose the buy-bid. If it loses the buy-bid, we have that the charged prices for buy-bid and sell-bid are $\theta_2^m(t)$ and $\vartheta_{j'}^m(t)$, respectively, according to Lemma \ref{lemma:bid-independent} and our pricing scheme, and no VM is sold by cloud $i$; the utility gain is zero by bidding untruthfully since the charged/paid prices are the same with that by bidding truthfully. If it also wins the buy-bid, we have that the charged/paid prices for buy-bid and sell-bid are $\theta_2^m(t)\geq \tilde{b}_i^m(t)$ and $\vartheta_{j'}^m(t)$, respectively, according to Lemma \ref{lemma:bid-independent} and our pricing scheme; the utility gain is non-positive by bidding untruthfully, since the charged sell-bid price remains the same while there is non-negative utility loss, by winning the buy-bid.

\item \emph{Bid untruthfully with $b_i^m(t)=\tilde{b}_i^m(t)$ and $s_i^m(t)>\tilde{s}_i^m(t)$}: Cloud $i$ still loses the buy-bid with the same bidding price. However, it may either win or lose the sell-bid. If it loses the sell-bid, we have that the charged prices for buy-bid and sell-bid are $0$ and $0$, respectively, according to our pricing scheme, and no VM is sold by cloud $i$; the utility gain is non-positive by bidding untruthfully since the charged buy-bid price remains the same while there is non-negative utility loss, \emph{i.e.}, $[\vartheta_{j'}^m(t)-\tilde{s}_i^m(t)]\cdot \hat{\eta}_i^m(t)$ with $\vartheta_{j'}^m(t)\geq \tilde{s}_i^m(t)$, by losing the sell-bid. If it also wins the sell-bid, we have that the charged/paid prices for buy-bid and sell-bid are $0$ and $\vartheta_{j'}^m(t)\leq \tilde{s}_i^m(t)$, respectively, according to our pricing scheme; the utility gain is zero by bidding untruthfully since the charged/paid prices are the same with that by bidding truthfully.

\item \emph{Bid untruthfully with $b_i^m(t)=\tilde{b}_i^m(t)$ and $s_i^m(t)<\tilde{s}_i^m(t)$}: According to Lemma \ref{lemma:monotonic}, cloud $i$ still wins the sell-bid. It is also easy to see that cloud $i$ still loses the buy-bid with the same bidding price. Hence, the charged/paid prices for buy-bid and sell-bid are $0$ and $\vartheta_{j'}^m(t)$, respectively, according to Lemma \ref{lemma:bid-independent} and our pricing scheme. The utility gain is zero by bidding untruthfully.

\item \emph{Bid untruthfully with $b_i^m(t)<\tilde{b}_i^m(t)$ and $s_i^m(t)>\tilde{s}_i^m(t)$}: It is easy to see that cloud $i$ still loses the buy-bid, since otherwise we will have a contradiction to Lemma \ref{lemma:monotonic}. However, it may either win or lose the sell-bid. If it wins the sell-bid, we have that the charged/paid prices for buy-bid and sell-bid are $0$ and $\vartheta_{j'}^m(t)$, respectively, according to our pricing scheme; the utility gain is zero by bidding untruthfully since the charged/paid prices are the same with that by bidding truthfully. If it loses the sell-bid, we have that the charged prices for buy-bid and sell-bid are both $0$, according to our pricing scheme, and no VM is bought or sold by cloud $i$; the utility gain is non-positive by bidding untruthfully, since the charged buy-bid price remains the same while there is non-negative utility loss, by losing the sell-bid.

\item \emph{Bid untruthfully with $b_i^m(t)<\tilde{b}_i^m(t)$ and $s_i^m(t)=\tilde{s}_i^m(t)$}: Cloud $i$ still wins the sell-bid with the same bidding price. It is also easy to see that cloud $i$ still loses the buy-bid, since otherwise we will have a contradiction to Lemma \ref{lemma:monotonic}. Hence, the charged/paid prices for buy-bid and sell-bid are $0$ and $\vartheta_{j'}^m(t)$, respectively, according to our pricing scheme. The utility gain is zero by bidding untruthfully.

\item \emph{Bid untruthfully with $b_i^m(t)<\tilde{b}_i^m(t)$ and $s_i^m(t)<\tilde{s}_i^m(t)$}: According to Lemma \ref{lemma:monotonic}, cloud $i$ still wins the sell-bid. It is also easy to see that cloud $i$ still loses the buy-bid, since otherwise we will have a contradiction to Lemma \ref{lemma:monotonic}. Hence, the charged/paid prices for buy-bid and sell-bid are $0$ and $\vartheta_{j'}^m(t)$, respectively, according to Lemma \ref{lemma:bid-independent} and our pricing scheme. The utility gain is zero by bidding untruthfully.

\end{itemize}

\noindent \textbf{Case 4} -- \emph{Cloud $i$ loses both buy-bid and sell-bid with truthful bidding}: In this case, the charged/paid prices for buy-bid and sell-bid are both $0$. we discuss the all possible cases of untruthful bidding as follows,

\begin{itemize}
\item \emph{Bid untruthfully with $b_i^m(t)>\tilde{b}_i^m(t)$ and $s_i^m(t)>\tilde{s}_i^m(t)$}: It is easy to see that cloud $i$ still loses the sell-bid, since otherwise we will have a contradiction to Lemma \ref{lemma:monotonic}. However, it may either win or lose the buy-bid. If it wins the buy-bid, we have that the charged/paid prices for buy-bid and sell-bid are $\theta_2^m(t)\geq \tilde{b}_i^m(t)$ and $0$, respectively, according to our pricing scheme; the utility gain is non-positive by bidding untruthfully there is non-negative utility loss, \emph{i.e.}, $[\theta_2^m(t)-\tilde{b}_i^m(t)]\cdot \hat{\gamma}_i^m(t)$ with $\tilde{b}_i^m(t)\leq \theta_2^m(t)$, by winning the buy-bid. If it loses the buy-bid, we have that the charged prices for buy-bid and sell-bid are both $0$, according to our pricing scheme, and no VM is bought or sold by cloud $i$; the utility gain is zero by bidding untruthfully.

\item \emph{Bid untruthfully with $b_i^m(t)>\tilde{b}_i^m(t)$ and $s_i^m(t)=\tilde{s}_i^m(t)$}: Cloud $i$ still loses the sell-bid with the same bidding price. However, it may either win or lose the buy-bid. If it loses the buy-bid, we have that the charged prices for buy-bid and sell-bid are both $0$, according to our pricing scheme, and no VM is sold by cloud $i$; the utility gain is zero by bidding untruthfully. If it also wins the buy-bid, we have that the charged/paid prices for buy-bid and sell-bid are $\theta_2^m(t)\geq \tilde{b}_i^m(t)$ and $0$, respectively, according to our pricing scheme; the utility gain is non-positive by bidding untruthfully there is non-negative utility loss, by winning the buy-bid.

\item \emph{Bid untruthfully with $b_i^m(t)>\tilde{b}_i^m(t)$ and $s_i^m(t)<\tilde{s}_i^m(t)$}: The cloud can either win both buy and sell bids, or win buy-bid only, or win sell-bid only, or lose both bids. If the cloud wins both bids, we have that the charged/paid prices for buy-bid and sell-bid are $\theta_2^m(t)\geq \tilde{b}_i^m(t)$ and $\vartheta_{j'}^m(t)\leq \tilde{s}_i^m(t)$, respectively, according to our pricing scheme; the utility gain is non-positive by bidding untruthfully since there is non-negative utility loss, \emph{i.e.}, $[\theta_2^m(t)-\tilde{b}_i^m(t)]\cdot \hat{\gamma}_i^m(t) + [\tilde{s}_i^m(t)-\vartheta_{j'}^m(t)]\cdot \hat{\eta}_i^m(t)$ with $\tilde{b}_i^m(t)\leq \theta_2^m(t)$ and $\vartheta_{j'}^m(t)\leq \tilde{s}_i^m(t)$, by winning the buy-bid and sell-bid. If it only wins the buy-bid, we have that the charged prices for buy-bid and sell-bid are $\theta_2^m(t)$ and $0$, respectively, according to our pricing scheme, and no VM is sold by cloud $i$; the utility gain is non-positive by bidding untruthfully since the paid price for sell-bid is the same with that by bidding truthfully and there is non-negative utility loss, by winning the buy-bid. If it only wins the sell-bid, we have that the charged prices for buy-bid and sell-bid are $0$ and $\vartheta_{j'}^m(t)\leq \tilde{s}_i^m(t)$, respectively, according to our pricing scheme, and no VM is bought by cloud $i$; the utility gain is non-positive by bidding untruthfully since the charged price for the buy-bid is the same with that by bidding truthfully while there is non-negative utility loss, \emph{i.e.}, $[\tilde{s}_i^m(t)-\vartheta_{j'}^m(t)]\cdot \hat{\eta}_i^m(t)$ with $\vartheta_{j'}^m(t)\leq \tilde{s}_i^m(t)$, by winning the sell-bid. If it loses both bids, we have that the charged prices for buy-bid and sell-bid are both $0$, according to our pricing scheme, and no VM is bought by or sold by cloud $i$; the utility gain is zero by bidding untruthfully.

\item \emph{Bid untruthfully with $b_i^m(t)=\tilde{b}_i^m(t)$ and $s_i^m(t)>\tilde{s}_i^m(t)$}: It is easy to see that cloud $i$ still loses the sell-bid, since otherwise we will have a contradiction to Lemma \ref{lemma:monotonic}. It is also not hard to find that cloud $i$ still loses the buy-bid with the same bidding price. Hence, the charged/paid prices for buy-bid and sell-bid are both $0$, according to our pricing scheme. The utility gain is zero by bidding untruthfully.

\item \emph{Bid untruthfully with $b_i^m(t)=\tilde{b}_i^m(t)$ and $s_i^m(t)<\tilde{s}_i^m(t)$}: It is not hard to find that cloud $i$ still loses the buy-bid with the same bidding price. However, it may either win or lose the sell-bid. If it wins the sell-bid, we have that the charged/paid prices for buy-bid and sell-bid are $0$ and $\vartheta_{j'}^m(t)\leq \tilde{s}_i^m(t)$, respectively, according to our pricing scheme; the utility gain is non-positive by bidding untruthfully since the charged price for buy-bid is the same with that by bidding truthfully while there is non-negative utility loss, by winning the sell-bid. If it loses the sell-bid, we have that the charged prices for buy-bid and sell-bid are both $0$, according to our pricing scheme, and no VM is bought or sold by cloud $i$; the utility gain is zero by bidding untruthfully.

\item \emph{Bid untruthfully with $b_i^m(t)<\tilde{b}_i^m(t)$ and $s_i^m(t)>\tilde{s}_i^m(t)$}: It is easy to see that cloud $i$ still loses both the buy-bid and the sell-bid, since otherwise we will have a contradiction to Lemma \ref{lemma:monotonic}. Hence, the charged/paid prices for buy-bid and sell-bid are both $0$, according to our pricing scheme. The utility gain is zero by bidding untruthfully.

\item \emph{Bid untruthfully with $b_i^m(t)<\tilde{b}_i^m(t)$ and $s_i^m(t)=\tilde{s}_i^m(t)$}: It is easy to see that cloud $i$ still loses the buy-bid, since otherwise we will have a contradiction to Lemma \ref{lemma:monotonic}. It is also not hard to find that cloud $i$ still loses the sell-bid with the same bidding price. Hence, the charged/paid prices for buy-bid and sell-bid are both $0$, according to our pricing scheme. The utility gain is zero by bidding untruthfully.

\item \emph{Bid untruthfully with $b_i^m(t)<\tilde{b}_i^m(t)$ and $s_i^m(t)<\tilde{s}_i^m(t)$}: It is easy to see that cloud $i$ still loses the buy-bid, since otherwise we will have a contradiction to Lemma \ref{lemma:monotonic}. However, it may either win or lose the sell-bid. If it wins the sell-bid, we have that the charged/paid prices for buy-bid and sell-bid are $0$ and $\vartheta_{j'}^m(t)\leq \tilde{s}_i^m(t)$, respectively, according to our pricing scheme; the utility gain is non-positive by bidding untruthfully since the charged price for buy-bid is the same with that by bidding truthfully while there is non-negative utility loss, by winning the sell-bid. If it loses the sell-bid, we have that the charged prices for buy-bid and sell-bid are both $0$, according to our pricing scheme, and no VM is bought or sold by cloud $i$; the utility gain is zero by bidding untruthfully.

\end{itemize}

To conclude, we have shown that bidding truthfully is the dominant strategy of each cloud.

\section{Proof to Theorem \ref{theorem:rationality}}\label{appendix:rationality}

We prove the individual rationality for buy-bids and sell-bids, respectively.

\noindent \textbf{Winner of buy-bid}: If cloud $i$ wins the buy-bid for VM type $m$, we have that $b_i^m(t)$ is the largest buy-bid price among all buy-bids, and $b_i^m(t)\geq \theta_2^m(t)$, according to our winner determination scheme. Since $\hat{b}_i^m(t) = \theta_2^m(t)$ according to our pricing scheme, we have that $\hat{b}_i^m(t)\leq b_i^m(t)$.

\noindent \textbf{Winner of sell-bid}: If cloud $i$ wins the sell-bid for VM type $m$, we have that $s_i^m(t)$ is among the $(j'-1)^{th}$ lowest sell-bid prices of all sell-bids, and $s_i^m(t)\leq \vartheta_{j'}^m(t)$ according to our winner determination scheme. Since $\hat{s}_i^m(t) = \vartheta_{j'}^m(t)$ according to our pricing scheme, we have that $\hat{s}_i^m(t)\geq s_i^m(t)$.

\section{Proof to Theorem \ref{theorem:budget}}\label{appendix:budget}
We first calculate the total payment from the buyers and the total price paid to the sellers, respectively. We then show that the ex-post budget balance is guaranteed.

\noindent \textbf{Total payment from the buyers}: According to our winner determination scheme, only the buyer with largest bidding price wins the buy-bid. With the pricing scheme in Eqn.~(\ref{eqn:buy-price}), the charged price is $\theta_2^m(t)$ for each bought VM. With the allocation scheme in Eqn.~(\ref{eqn:buy-vol}), the overall number of bought VMs is $\sum_{j=1}^{j'-1}L_j^m(t)$. Hence, the total payment from the buyers is that

\vspace{-4mm}{\small
\begin{align*}
\sum_{i\in [1, F]}[\hat{b}_i^m(t)\cdot \hat{\gamma}_i^m(t)] = \theta_2^m(t)\cdot\sum_{j=1}^{j'-1}L_j^m(t).
\end{align*}}\vspace{-4mm}

\noindent \textbf{Overall price paid to the sellers}: According to our winner determination scheme, only the seller with $(j'-1)^{th}$ lowest bidding price wins the sell-bid. With the pricing scheme in Eqn.~(\ref{eqn:sell-price}), the paid price is $\vartheta_{j'}^m(t)$ for each sold VM. With the allocation scheme in Eqn.~(\ref{eqn:sell-vol}), the overall number of sold VMs is $\sum_{j=1}^{j'-1}L_j^m(t)$. Hence, the total payment to the sellers is that

\vspace{-4mm}{\small
\begin{align*}
\sum_{i\in [1, F]}[\hat{s}_i^m(t)\cdot \hat{\eta}_i^m(t)] = \vartheta_{j'}^m(t)\cdot\sum_{j=1}^{j'-1}L_j^m(t).
\end{align*}}\vspace{-4mm}

According to Eqn.~(\ref{eqn:j'}) for the winner determination scheme, we know that $\vartheta_{j'}^m(t) \leq \theta_2^m(t)$. Hence, the ex-post budget balance at the auctioneer for each VM type $m\in [1, M]$ can be guaranteed as follows,

\vspace{-4mm}{\small
\begin{align*}
&\sum_{i\in [1, F]}[\hat{b}_i^m(t)\cdot \hat{\gamma}_i^m(t)-\hat{s}_i^m(t)\cdot \hat{\eta}_i^m(t)]\\
=&\sum_{i\in [1, F]}[\hat{b}_i^m(t)\cdot \hat{\gamma}_i^m(t)] - \sum_{i\in [1, F]}[\hat{s}_i^m(t)\cdot \hat{\eta}_i^m(t)]\\
=&\theta_2^m(t)\cdot\sum_{j=1}^{j'-1}L_j^m(t) - \vartheta_{j'}^m(t)\cdot\sum_{j=1}^{j'-1}L_j^m(t)\\
\geq & 0.
\end{align*}}\vspace{-3mm}

\section{Proof to Lemma \ref{lemma:bounded-queue}}\label{appendix:bounded-queue}

We prove the lemma by induction.

\vspace{1mm}\noindent\textbf{Induction Basis}: At time slot $0$, the beginning of the federation,
all queues are empty. Therefore,

\vspace*{-5mm}{\small
\begin{align*}
&Q_i^s(0)=0 \leq Q_i^{s(max)},\ \forall i\in [1, F], s\in [1, S],\\
&Z_i^s(0)=0 \leq Z_i^{s(max)},\ \forall i\in [1, F], s\in [1, S].
\end{align*}}\vspace*{-5mm}

\noindent\textbf{Induction Step}: Suppose that, at time slot $t\geq 0$, $Q_i^s(t) \leq Q_i^{s(max)}$ and $Z_i^s(t) \leq Z_i^{s(max)}$, $\forall
i\in [1, F], s\in [1, S]$. Then, for any
$Q_i^s(t)$ and $Z_i^s(t)$, we have the
following possible cases.
\begin{itemize}
\item $0\leq Q_i^s(t)\leq V \xi_i^s$ or $V \xi_i^s < Q_i^s(t) \leq V \xi_i^s + R_i^s$;
\item $0\leq Z_i^s(t)\leq V \xi_i^s$ or $V \xi_i^s < Z_i^s(t) \leq V \xi_i^s + \epsilon_s$.
\end{itemize}

\noindent - We first analyze the size of $Q_i^s(t+1)$:
\begin{itemize}
\item If $0\leq Q_i^s(t)\leq V \xi_i^s$, we have that

\vspace{-4mm}{\small
\begin{align*}
Q_i^s(t+1)&=\max\{Q_i^s(t)-\sum_{j=1}^F \mu_{ij}^{s}(t)- D_i^{s}(t),0\}+ r_i^{s}(t)\\
        &\leq \max\{Q_i^s(t),0\}+R_i^s\\
        &\leq V \xi_i^s + R_i^s=Q_i^{s(max)}
\end{align*}}\vspace{-4mm}

\noindent according to the queueing law (\ref{eqn:queue1}). The first inequality is based on the fact that $0\leq r_i^s(t)\leq R_i^s$.

\item If $V \xi_i^s < Q_i^s(t) \leq V \xi_i^s + R_i^s$, we have that

\vspace{-4mm}{\small
\begin{align*}
\begin{split}
D_i^s(t)=D_i^{s(max)},
\end{split}
\end{align*}}\vspace{-4mm}

\noindent according to the job drop decision with Eqn.~(\ref{eqn:drop}).

Hence, we have that

\vspace{-4mm}{\small
\begin{align*}
Q_i^s(t+1)&=\max\{Q_i^s(t)-\sum_{j=1}^F \mu_{ij}^{s}(t)- D_i^{s}(t),0\}+ r_i^{s}(t)\\
        &\leq \max\{V \xi_i^s + R_i^s-D_i^{s(max)},0\}+R_i^s\\
        &\leq V \xi_i^s + R_i^s\leq Q_i^{s(max)}.
\end{align*}}\vspace{-4mm}

\noindent The second inequality is based on the fact that $D_i^{s(max)}\geq \max\{R_i^s, \epsilon_s\}$.

\end{itemize}

So far, $Q_i^s(t)\leq Q_i^{s(max)},\ \forall i\in [1, F], s\in [1, S]$ for each
time slot $t$ is proved.

\noindent - We next analyze the size of $Z_i^s(t+1)$:
\begin{itemize}
\item If $0\leq Z_i^s(t)\leq V \xi_i^s$, we have that

\vspace{-4mm}{\small
\begin{align*}
Z_i^s(t+1)&=\max\{Z_i^s(t)+ \mathbf{1}_{\{Q_i^{s}(t)> 0\}}\cdot [\epsilon_{s} - \sum_{j=1}^F\mu_{ij}^{s}(t)]- D_i^{s}(t)\nonumber\\
        & - \mathbf{1}_{\{Q_i^{s}(t)=0\}}\cdot \sum_{j=1}^F \frac{C_{j}^{m_s}\cdot N_{j}^{m_s}}{g_s}, 0\}\\
        &\leq \max\{Z_i^s(t)+\epsilon_{s},0\}\\
        &\leq V \xi_i^s + \epsilon_{s}=Z_i^{s(max)},
\end{align*}}\vspace{-4mm}

\noindent according to the queueing law (\ref{eqn:queue2}). The first inequality is based on the fact that $\epsilon_{s}>0$.

\item If $V \xi_i^s < Z_i^s(t) \leq V \xi_i^s + \epsilon_{s}$, we have that

\vspace{-4mm}{\small
\begin{align*}
\begin{split}
D_i^s(t)=D_i^{s(max)},
\end{split}
\end{align*}}\vspace{-4mm}

\noindent according to the job drop decision with Eqn.~(\ref{eqn:drop}).

Hence, we have that

\vspace{-4mm}{\small
\begin{align*}
Z_i^s(t+1)&=\max\{Z_i^s(t)+ \mathbf{1}_{\{Q_i^{s}(t)> 0\}}\cdot [\epsilon_{s} - \sum_{j=1}^F\mu_{ij}^{s}(t)]- D_i^{s}(t)\nonumber\\
        & - \mathbf{1}_{\{Q_i^{s}(t)=0\}}\cdot \sum_{j=1}^F \frac{C_{j}^{m_s}\cdot N_{j}^{m_s}}{g_s}, 0\}\\
        &\leq \max\{Z_i^s(t)+\epsilon_{s}-D_i^{s(max)},0\}\\
        &\leq V \xi_i^s + \epsilon_{s}=Z_i^{s(max)}.
\end{align*}}\vspace{-4mm}

\noindent The second inequality is based on the fact that $D_i^{s(max)}\geq \max\{R_i^s, \epsilon_s\}$.

\end{itemize}

So far, $Z_i^s(t)\leq Z_i^{s(max)},\ \forall i\in [1, F], s\in [1, S]$ for each
time slot $t$ is proved.

In conclusion, Lemma \ref{lemma:bounded-queue} is proven.

\section{Proof to Theorem \ref{theorem:SLA}}\label{appendix:SLA}

We prove this theorem by contradiction.

For each cloud $i\in [1, F]$ and each service type $s\in [1, S]$, the job requests arrive at time slot $t\geq 0$ is $r_i^s(t)$ and the earliest time they can depart the queue is $t+1$. We show that all these jobs depart (by being either scheduled or dropped) on or before time $t+d_s$.

Suppose this is not true, we will come to a contradiction. We must have that $Q_i^s(\tau)>0$ for all $\tau\in [t+1,\ldots, t+d_s]$ (otherwise, all the jobs are scheduled by time $t+d_s$). With the queueing law in Eqn.~(\ref{eqn:queue2}), we have that

\vspace{-4mm}{\small
\begin{align*}
Z_i^s(\tau+1)&=\max\{Z_i^s(\tau)+ \epsilon_{s} - \sum_{j=1}^F\mu_{ij}^{s}(\tau)- D_i^{s}(\tau), 0\}\\
          &\geq Z_i^s(\tau)+ \epsilon_{s} - \sum_{j=1}^F\mu_{ij}^{s}(\tau)- D_i^{s}(\tau).
\end{align*}}\vspace{-4mm}

Summing the above over $\tau\in [t+1,\ldots, t+d_s]$, we have that

\vspace{-4mm}{\small
\begin{align*}
Z_i^s(t+d_s+1)-Z_i^s(t+1)&\geq \epsilon_s d_s - \sum_{\tau=t+1}^{t+d_s}[\sum_{j=1}^F\mu_{ij}^{s}(\tau)+ D_i^{s}(\tau)].
\end{align*}}\vspace{-4mm}

Rearranging the above inequality and using the fact that $Z_i^s(t+d_s+1)\leq Z_i^{s(max)}$ and $Z_i^s(t+1)\geq 0$, we have that

\vspace{-4mm}{\small
\begin{align}\label{eqn:proof1}
\epsilon_s d_s - Z_i^{s(max)}\leq \sum_{\tau=t+1}^{t+d_s}[\sum_{j=1}^F\mu_{ij}^{s}(\tau)+ D_i^{s}(\tau)].
\end{align}}\vspace{-4mm}

Since the jobs are scheduled in a FIFO fashion, the jobs $r_i^s(t)$ that arrive at slot $t$ are placed at the end of the queue at slot $t+1$, and should be fully cleared only when all the jobs backlogged in $Q_i^s(t+1)$ have been scheduled. That is, the last job of $r_i^s(t)$ departs on slot $t+T$ with $T>0$ as the smallest integer satisfying $\sum_{\tau=t+1}^{t+T}[\sum_{j=1}^F\mu_{ij}^{s}(\tau)+ D_i^{s}(\tau)]\geq Q_i^s(t+1)$. Based on our assumption that not all of the $r_i^s(t)$ jobs depart by time $t+d_s$, we must have that

\vspace{-4mm}{\small
\begin{align}\label{eqn:proof2}
\sum_{\tau=t+1}^{t+d_s}[\sum_{j=1}^F\mu_{ij}^{s}(\tau)+ D_i^{s}(\tau)]< Q_i^s(t+1) \leq Q_i^{s(max)}.
\end{align}}\vspace{-4mm}

Combining Eqn.~(\ref{eqn:proof1}) and (\ref{eqn:proof2}), we have that

\vspace{-4mm}{\small
\begin{align}\label{eqn:proof2}
&\epsilon_s d_s - Z_i^{s(max)}< Q_i^{s(max)}\\
\Rightarrow &\epsilon_s<\frac{Q_i^{s(max)}+Z_i^{s(max)}}{d_s}.
\end{align}}\vspace{-4mm}

This contradicts with the given fact that $\epsilon_s= \frac{Q_i^{s(max)}+Z_i^{s(max)}}{d_s}$. Hence, we have proved that each job of type $s\in [1, S]$ is either scheduled or dropped with Alg.~\ref{alg:profit} before its maximum response delay $d_s$, if we set $\epsilon_s = \frac{Q_i^{s(max)}+Z_i^{s(max)}}{d_s}$.

\section{Proof to Theorem \ref{theorem:profit}}\label{appendix:profit}
Since the system status, \emph{i.e.}, the job arrival $r_i^s(t)$ and service pricing $p_i^s(t)$ of each type $s\in [1, S]$ of services and the operational price $\beta_i(t)$ at each cloud $i\in [1, F]$, changes with ergodic processes, we have that there exists a stationary randomized algorithm \cite{book2010}, which dynamically decides the VM valuation \& pricing (with $b_i^{m*}(t)$, $\gamma_i^{m*}(t)$, $s_i^{m*}(t)$, $\eta_i^{m*}(t)$), job scheduling (with $\mu_{ij}^{s*}(t)$) \& dropping (with $D_i^{s*}(t)$) and server provisioning (with $n_i^{m*}(t)$) at each cloud $i$, such that the offline optimum of cloud $i$'s individual profit $\Omega_i^*$ can be achieved, together with $\bar{r}_i^s\leq \sum_{j=1}^F \bar{\mu}_{ij}^{s*}+ \bar{D}_i^{s*}$ and $\epsilon_s\leq \sum_{j=1}^F \bar{\mu}_{ij}^{s*}+ \bar{D}_i^{s*}$. Here, $\bar{a}$ denotes the time averaged expectation of variable $a(t)$.

Based on the derivations of the optimization problem (\ref{eqn:profit-oneslot}) in and its solution in Sec.~\ref{sec:algorithm}, we know that Algorithm \ref{alg:profit} minimizes the right-hand-side of the drift-plus-penalty (drift-minus-utility) inequality in Eqn.~(\ref{eqn:drift-plus-penalty1}) at each slot $t$, with individual profit maximization as the utility, over all possible algorithms. Then, we can have that

\vspace*{-5mm}{\small\begin{align*}
&\Delta(\Theta_i(t))-V\cdot [\sum_{m\in [1, M]} [\hat{s}_i^m(t) \hat{\eta}_i^m(t) - \hat{b}_i^m(t) \hat{\gamma}_i^m(t)-\beta_i(t) n_i^m(t)] \\ &- \sum_{s\in [1, S]}D_i^s(t) \xi_i^s + \sum_{s\in [1, S]}r_i^s(t)p_i^s(t)]
\end{align*}
\vspace*{-5mm}

\begin{align*}
\leq&  B_i + \sum_{s\in [1, S]}[Q_i^s(t) r_i^s(t) + Z_i^s(t) \epsilon_s]+ \sum_{s\in [1, S]}[r_i^s(t)p_i^s(t)]\\
    &- V \sum_{m\in [1, M]} [\hat{s}_i^{m*}(t) \hat{\eta}_i^{m*}(t) - \hat{b}_i^{m*}(t) \hat{\gamma}_i^{m*}(t)-\beta_i(t) n_i^{m*}(t)]\\
    &-\sum_{s=\in [1, S]}\sum_{j\in [1, F]}\mu_{ij}^{s*}(t) [Q_i^s(t) + Z_i^s(t)]\\
    &-\sum_{s\in [1, S]}D_i^{s*}(t)[Q_i^s(t)+Z_i^s(t)-V\cdot \xi_i^s]\\
=& B_i - V\cdot [\sum_{m\in [1, M]} [\hat{s}_i^{m*}(t) \hat{\eta}_i^{m*}(t) - \hat{b}_i^{m*}(t) \hat{\gamma}_i^{m*}(t)-\beta_i(t) n_i^{m*}(t)] \\ &- \sum_{s\in [1, S]}D_i^{s*}(t) \xi_i^s+ \sum_{s\in [1, S]}r_i^s(t)p_i^s(t)] \\
    & +\sum_{s\in [1, S]}[Q_i^s(t)[r_i^s(t)-\sum_{j\in [1, F]} \mu_{ij}^{s*}(t)-D_i^{s*}(t)]\\ &+ Z_i^s(t)[\epsilon_s-\sum_{j\in [1, F]} \mu_{ij}^{s*}(t)-D_i^{s*}(t)]].
\end{align*}}\vspace*{-4mm}

Taking conditional expectations over queue status $\Theta_i(t)$ on both sides the inequality, we have that

\vspace*{-5mm}{\small\begin{align*}
&\mathbb{E}\{\Delta(\Theta_i(t))|\Theta_i(t)\}-V\cdot \mathbb{E}\{[\sum_{m\in [1, M]} [\hat{s}_i^m(t) \hat{\eta}_i^m(t) - \hat{b}_i^m(t) \hat{\gamma}_i^m(t)\\ & -\beta_i(t) n_i^m(t)] - \sum_{s\in [1, S]}D_i^s(t) \xi_i^s+ \sum_{s\in [1, S]}[r_i^s(t)p_i^s(t)]]|\Theta_i(t) \} \\
\leq& B_i - V\cdot \mathbb{E}\{[\sum_{m\in [1, M]} [\hat{s}_i^{m*}(t) \hat{\eta}_i^{m*}(t) - \hat{b}_i^{m*}(t) \hat{\gamma}_i^{m*}(t)-\beta_i(t) n_i^{m*}(t)] \\ &- \sum_{s\in [1, S]}D_i^{s*}(t) \xi_i^s+ \sum_{s\in [1, S]}[r_i^s(t)p_i^s(t)]]|\Theta_i(t)\} \\
    & +\sum_{s\in [1, S]}[Q_i^s(t)\mathbb{E}\{[r_i^s(t)-\sum_{j=1}^F \mu_{ij}^{s*}(t)-D_i^{s*}(t)]|\Theta_i(t)\}\\ &+ Z_i^s(t)\mathbb{E}\{[\epsilon_s-\sum_{j=1}^F \mu_{ij}^{s*}(t)-D_i^{s*}(t)]|\Theta_i(t)\}].
\end{align*}}\vspace*{-5mm}

Next, we take expectations on both sides the inequality, we have that

\vspace*{-5mm}{\small\begin{align*}
&\mathbb{E}\{\Delta(\Theta_i(t))\}-V\cdot \mathbb{E}\{[\sum_{m\in [1, M]} [\hat{s}_i^m(t) \hat{\eta}_i^m(t) - \hat{b}_i^m(t) \hat{\gamma}_i^m(t) \\ & -\beta_i(t) n_i^m(t)] - \sum_{s\in [1, S]}D_i^s(t) \xi_i^s+ \sum_{s\in [1, S]}[r_i^s(t)p_i^s(t)]] \} \\
\leq& B_i - V\cdot \mathbb{E}\{[\sum_{m\in [1, M]} [\hat{s}_i^{m*}(t) \hat{\eta}_i^{m*}(t) - \hat{b}_i^{m*}(t) \hat{\gamma}_i^{m*}(t)-\beta_i(t) n_i^{m*}(t)] \\ &- \sum_{s\in [1, S]}D_i^{s*}(t) \xi_i^s+ \sum_{s\in [1, S]}[r_i^s(t)p_i^s(t)]]\} \\
    & +\sum_{s\in [1, S]}[Q_i^s(t)\mathbb{E}\{[r_i^s(t)-\sum_{j=1}^F \mu_{ij}^{s*}(t)-D_i^{s*}(t)]\}\\ &+ Z_i^s(t)\mathbb{E}\{[\epsilon_s-\sum_{j=1}^F \mu_{ij}^{s*}(t)-D_i^{s*}(t)]\}].
\end{align*}}\vspace*{-5mm}

By summing over the $T$ slots on both sides of the inequality, we have that

\vspace*{-5mm}{\small\begin{align*}
&\mathbb{E}\{L(\Theta_i(T))\}-\mathbb{E}\{L(\Theta_i(0))\}-V\cdot \sum_{t=0}^{T-1}\mathbb{E}\{[\sum_{m\in [1, M]} [\hat{s}_i^m(t) \hat{\eta}_i^m(t) \\ & - \hat{b}_i^m(t) \hat{\gamma}_i^m(t)-\beta_i(t) n_i^m(t)] - \sum_{s\in [1, S]}D_i^s(t) \xi_i^s+ \sum_{s\in [1, S]}[r_i^s(t)p_i^s(t)]] \} \\
\leq& T\cdot B_i - V\cdot \sum_{t=0}^{T-1} \mathbb{E}\{[\sum_{m\in [1, M]} [\hat{s}_i^{m*}(t) \hat{\eta}_i^{m*}(t) - \hat{b}_i^{m*}(t) \hat{\gamma}_i^{m*}(t)\\ & -\beta_i(t) n_i^{m*}(t)] - \sum_{s\in [1, S]}D_i^{s*}(t) \xi_i^s+ \sum_{s\in [1, S]}[r_i^s(t)p_i^s(t)] ]\} \\
    & +\sum_{t=0}^{T-1} \sum_{s\in [1, S]}[Q_i^s(t)\mathbb{E}\{[r_i^s(t)-\sum_{j=1}^F \mu_{ij}^{s*}(t)-D_i^{s*}(t)]\}\\ &+ Z_i^s(t)\mathbb{E}\{[\epsilon_s-\sum_{j=1}^F \mu_{ij}^{s*}(t)-D_i^{s*}(t)]\}].
\end{align*}}\vspace*{-4mm}

Since $\mathbb{E}\{L(\Theta_i(T))\}\geq 0$ and $\mathbb{E}\{L(\Theta_i(0))\}=0$ according to the definition of the Lyapunov function, we have that

\vspace*{-5mm}{\small\begin{align*}
&-V\cdot \sum_{t=0}^{T-1}\mathbb{E}\{[\sum_{m\in [1, M]} [\hat{s}_i^m(t) \hat{\eta}_i^m(t) - \hat{b}_i^m(t) \hat{\gamma}_i^m(t)-\beta_i(t) n_i^m(t)] \\ & - \sum_{s\in [1, S]}D_i^s(t) \xi_i^s+ \sum_{s\in [1, S]}[r_i^s(t)p_i^s(t)]] \} \\
\leq& T\cdot B_i - V\cdot \sum_{t=0}^{T-1} \mathbb{E}\{[\sum_{m\in [1, M]} [\hat{s}_i^{m*}(t) \hat{\eta}_i^{m*}(t) - \hat{b}_i^{m*}(t) \hat{\gamma}_i^{m*}(t)\\ & -\beta_i(t) n_i^{m*}(t)] - \sum_{s\in [1, S]}D_i^{s*}(t) \xi_i^s+ \sum_{s\in [1, S]}[r_i^s(t)p_i^s(t)] ]\}\\
    & +\sum_{t=0}^{T-1} \sum_{s\in [1, S]}[Q_i^s(t)\mathbb{E}\{[r_i^s(t)-\sum_{j=1}^F \mu_{ij}^{s*}(t)-D_i^{s*}(t)]\}\\ &+ Z_i^s(t)\mathbb{E}\{[\epsilon_s-\sum_{j=1}^F \mu_{ij}^{s*}(t)-D_i^{s*}(t)]\}].
\end{align*}}\vspace*{-4mm}

Dividing $T\cdot V$ on both sides of the above inequality and taking limitation on $T$ to infinity, we have that

\vspace*{-5mm}{\small\begin{align*}
-\Omega_i \leq& B_i/V - \Omega_i^*  +\sum_{s\in [1, S]}[\bar{Q}_i^s[\bar{r}_i^s-\sum_{j=1}^F \bar{\mu}_{ij}^{s*}-\bar{D}_i^{s*}]\\ &+ \bar{Z}_i^s[\epsilon_s-\sum_{j=1}^F \bar{\mu}_{ij}^{s*}-\bar{D}_i^{s*}]]\\
\leq& B_i/V - \Omega_i^*.
\end{align*}}\vspace*{-5mm}

The second inequality comes from the fact that $\bar{r}_i^s-\sum_{j=1}^F \bar{\mu}_{ij}^{s*}-\bar{D}_i^{s*}\leq 0$ and  $\epsilon_s-\sum_{j=1}^F \bar{\mu}_{ij}^{s*}-\bar{D}_i^{s*}\leq 0$. Rearranging the two sides, we have that

\vspace{-4mm}{\small
\begin{align*}
\Omega_i \geq \Omega_i^* - B_i/V,
\end{align*}}\vspace{-6mm}

\section{Proof to Theorem \ref{theorem:welfare-alg2}}\label{appendix:social}


Similar with the proof to the optimality of individual profit with Alg.~\ref{alg:profit}, we have the following proof to the optimality in social welfare for our benchmark algorithm. Since the system status, \emph{i.e.}, the job arrival $r_i^s(t)$ and service pricing $p_i^s(t)$ of each type $s\in [1, S]$ of services and the operational price $\beta_i(t)$ at each cloud $i\in [1, F]$, changes with ergodic processes, we have that there exists a stationary randomized algorithm \cite{book2010}, which dynamically decides the job scheduling (with $\mu_{ij}^{s*}(t)$) \& dropping (with $D_i^{s*}(t)$) and server provisioning (with $n_i^{m*}(t)$) at each cloud $i$, such that the offline optimum of the federation's social welfare $\prod^*$ can be achieved, together with $\bar{r}_i^{s*}\leq \sum_{j=1}^F \bar{\mu}_{ij}^{s*}+ \bar{D}_i^{s*}$ and $\epsilon_s\leq \sum_{j=1}^F \bar{\mu}_{ij}^{s*}+ \bar{D}_i^{s*}$. Here, $\bar{a}$ denotes the time averaged expectation of variable $a(t)$.

Based on the above derivations of the optimization problem (\ref{eqn:profit-oneslot2}) and its solution in Alg.~\ref{alg:social}, we know that Algorithm \ref{alg:social} minimizes the right-hand-side of the drift-plus-penalty (drift-minus-welfare) inequality in Eqn.~(\ref{eqn:drift-plus-penalty2}) at each slot $t$, with social welfare maximization as the utility, over all possible algorithms. Then, we can have that

\vspace*{-5mm}{\small\begin{align*}
&\Delta(\Theta(t))+V\cdot \sum_{i\in [1, F]}[\sum_{m\in [1, M]} [\beta_i(t) n_i^m(t)] + \sum_{s\in [1, S]}D_i^s(t) \xi_i^s \\
& - \sum_{s\in [1, S]}r_i^s(t)p_i^s(t)]\\
\leq&  B + \sum_{i\in [1, F]} [\sum_{s\in [1, S]}[Q_i^s(t) r_i^s(t) + Z_i^s(t) \epsilon_s]- V\sum_{s\in [1, S]}r_i^s(t)p_i^s(t)]\\
    &+ V \sum_{i\in [1, F]} \sum_{m\in [1, M]} [\beta_i(t) n_i^{m*}(t)]\\
    &-\sum_{i\in [1, F]}\sum_{s=\in [1, S]}\sum_{j\in [1, F]}\mu_{ij}^{s*}(t) [Q_i^s(t) + Z_i^s(t)]\\
    &-\sum_{i\in [1, F]}\sum_{s\in [1, S]}D_i^{s*}(t)[Q_i^s(t)+Z_i^s(t)-V\cdot \xi_i^s]\\
=& B + V\cdot \sum_{i\in [1, F]}[\sum_{m\in [1, M]} [\beta_i(t) n_i^{m*}(t)] +\sum_{s\in [1, S]}D_i^{s*}(t) \xi_i^s\\
    &- \sum_{s\in [1, S]}r_i^s(t)p_i^s(t)]\\
    & +\sum_{i\in [1, F]} \sum_{s\in [1, S]}[Q_i^s(t)[r_i^s(t)-\sum_{j=1}^F \mu_{ij}^{s*}(t)-D_i^{s*}(t)]\\ &+ Z_i^s(t)[\epsilon_s-\sum_{j=1}^F \mu_{ij}^{s*}(t)-D_i^{s*}(t)]].
\end{align*}}\vspace*{-5mm}

Taking conditional expectations over queue status $\Theta(t)$ on both sides the inequality, we have that

\vspace*{-4mm}{\small\begin{align*}
&\mathbb{E}\{\Delta(\Theta(t))|\Theta(t)\}+V\cdot \sum_{i\in [1, F]}\mathbb{E}\{[\sum_{m\in [1, M]} \beta_i(t) n_i^m(t)\\
    &+ \sum_{s\in [1, S]}D_i^s(t) \xi_i^s - \sum_{s\in [1, S]}r_i^s(t)p_i^s(t)]|\Theta(t) \} \\
\leq& B + V\cdot \sum_{i\in [1, F]}\mathbb{E}\{[\sum_{m\in [1, M]} \beta_i(t) n_i^{m*}(t) + \sum_{s\in [1, S]}D_i^{s*}(t) \xi_i^s\\
    &- \sum_{s\in [1, S]}r_i^s(t)p_i^s(t)]|\Theta(t)\}\\
    & +\sum_{i\in [1, F]}\sum_{s\in [1, S]}[Q_i^s(t)\mathbb{E}\{[r_i^s(t)-\sum_{j=1}^F \mu_{ij}^{s*}(t)-D_i^{s*}(t)]|\Theta(t)\}\\ &+ Z_i^s(t)\mathbb{E}\{[\epsilon_s-\sum_{j=1}^F \mu_{ij}^{s*}(t)-D_i^{s*}(t)]|\Theta(t)\}].
\end{align*}}\vspace*{-4mm}

Next, we take expectations on both sides the inequality, we have that

\vspace*{-5mm}{\small\begin{align*}
&\mathbb{E}\{\Delta(\Theta(t))\}+V\cdot \sum_{i\in [1, F]}\mathbb{E}\{[\sum_{m\in [1, M]} \beta_i(t) n_i^m(t) + \sum_{s\in [1, S]}D_i^s(t) \xi_i^s\\
    &- \sum_{s\in [1, S]}r_i^s(t)p_i^s(t)] \} \\
\leq& B + V\cdot \sum_{i\in [1, F]}\mathbb{E}\{[\sum_{m\in [1, M]} \beta_i(t) n_i^{m*}(t) + \sum_{s\in [1, S]}D_i^{s*}(t) \xi_i^s \\
    &- \sum_{s\in [1, S]}r_i^s(t)p_i^s(t)]\} \\
    & +\sum_{i\in [1, F]}\sum_{s\in [1, S]}[Q_i^s(t)\mathbb{E}\{[r_i^s(t)-\sum_{j=1}^F \mu_{ij}^{s*}(t)-D_i^{s*}(t)]\}\\ &+ Z_i^s(t)\mathbb{E}\{[\epsilon_s-\sum_{j=1}^F \mu_{ij}^{s*}(t)-D_i^{s*}(t)]\}].
\end{align*}}\vspace*{-5mm}

By summing over the $T$ slots on both sides of the inequality, we have that

\vspace*{-5mm}{\small\begin{align*}
&\mathbb{E}\{L(\Theta(T))\}-\mathbb{E}\{L(\Theta(0))\}-V\cdot \sum_{t=0}^{T-1}\sum_{i\in [1, F]}\mathbb{E}\{[\sum_{m\in [1, M]} \beta_i(t) n_i^m(t)\\
    & + \sum_{s\in [1, S]}D_i^s(t) \xi_i^s - \sum_{s\in [1, S]}r_i^s(t)p_i^s(t)] \} \\
\leq& T\cdot B + V\cdot \sum_{t=0}^{T-1} \sum_{i\in [1, F]}\mathbb{E}\{[\sum_{m\in [1, M]} \beta_i(t) n_i^{m*}(t) + \sum_{s\in [1, S]}D_i^{s*}(t) \xi_i^s\\
    &- \sum_{s\in [1, S]}r_i^s(t)p_i^s(t)]\} \\
    & +\sum_{t=0}^{T-1} \sum_{i\in [1, F]}\sum_{s\in [1, S]}[Q_i^s(t)\mathbb{E}\{[r_i^s(t)-\sum_{j=1}^F \mu_{ij}^{s*}(t)-D_i^{s*}(t)]\}\\ &+ Z_i^s(t)\mathbb{E}\{[\epsilon_s-\sum_{j=1}^F \mu_{ij}^{s*}(t)-D_i^{s*}(t)]\}].
\end{align*}}\vspace*{-4mm}

Since $\mathbb{E}\{L(\Theta(T))\}\geq 0$ and $\mathbb{E}\{L(\Theta(0))\}=0$ according to the definition of the Lyapunov function, we have that

\vspace*{-5mm}{\small\begin{align*}
&-V\cdot \sum_{t=0}^{T-1}\sum_{i\in [1, F]}\mathbb{E}\{[\sum_{m\in [1, M]} \beta_i(t) n_i^m(t) + \sum_{s\in [1, S]}D_i^s(t) \xi_i^s\\
    & - \sum_{s\in [1, S]}r_i^s(t)p_i^s(t)] \} \\
\leq& T\cdot B + V\cdot \sum_{t=0}^{T-1} \sum_{i\in [1, F]}\mathbb{E}\{[\sum_{m\in [1, M]} \beta_i(t) n_i^{m*}(t) + \sum_{s\in [1, S]}D_i^{s*}(t) \xi_i^s\\
    &- \sum_{s\in [1, S]}r_i^s(t)p_i^s(t)]\} \\
    & +\sum_{t=0}^{T-1} \sum_{i\in [1, F]}\sum_{s\in [1, S]}[Q_i^s(t)\mathbb{E}\{[r_i^s(t)-\sum_{j=1}^F \mu_{ij}^{s*}(t)-D_i^{s*}(t)]\}\\ &+ Z_i^s(t)\mathbb{E}\{[\epsilon_s-\sum_{j=1}^F \mu_{ij}^{s*}(t)-D_i^{s*}(t)]\}].
\end{align*}}\vspace*{-4mm}

Dividing $T\cdot V$ on both sides and taking limitation on $T$ to infinity, we have that

\vspace*{-5mm}{\small\begin{align*}
-\Pi \leq& B/V - \Pi^* \\
    & +\sum_{i\in [1, F]}\sum_{s\in [1, S]}[\bar{Q}_i^s[\bar{r}_i^s-\sum_{j=1}^F \bar{\mu}_{ij}^{s*}-\bar{D}_i^{s*}]\\ &+ \bar{Z}_i^s[\epsilon_s-\sum_{j=1}^F \bar{\mu}_{ij}^{s*}-\bar{D}_i^{s*}]]\\
\leq& B/V - \Pi^*.
\end{align*}}\vspace*{-5mm}

The second inequality is based on the fact that $\bar{r}_i^s-\sum_{j=1}^F \bar{\mu}_{ij}^{s*}-\bar{D}_i^{s*}\leq 0$ and $\epsilon_s-\sum_{j=1}^F \bar{\mu}_{ij}^{s*}-\bar{D}_i^{s*}\leq 0$. Rearranging the two sides, we have that

\vspace{-4mm}{\small
\begin{align*}
\Pi \geq \Pi^* - B/V,
\end{align*}}\vspace{-6mm}

\section{Proof to Theorem \ref{theorem:welfare}}\label{appendix:welfare}


We have shown that, Algorithm \ref{alg:social} achieves a social welfare with a constant gap to the offline optimum, by minimizing the RHS of \emph{drift-plus-penalty} inequality in Eqn.~(\ref{eqn:drift-plus-penalty2}) according to the Lyapunov optimization theory \cite{book2010}. Hence, if we can prove that, Algorithm \ref{alg:profit} can also minimize the RHS of Eqn.~(\ref{eqn:drift-plus-penalty2}), \emph{i.e.}, maximizing problem (\ref{eqn:varphi1}) and (\ref{eqn:varphi2}), we can also prove its social welfare optimality. Our intuition of the proof is that, when the number of clouds in the federation grows to infinity, the gap to the minimum of RHS of Eqn.~(\ref{eqn:drift-plus-penalty2}), \emph{i.e.}, the gap to
the maximum of problem (\ref{eqn:varphi1}) and (\ref{eqn:varphi2}), by Algorithm \ref{alg:profit} is infinitely close to zero.

\noindent -- \emph{Gap to the minimum of RHS of Eqn.~(\ref{eqn:drift-plus-penalty2}) with Algorithm \ref{alg:profit}}:

As discussed above, problem (\ref{eqn:varphi2}) is only controlled by the job drop decisions, \emph{i.e.}, $D_i^s(t)$. Since Algorithm \ref{alg:profit} and Algorithm \ref{alg:social} have the same decision on job dropping as in Eqn.~(\ref{eqn:drop}) and Eqn.~(\ref{eqn:drop2}),
the maximum of problem (\ref{eqn:varphi2}) is also achieved by Algorithm \ref{alg:profit}. Hence,
the gap to the minimum of RHS of Eqn.~(\ref{eqn:drift-plus-penalty2}), by Algorithm \ref{alg:profit} only depends on its gap to
the maximum of problem (\ref{eqn:varphi1}), which is determined by the job scheduling and server provisioning decisions.

We first map the job scheduling and server provisioning decisions in Algorithm \ref{alg:social} to an equivalent VM allocation
based on an idealized double auction scenario. Let each cloud still proposes its buy-bid and sell-bid based on
the true valuations given in Eqn.~(\ref{eqn:true-buy}), (\ref{eqn:true-sell}), (\ref{eqn:true-vol-b}) and (\ref{eqn:true-vol-s}).

With Eqn.~(\ref{eqn:true-buy}), we have that the price of buy-bid for VM type $m$ at cloud $i$ is $\frac{1}{V}$ of the maximum
weight among all jobs at this cloud demanding type-$m$ VMs. With the winner determination of our double auction mechanism,
$\theta_1^m(t)$ is the maximum price of buy-bids from all clouds for type-$m$ VMs. Hence, the cloud with buy-bid $\theta_1^m(t)$
has the maximum weight among all jobs demanding type-$m$ VMs at all clouds.

According to the definition of $<\acute{i}_m,\acute{s}_m>$ in Eqn.~(\ref{eqn:weight2}), we know that jobs of service type $\acute{s}_m$
at cloud $\acute{i}_m$ has the maximum weight for VM type $m$ over all service types at each cloud demanding for the same VMs.
Hence, cloud $\acute{i}_m$ proposes the maximum buy-bid price $\theta_1^m(t)$, which is $\frac{1}{V}$ of the weight of type-$\acute{s}_m$
jobs at cloud $\acute{i}_m$.

With Eqn.~(\ref{eqn:true-sell}), we have that the price of sell-bid for VM type $m$ at cloud $i$ is the larger one between
i) $\frac{1}{V}$ of the maximum weight among all jobs at this cloud demanding type-$m$ VMs; and ii) the per-VM operational
price $\beta_{i}(t)/C_{i}^m$ at cloud $i$. With the winner determination of our double auction mechanism,
$\vartheta_j^m(t)$ is the $j^{th}$ lowest price of sell-bids from all clouds for type-$m$ VMs.

If the maximum buy-bid price $\theta_1^m(t)$ is larger than the $j^{th}$ lowest sell-bid price $\vartheta_j^m(t)$,
we have that i) with an idealized double auction, cloud $\acute{i}_m$ has a higher buy-bid price than the sell-bid such that it can
buy all VMs of type $m$ from the cloud proposing $\vartheta_j^m(t)$; ii) with the job scheduling decision in Eqn.~(\ref{eqn:schedule3}),
all VMs of type $m$ are allocated for job scheduling at cloud $\acute{i}_m$ for service type $\acute{s}_m$. Hence, the job scheduling
decision in Algorithm \ref{alg:social} is equivalent to the idealized double auction that, the bidder with highest buy-bid price
can buy all VMs from those sellers with a lower sell-bid price. Fig.~\ref{fig:demand-price} gives an illustration for the case. In
Fig.~\ref{fig:demand-price}, the buy-bids are sorted in descending order while the sell-bids are sorted in ascending order. Let
there are $k$ sell-bids with lower price than that of the highest buy-bid $\theta_1^m(t)$. Here, $k$ is the maximum number
of sell-bids, whose prices $\vartheta_j^m(t)$ ($j\in [1,k]$) are lower than that of the highest buy-bid $\theta_1^m(t)$, \emph{i.e.},

\vspace{-4mm}{\small
\begin{align*}
k=\text{arg}\max_{i\in [1, F]}\{\vartheta_j^m(t)< \theta_1^m(t)|\forall j\in [1,i]\}.
\end{align*}}\vspace{-4mm}

\noindent We know that all the $k$ sellers
sell all VMs of type $m$ to the cloud with $\theta_1^m(t)$, \emph{i.e.}, cloud $\acute{i}_m$.

However, the idealized cloud cannot give truthfulness guarantee. With our double auction mechanism, in Fig.~\ref{fig:demand-price},
only the $(j'-1)$ sellers, whose sell-bid prices are no larger than the second highest buy-bid price $\theta_2^m(t)$, will
sell their VMs to cloud $\acute{i}_m$. Here, $j'$ is the maximum number of sellers with sell-bid prices no larger than
$\theta_2^m(t)$, as defined in Eqn.~(\ref{eqn:j'}). Hence, the gap to the maximum of problem (\ref{eqn:varphi1}) by Algorithm
\ref{alg:profit} is determined by the VMs that are not sold to cloud $\acute{i}_m$ by sellers between $j'$ and $k$.

If these VMs of type $m$ are allocated to cloud $\acute{i}_m$ with Algorithm \ref{alg:social}, their utility gain in problem (\ref{eqn:varphi1})
is that

\vspace{-4mm}{\small
\begin{align*}
&\sum_{j=[j',k]}[\frac{Q_{\acute{i}_m}^{\acute{s}_m}(t)+Z_{\acute{i}_m}^{\acute{s}_m}(t)}{g_{\acute{s}_m}}-V\beta_j /C_j^{m}]\cdot C_j^{m} N_j^{m}\\
=&\sum_{j=[j',k]}[\theta_1^m(t)-\beta_j /C_j^{m}]\cdot V \cdot L_j^m(t).
\end{align*}}\vspace{-4mm}

\noindent Here, the server provisioning decisions, \emph{i.e.}, $n_i^m(t)$, are replaced by job scheduling decisions, \emph{i.e.},
$\mu_{ij}^{s}(t)$, according to Eqn.~(\ref{eqn:server-provision2}) in Algorithm \ref{alg:social}.

If these VMs of type $m$ of sellers $j\in [j',k]$ are not traded to cloud $\acute{i}_m$ based on our double auction mechanism, these VMs are either
scheduled to serve the jobs with maximum weight at its own cloud, if the maximum weight is higher than the per-VM operational
price, or inactivated otherwise, according to the job scheduling decision in Eqn.~(\ref{eqn:schedule1}) of Algorithm \ref{alg:profit}.
Hence, the utility gain of these VMs for problem (\ref{eqn:varphi1}) with Algorithm \ref{alg:profit} is that

\vspace{-4mm}{\small
\begin{align*}
\sum_{j=[j',k]}[\vartheta_j^m(t)-\beta_j /C_j^{m}]\cdot V \cdot L_j^m(t).
\end{align*}}\vspace{-4mm}

So, the gap to the maximum of problem (\ref{eqn:varphi1}) by Algorithm \ref{alg:profit} is the difference between the
above utility gains, as follows,

\vspace{-4mm}{\small
\begin{align}
V \cdot \sum_{j=[j',k]}[\theta_1^m(t)-\vartheta_j^m(t)]\cdot L_j^m(t),\label{eqn:gap}
\end{align}}\vspace{-4mm}

\noindent which is equivalent to the size of the shadow area in Fig.~\ref{fig:demand-price}, multiplied by $V$.

Since $\vartheta_j^m(t) > \theta_2^m(t)$ for each $j\in [j'+1, k]$, we can give an upperbound to the gap as follows,

\vspace{-4mm}{\small
\begin{align}
&V \cdot \sum_{j=[j',k]}[\theta_1^m(t)-\vartheta_j^m(t)]\cdot L_j^m(t),\notag\\
=&V \cdot [[\theta_1^m(t)-\vartheta_{j'}^m(t)]\cdot L_{j'}^m(t) + \sum_{j=[j'+1,k]}[\theta_1^m(t)-\vartheta_j^m(t)]\cdot L_j^m(t)],\notag\\
\leq& V \cdot [[\theta_1^m(t)-\theta_2^m(t)+\vartheta_{j'+1}^m(t)-\vartheta_{j'}^m(t)]\cdot L_{j'}^m(t)\notag \\
    &+ \sum_{j=[j'+1,k]}[\theta_1^m(t)-\theta_2^m(t)]\cdot L_j^m(t)]\notag\\
=&V \cdot [[\vartheta_{j'+1}^m(t)-\vartheta_{j'}^m(t)]\cdot L_{j'}^m(t) + \sum_{j=[j',k]}[\theta_1^m(t)-\theta_2^m(t)]\cdot L_j^m(t)].\label{eqn:gap2}
\end{align}}\vspace{-4mm}

Since the system is homogenous with the same number of servers for each VM type $m$ at each cloud, the value of $L_{j}^m(t)$ at
each cloud $j\in [1, F]$ is also the same. We use $L^m$ to denote that value. Hence, the gap in Eqn.~(\ref{eqn:gap2}) can be
rewritten as

\vspace{-4mm}{\small
\begin{align}
V \cdot L^m [[\vartheta_{j'+1}^m(t)-\vartheta_{j'}^m(t)]+ [k-j'][\theta_1^m(t)-\theta_2^m(t)]].\label{eqn:gap3}
\end{align}}\vspace{-4mm}

\vspace{-4mm}
\begin{figure}[H]
  \centering
  \includegraphics[width=0.8\columnwidth]{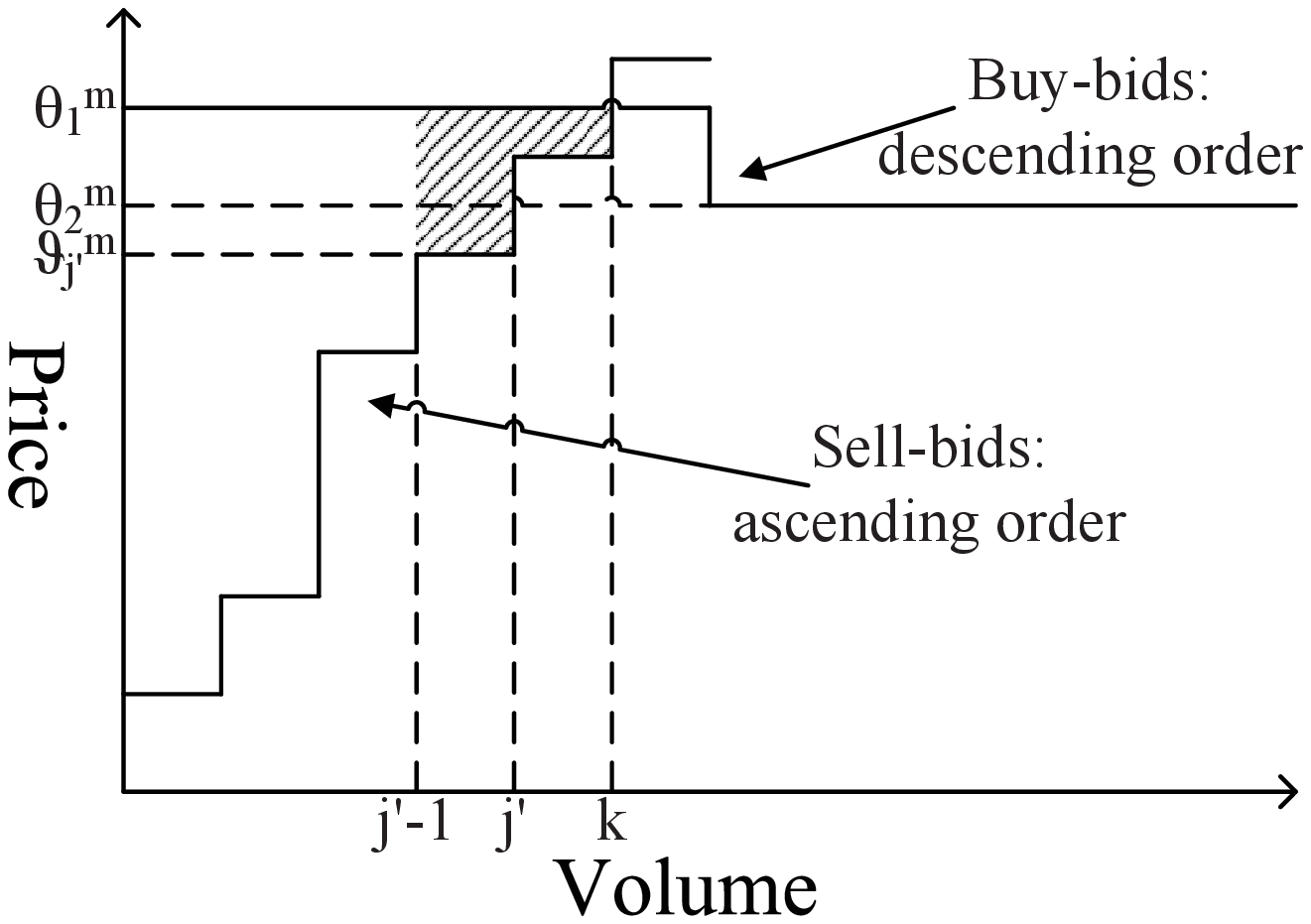}\\\vspace{-4mm}
  \caption{Illustration of the auction.}\label{fig:demand-price}\vspace{-4mm}
\end{figure}

We next show that, when the number of clouds scales to infinity, the gap in Eqn.~(\ref{eqn:gap3}) is infinitely close to zero,
by analyzing the distribution of bidding prices based on the cloud number $F$.

\noindent -- \emph{Distribution of bidding prices}:

The distribution of the bidding prices should be analyzed in order to find an analytical result between the size of the
gap in Eqn.~(\ref{eqn:gap3}) and the number of clouds $F$.

Let $\Theta_i(t)$ be the status of cloud $i$ at time slot $t$. We see that the status $\Theta_i(t)$ is a Markov chain with Algorithm \ref{alg:profit}.
Define $\mathcal{S}=\{\rho:\text{Pr}(\Theta_i(t)=\rho|\Theta_i(0)=0) \text{ for some }$t$\}$, then
$\Theta_i(t)$ is an irreducible Markov chain on state space $\mathcal{S}$ with $\Theta_i(0)=0$. This claim is true because
i) any state in $\mathcal{S}$ is reachable from $0$ and ii) since $\text{Pr}(r_i^s(t)=0)>0,\ \forall s\in [1, S]$, the Markov
chain can move from $\Theta_i(t)$ to $0$ in finite time with a positive probability. ($Q_i^s(t)$ can be cleared by job scheduling or dropping while
virtual queue $Z_i^s(t)$ can also be cleared after $Q_i^s(t)$ is zero for a constant time, according to the queueing laws
Eqn.~(\ref{eqn:queue1}) and (\ref{eqn:queue2})) Based on the same reason as above, the state $0$ is an aperiodic state.
With Lemma \ref{lemma:bounded-queue} we know that $Q_i^s(t)$ and $Z_i^s(t)$ have finite upperbounds, we can then have that
the state space $\mathcal{S}$ is also finite. In conclusion, the Markov chain is irreducible with an aperiodic state and
finite state space. Hence, the Markov chain is ergodic.

We know that the prices of buy-bid and sell-bid for each VM type $m$ at cloud $i$ are calculated with its current status $\Theta_i(t)$,
as well as the current operational price $\beta_i(t)$ (only for sell-bid), according Alg.~\ref{alg:profit}. Since the cloud's
status follows an ergodic process and the operational price is also ergodic according to our problem model, we know that
the prices of buy-bid and sell-bid are also ergodic. Recall that we are proving the asymptotic optimality in social welfare
by Algorithm \ref{alg:profit} under homogenous system settings. Hence, we can have that the prices of buy-bids and sell-bids
at different clouds follow the same ergodic process.

Let the price of buy-bid for VM type $m$ follow a distribution $\mathcal{E}$
with continuous density $e$ on the compact interval $[0, \bar{b}]$. Here, $\bar{b}=\frac{Q_i^{\bar{s}(max)}+Z_i^{\bar{s}(max)}}{V\cdot g_{\bar{s}}}$,
where $\bar{s}$ is the service type with maximum value of $\frac{Q_i^{s(max)}+Z_i^{s(max)}}{V\cdot g_s}$ among all types in $[1, S]$ with $s_m=m$.
Let the price of sell-bid for VM type $m$ follow a distribution $\mathcal{H}$
with continuous density $h$ on the compact interval $[\b{b}, \bar{b}]$. Here, $\b{b}=\beta_i^{(min)}/C_i^{m}$. Denote the maximum and minimum of
$e$ and $h$ as follows,

\vspace{-4mm}{\small
\begin{align*}
&e^{min}=\min_{x\in [0, \bar{b}]} e(x)>0,\ e^{max}=\max_{x\in [0, \bar{b}]} e(x)>0,\\
&h^{min}=\min_{x\in [\b{b}, \bar{b}]} h(x)>0,\ h^{max}=\max_{x\in [\b{b}, \bar{b}]} h(x)>0.
\end{align*}}\vspace{-4mm}

Then, we can have the following lemma according to Lemma 1 in \cite{huang-CI02},


\begin{lemma}\label{lemma:gap-bound}

\vspace{-0mm}{\small
\begin{align*}
\frac{1}{e^{max}(F+1)}\leq & \mathbb{E}\{\theta_j-\theta_{j+1}\} \leq \frac{1}{e^{min}(F+1)},\ \forall j\in [1, F-1],\\
\frac{1}{h^{max}(F+1)}\leq & \mathbb{E}\{\vartheta_{j+1}-\vartheta_{j}\} \leq \frac{1}{h^{min}(F+1)},\ \forall j\in [1, F-1].
\end{align*}}\vspace{-4mm}

\noindent Here, $\mathbb{E}\{\cdot\}$ denotes the expectation.

\end{lemma}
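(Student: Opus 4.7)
The plan is to establish the two-sided bounds via a change of variables to uniform order statistics, which have a well-known exact formula for expected consecutive spacings. Let $E(b) = \int_0^b e(x)\,dx$ be the CDF of the buy-bid distribution $\mathcal{E}$, and $H(s) = \int_{\b{b}}^s h(x)\,dx$ be the CDF of $\mathcal{H}$. Since the bid prices are independent and identically distributed across the $F$ clouds (this follows from the homogeneity assumed in Theorem \ref{theorem:welfare}, together with the ergodicity of $\Theta_i(t)$ that makes the per-cloud bid law common), applying $E$ to the $F$ buy-bids yields i.i.d.\ samples $U_1,\ldots,U_F$ uniform on $[0,1]$, and similarly $H$ yields uniform samples from the sell-bids.

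The central building block is the classical identity for uniform spacings: if $U_{(1)}\leq U_{(2)}\leq\cdots\leq U_{(F)}$ are the ascending order statistics of $F$ i.i.d.\ $\mathrm{Uniform}[0,1]$ variables, then $\mathbb{E}\{U_{(k+1)}-U_{(k)}\} = \frac{1}{F+1}$ for every $k\in[1,F-1]$. I would prove this by computing the marginal density of the spacing (it is $\mathrm{Beta}(1,F)$), or more elegantly by the symmetry argument that the $F+1$ gaps $U_{(1)}, U_{(2)}-U_{(1)}, \ldots, 1-U_{(F)}$ are exchangeable and sum to $1$, so each has mean $\frac{1}{F+1}$.

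Next I would translate the uniform spacings back to the original scale. Because $E$ is strictly increasing, $\theta_j = E^{-1}(U_{(F-j+1)})$ (the $j$-th largest buy-bid corresponds to the $(F-j+1)$-th smallest uniform). By the mean value theorem applied to $E^{-1}$, there exists $\xi_j\in[U_{(F-j)},U_{(F-j+1)}]$ such that
\[
\theta_j-\theta_{j+1} \;=\; E^{-1}(U_{(F-j+1)}) - E^{-1}(U_{(F-j)}) \;=\; \frac{U_{(F-j+1)}-U_{(F-j)}}{e(E^{-1}(\xi_j))}.
\]
Because $e^{min}\leq e(\cdot)\leq e^{max}$ on the support, the reciprocal $1/e(\cdot)$ is pinned between $1/e^{max}$ and $1/e^{min}$ pointwise, so
\[
\frac{U_{(F-j+1)}-U_{(F-j)}}{e^{max}} \;\leq\; \theta_j-\theta_{j+1} \;\leq\; \frac{U_{(F-j+1)}-U_{(F-j)}}{e^{min}}.
\]
Taking expectations and substituting the uniform-spacing identity yields the claimed bound $\frac{1}{e^{max}(F+1)}\leq\mathbb{E}\{\theta_j-\theta_{j+1}\}\leq\frac{1}{e^{min}(F+1)}$. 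The sell-bid half is completely symmetric: $\vartheta_{j+1}-\vartheta_{j}$ maps to an ascending uniform spacing via $H$, and the density bounds $h^{min}\leq h\leq h^{max}$ deliver the second inequality by the identical argument.

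The main obstacle is ensuring the clouds' bid prices are genuinely i.i.d.\ so that the uniform-spacing formula applies directly. Under the homogeneous setting of Theorem \ref{theorem:welfare} (same $N_i^m$, $C_i^m$, and i.i.d.\ arrivals/prices/costs), the queue dynamics driven by Alg.~\ref{alg:profit} induce identically distributed bid processes across clouds, and statistical independence between clouds follows from the independence of their driving inputs once the induced stationary distribution is reached; the ergodicity argument already given before the lemma guarantees this stationary regime is the relevant one. A minor technical point to handle cleanly is that $\xi_j$ depends on the random $U_{(F-j)},U_{(F-j+1)}$, but the pointwise density bounds make this immaterial since the inequalities on $1/e(\cdot)$ hold uniformly before taking expectations.
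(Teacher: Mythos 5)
The paper does not give its own proof of this lemma; it simply asserts the bounds by citing Lemma~1 of \cite{huang-CI02}. Your argument supplies the missing derivation in full, and it is correct: the CDF change of variables sends the $F$ i.i.d.\ bids to $\mathrm{Uniform}[0,1]$ order statistics, the exchangeability-of-gaps argument gives $\mathbb{E}\{U_{(k+1)}-U_{(k)}\}=\tfrac{1}{F+1}$, and the mean value theorem for $E^{-1}$ (resp.\ $H^{-1}$) combined with the pointwise density bounds $e^{min}\leq e\leq e^{max}$ (resp.\ $h^{min}\leq h\leq h^{max}$) turns the uniform spacing into the two-sided bound in the statement; taking expectations finishes it. Your handling of the random mean-value point $\xi_j$ is also fine, since the density bounds are uniform and so commute with the expectation. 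This is almost certainly the same proof that the cited reference contains, so in substance you are reconstructing what the paper delegates rather than taking a different route.

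One thing to be careful about, and you do flag it but then dispose of it too quickly: the uniform-spacing identity requires the $F$ bid prices to be genuinely independent, not merely identically distributed. Your justification --- that ``statistical independence between clouds follows from the independence of their driving inputs once the induced stationary distribution is reached'' --- does not actually establish this, because the double auction couples the clouds' queue dynamics: cloud $i$'s queue update (\ref{eqn:queue1}) depends on $\mu_{ij}^s(t)$, which is determined by the auction outcome, which in turn depends on \emph{all} clouds' bids. Independent exogenous inputs to a coupled stochastic system do not in general yield independent marginal states even in stationarity. This is a gap the paper also leaves untreated (it invokes the external lemma as if the bids were an i.i.d.\ sample), so it is not a defect you introduced, but your phrasing presents it as settled when it is really an additional modeling assumption that would need to be stated explicitly.
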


\noindent -- \emph{Asymptotic Optimality in Minimizing the RHS of Eqn.~(\ref{eqn:drift-plus-penalty2}) with Algorithm \ref{alg:profit}}:

With Lemma \ref{lemma:gap-bound}, we can further bound the gap in Eqn.~(\ref{eqn:gap3}) to the maximum of problem (\ref{eqn:varphi1}) by
Algorithm \ref{alg:profit} as follows,

\vspace{-4mm}{\small
\begin{align*}
&V \cdot L^m [[\vartheta_{j'+1}^m(t)-\vartheta_{j'}^m(t)]+ [k-j'][\theta_1^m(t)-\theta_2^m(t)]]\\
\leq &V \cdot L^m [\frac{1}{h^{min}(F+1)}+ \frac{k-j'}{e^{min}(F+1)}]
\end{align*}}\vspace{-4mm}

Since each seller between $j'$ and $k$ has a sell-bid price between $\theta_1^m(t)$ and $\theta_2^m(t)$, the sell-bid prices
of all these sellers resides in an interval which has an expectation no larger than $\frac{1}{e^{min}(F+1)}$. On the other hand, since the expected interval between each sell-bid is no smaller than $\frac{1}{h^{max}(F+1)}$, we know that the expected interval between
the $j'$ seller and the $k$ seller should be no smaller than $\frac{k-j'}{h^{max}(F+1)}$, which should be still no larger than $\frac{1}{e^{min}(F+1)}$.
Hence, we have that

\vspace{-4mm}{\small
\begin{align*}
&\frac{k-j'}{h^{max}(F+1)}\leq \frac{1}{e^{min}(F+1)}\\
\Rightarrow & k-j' \leq \frac{h^{max}}{e^{min}}.
\end{align*}}\vspace{-4mm}

Finally, we can bound the gap to the maximum of problem (\ref{eqn:varphi1}) by
Algorithm \ref{alg:profit} as follows,

\vspace{-4mm}{\small
\begin{align*}
V \cdot L^m [\frac{1}{h^{min}(F+1)}+ \frac{h^{max}/e^{min}}{e^{min}(F+1)}].
\end{align*}}\vspace{-4mm}

It is clear that this gap is infinitely close to zero when the number of cloud in the federation $F$ grows to infinity.

\noindent -- \emph{Asymptotic Optimality in Social Welfare of Algorithm \ref{alg:profit}}:

When the number of clouds in the federation scales to infinite large, \emph{i.e.}, $F\rightarrow \infty$, the RHS of Eqn.~(\ref{eqn:drift-plus-penalty2}) is minimized in each time slot $t$ with our Algorithm \ref{alg:profit} and our double auction mechanism. Thus, following the same steps, as above, to prove the social welfare optimality of Algorithm \ref{alg:social}, which also minimizes the RHS of Eqn.~(\ref{eqn:drift-plus-penalty2}) in each time slot $t$, we can prove that the time-averaged social welfare achieved with our Algorithm \ref{alg:profit} is within a constant gap $B/V$ from the offline optimum $\Pi^*$, when $F\rightarrow \infty$, \emph{i.e.},

\vspace{-4mm}{\small
\begin{align*}
\Pi \geq \Pi^* - B/V,
\end{align*}}\vspace{-6mm}

\end{appendices} 
}

\end{document}